\documentclass[aps,prx,twocolumn,superscriptaddress,showpacs,amsfonts,amsmath,floatfix,10pt]{revtex4-2}

\usepackage[table,xcdraw,dvipsnames]{xcolor}
\usepackage{amsmath, amsfonts, amssymb, amsthm, bbm}
\usepackage[normalem]{ulem}
\usepackage{dsfont}
\usepackage[T1]{fontenc}
\usepackage{enumitem}
\usepackage{xparse}
\usepackage{bm}
\usepackage{mathtools}
\usepackage{graphicx}
\usepackage{dcolumn}
\usepackage{bm}
\usepackage{times}
\usepackage{physics}
\usepackage{graphicx}
\usepackage{booktabs}
\usepackage{tikz}
\usetikzlibrary{quantikz2}
\usetikzlibrary{svg.path} 
\usepackage{xtab}
\usepackage{listings}
\usepackage{booktabs}
\usepackage{tabularx}
\newcolumntype{Y}{>{\centering\arraybackslash}X}

\usepackage{multirow}
\usepackage{colortbl}
\usepackage{pifont}
\usepackage{makecell}
\newtheorem{defi}{Definition}

\newtheorem{lemma}{Lemma}
\newtheorem{corollary}{Corollary}
\newtheorem{theorem}{Theorem}

\definecolor{myorange}{HTML}{FFC13A}
\definecolor{myblue}{HTML}{0C49AB}
\definecolor{mypurple}{HTML}{9851C5}
\definecolor{myteal}{HTML}{56DBD7}
\definecolor{mygreen}{HTML}{46C353}
\definecolor{mylightblue}{HTML}{80B0FF}
\definecolor{mylightpurple}{HTML}{D188FF}
\definecolor{mylightteal}{HTML}{85FFFB}
\definecolor{mylightgreen}{HTML}{AFFFB7}

\definecolor{orcidlogocol}{HTML}{A6CE39}
\tikzset{
  orcidlogo/.pic={
    \fill[orcidlogocol] svg{M256,128c0,70.7-57.3,128-128,128C57.3,256,0,198.7,0,128C0,57.3,57.3,0,128,0C198.7,0,256,57.3,256,128z};
    \fill[white] svg{M86.3,186.2H70.9V79.1h15.4v48.4V186.2z}
                 svg{M108.9,79.1h41.6c39.6,0,57,28.3,57,53.6c0,27.5-21.5,53.6-56.8,53.6h-41.8V79.1z M124.3,172.4h24.5c34.9,0,42.9-26.5,42.9-39.7c0-21.5-13.7-39.7-43.7-39.7h-23.7V172.4z}
                 svg{M88.7,56.8c0,5.5-4.5,10.1-10.1,10.1c-5.6,0-10.1-4.6-10.1-10.1c0-5.6,4.5-10.1,10.1-10.1C84.2,46.7,88.7,51.3,88.7,56.8z};
  }
}
\makeatletter
\newcommand{\@OrigHeightRecip}{0.00390625}

\newlength{\@curXheight}

\newcommand{\@preventExternalization}{%
\ifcsname tikz@library@external@loaded\endcsname%
\tikzset{external/export next=false}\else\fi%
}

\newcommand{\orcidlogo}{%
\texorpdfstring{%
\setlength{\@curXheight}{\fontcharht\font`X}%
\XeTeXLinkBox{%
\@preventExternalization%
\begin{tikzpicture}[yscale=-\@OrigHeightRecip*\@curXheight,
xscale=\@OrigHeightRecip*\@curXheight,transform shape]
\pic{orcidlogo};
\end{tikzpicture}%
}}{}}

\DeclareRobustCommand\orcidlinkX[3]{\href{https://orcid.org/#2}{%
\ifstrempty{#1}{}{#1\,}\orcidlogo\ifstrempty{#3}{}{\,#3}}}

\newcommand{\orcidlink}[1]{\ifpdf\orcidlinkX{}{#1}{}\fi}

\makeatother

\usepackage{xurl}
\usepackage{hyperref}
\hypersetup{breaklinks=true}

\usetikzlibrary{decorations.pathreplacing}


\NewDocumentCommand{\GateCV}{s m o}{%
  \ensuremath{\IfBooleanTF{#1}{\mediumwidehat{#2}}{\hat{#2}}\IfNoValueTF{#3}{}{\!\left(#3\right)}}%
}

\NewDocumentCommand{\GateDV}{s m o O{d}}{%
  \ensuremath{\IfBooleanTF{#1}{\mediumwidehat{#2}}{\hat{#2}}_{#4}\IfNoValueTF{#3}{}{\!\left(#3\right)}}%
}

\NewDocumentCommand{\PhaseGate}{s o O{d}}{%
  \IfBooleanTF{#1}{\GateCV{P}[#2]}{\GateDV{P}[#2][#3]}%
}
\NewDocumentCommand{\FourierTransform}{s o O{d}}{%
  \IfBooleanTF{#1}{\GateCV{F}[#2]}{\GateDV{F}[#2][#3]}%
}
\NewDocumentCommand{\SumGate}{s o O{d}}{%
  \IfBooleanTF{#1}{\GateCV{\text{\small SUM}}[#2]^{(k,l)}}{\GateDV{\mathrm{SUM}}[#2][#3]^{(k,l)}}%
}
\NewDocumentCommand{\CZGate}{s o O{d}}{%
  \IfBooleanTF{#1}{\GateCV*{\text{\small CZ}}[#2]_{kl}}{\GateDV*{\text{\small CZ}}[#2][#3]^{(k,l)}}%
}
\NewDocumentCommand{\BSGate}{s o O{d}}{%
  \IfBooleanTF{#1}{\GateCV*{\text{\small BS}}[#2]_{kl}}{\GateDV*{\text{\small BS}}[#2][#3]^{(k,l)}}%
}

\newcommand{\mediumwidehat}[1]{%
  \mkern3mu\widehat{\mkern-3mu #1 \mkern-3mu}\mkern3mu%
}


\begin{document}

\title{Equivalence of continuous- and discrete-variable gate-based quantum computers with finite energy}

\author{Alex Maltesson \orcidlink{0009-0007-3795-2100}}
\email{maltesson.alex@gmail.com}
\affiliation{Department of Microtechnology and Nanoscience (MC2), Chalmers University of Technology, SE-412 96 G\"{o}teborg, Sweden}
\author{Ludvig Rodung \orcidlink{0009-0006-0038-3285}}
\email{ludvig@rodung.se}
\thanks{Ludvig Rodung and Alex Maltesson contributed equally to this work}
\affiliation{Department of Microtechnology and Nanoscience (MC2), Chalmers University of Technology, SE-412 96 G\"{o}teborg, Sweden}
\author{Niklas Budinger \orcidlink{0009-0002-2189-0615}}
\affiliation{Center for Macroscopic Quantum States (bigQ), Department of Physics, Technical University of Denmark, 2800 Kongens Lyngby, Denmark}
\affiliation{Johannes-Gutenberg University of Mainz, Institute of Physics, Staudingerweg 7, 55128 Mainz, Germany}
\author{Giulia Ferrini \orcidlink{0000-0002-7130-6723}}
\affiliation{Department of Microtechnology and Nanoscience (MC2), Chalmers University of Technology, SE-412 96 G\"{o}teborg, Sweden}
\author{Cameron Calcluth \orcidlink{0000-0001-7654-9356}}
\email{calcluth@gmail.com}
\affiliation{Department of Microtechnology and Nanoscience (MC2), Chalmers University of Technology, SE-412 96 G\"{o}teborg, Sweden}
\affiliation{Mathematical Institute, University of Oxford, Woodstock Road, Oxford OX2 6GG, United Kingdom}

\begin{abstract}
Continuous systems are studied in many branches of modern physics, such as high-energy physics, cosmology, condensed matter physics, quantum chemistry, and field theories. Such systems are expected to benefit from the substantial advantages in computational power of quantum computers. The continuous-variable paradigm of quantum computation provides the most natural computational formalism for these tasks. However, most existing quantum hardware is based on discrete-variable systems. We address this fundamental discrepancy by providing a rigorous framework for translating native continuous-variable algorithms onto qubit-based quantum processors. This mapping is constructed from a gate-based model of continuous-variable quantum computers, consisting of states and operations built from a polynomial sequence of elementary gates in a finite set, with total energy polynomial in the number of modes. We prove that, under realistic constraints, a gate-based model of continuous-variable quantum computers can be efficiently simulated using discrete-variable devices, thereby establishing a computational equivalence between these paradigms.
\end{abstract}

\maketitle
\section{Introduction}
The simulation of continuous systems is central to many branches of modern physics, including high-energy physics~\cite{PhysRevA.92.063825,jordan2012}, cosmology~\cite{dolag2008,opanchuk2012,mocz2021}, condensed matter physics~\cite{Lamata01012018,hofstetter2018}, and quantum chemistry~\cite{Childs2022quantumsimulationof}, as well as lattice gauge theories~\cite{banuls2020,zohar2021} and scalar field theories~\cite{PhysRevA.109.052412}. Quantum computers are anticipated to offer significant computational advantages over classical computers for performing many of these simulations~\cite{lloyd1996,RevModPhys.86.153}. 
The continuous-variable (CV) paradigm of quantum computation (QC) offers a natural host for simulating continuous systems. Computation in this model relies on quantum systems with infinitely many levels, such as quantized bosonic fields, where associated quantum gates operate continuously on the states.

However, most physical quantum computing devices realized to date are based on the discrete-variable (DV) paradigm of quantum computation~\cite{preskill2018,tacchino2020,yang2023}. In this approach, computation is performed using finite-dimensional quantum systems, such as quantum bits (or qubits) or higher-dimensional equivalent, $d$-level systems called quantum-dits (or qudits), and discrete quantum gates are applied.

In light of this, a natural question that arises is: how efficiently can a continuous-variable algorithm be implemented on a discrete-variable device?  

While in Ref.~\cite{lloyd1999}, it was hypothesized that CV quantum computers are no more powerful than qubit-based devices in the computational-theoretic sense, which implies that a CV algorithm could be efficiently implemented on a DV device. However, no formal proof was given, nor was there an explanation of how one would achieve this in practice. Consequently, it remains an open question whether discrete-variable devices can efficiently simulate continuous-variable quantum computation.
\begin{figure}[b]
    \centering
    \hspace*{-0.4cm}
    \includegraphics[]{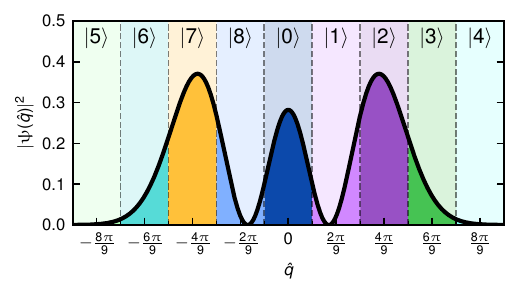}
    \vspace*{-0.4cm}
    \caption{The figure shows a visualization of how the SSD maps a two-photon Fock state to a nine-dimensional qudit. The Fock state is represented in terms of its probability density evaluated in the position basis. Each section between the grid lines corresponds to a specific level of the qudit. The areas of the colored in sections under the graph are equal to the terms in the probability distribution of the qudit, when the SSD has been applied.}
    \label{fig:SSD}
\end{figure}
\begin{figure*}[ht]
    \centering
        \input{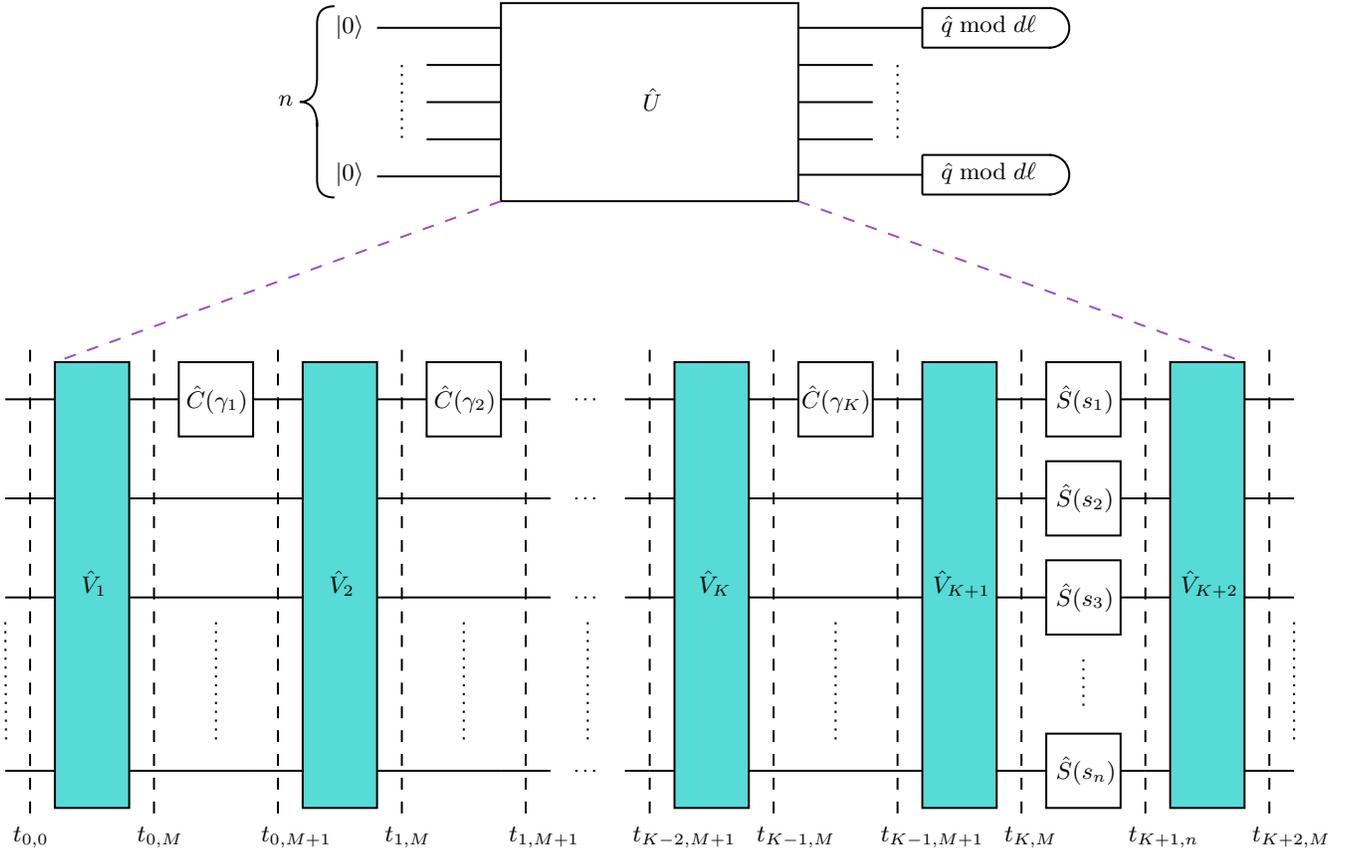}
    \caption{In this Figure, we show the circuit diagram consisting of a sequence of operations  $\hat{U}$ representing the set $\mathcal U_E$. In the top part of the figure, the diagram depicts application of the operation $\hat{U}$ on $n$ number of vacuum input states, and measurements performed in position basis modulo $d\ell$. The lower section of the figure shows the decomposition of $\hat{U}$ in terms of Eq.~\eqref{eq: RCVQC gate decomp}, where we interlace $K$ passive operations and displacements $\hat V$ with $K$ cubic phase gates $\hat{C}(\gamma)$, which is then followed by two additional $\hat V$ operations sandwiching a squeezing operation $\hat S(s)$ on each mode. In Fig.~\ref{fig:Universal_circuit_b}, we show the decomposition of the colored gates in terms of Mach-Zehnder interferometers, single-mode rotations, and displacements. The dashed black vertical lines represent time steps. We label the time steps using the notation $t_{i,j}$ where $i$ refers to the round of interlaced Gaussian and cubic phase gates, while $j$ refers to the number of gates applied during the round. Here we denote $M=n(n-1)/2+2n$.}
    \label{fig:Universal_circuit_a}
\end{figure*}
In this work, we provide an answer to this long-standing open question by providing a framework for converting a finite energy CV algorithm constructed from a polynomial number of elementary CV gates onto a DV quantum computer. We show that the approximation is efficient and has a level of precision that depends on the energy scale of the CV system. This result simultaneously proves the computational equivalence of the CV model under consideration and the DV paradigm. To do this, we utilize a tool called the stabilizer subsystem decomposition (SSD)~\cite{Shaw2024}, which allows us to map any CV state onto a DV state. As an illustrative example, Fig.~\ref{fig:SSD} shows a two-photon Fock state mapped onto a nine-dimensional qudit. Naturally, the higher the dimension of the qudit state, the higher the resolution of the mapping.

The set of operations we consider, constructed from a polynomial number of Gaussian and cubic phase gates, is given by the general class of circuits shown in Fig.~\ref{fig:Universal_circuit_a} together with Fig.~\ref{fig:Universal_circuit_b}. This set was presented as ``universal'' in Ref.~\cite{lloyd1999}, in the sense that it could be used to generate all polynomial Hamiltonians. Although a rigorous proof of this claim and a formal analysis of the error bounds in infinite dimensions remains an open problem \cite{arzani2025,chabaudprivate}, we nonetheless use this well-established and experimentally relevant class of circuits to define the model that is the central focus of this work. We refer to this model as \textit{gate-based} CVQC~\footnote{Although Ref.~\cite{lloyd1999} focuses on the set constructed from Gaussian operations in combination with access to the Kerr gate, the respective set with access to the cubic phase gate is also introduced in their work.}.

\begin{figure*}[t]
    \centering
        \input{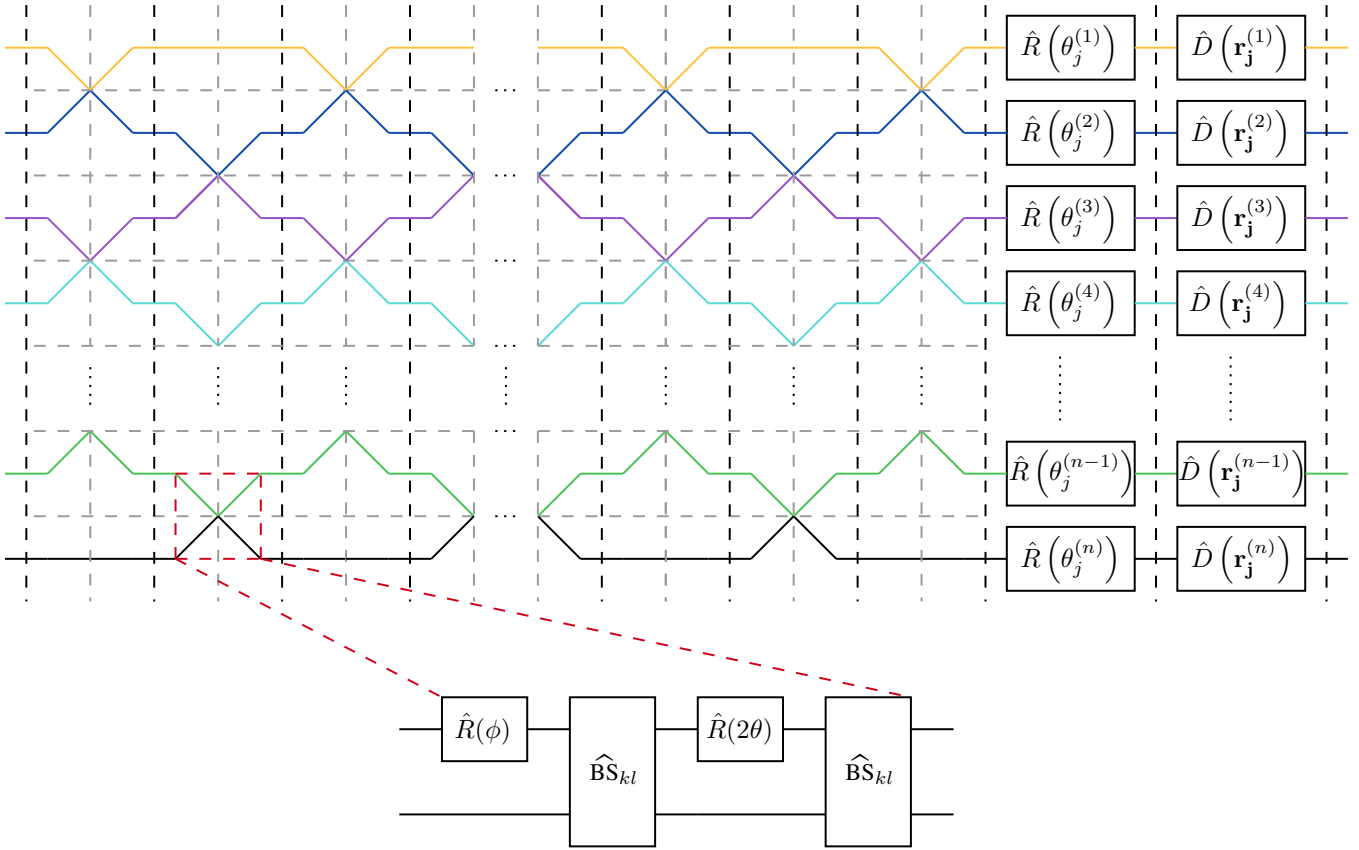}
    \caption{In this figure, we show the decomposition of an arbitrary colored gate with index $j$ from Fig.~\ref{fig:Universal_circuit_a}. This decomposition is in terms of a Mach-Zehnder interferometer network according to the design used in Ref.~\cite{clements2016}, where a total of $n(n-1)/2$ Mach-Zehnder interferometers are used with a circuit depth of $n$, which is then followed by rotations and displacements on each mode. The crossings of the wires correspond to the Mach-Zehnder interferometer from Eq.~\eqref{eq:Mach-Zehnder}, whose circuit representation in terms of two rotations and two 50:50 beam splitters is shown at the bottom part of the figure. Each combination of color and wire crossings for the Mach-Zehnder interferometer at different intersections of the grid created by gray dashed lines represents a potentially unique set of angles that parameterize the interferometer.}
    \label{fig:Universal_circuit_b}
\end{figure*}
By utilizing the SSD as a map between gate-based CVQC and DVQC, we show that the action of this generating set of CV gates can be approximated by the action of a specific set of discrete gates for qudits that have sufficiently many dimensions. By carefully analyzing the error made by the approximation, we show that it is bounded in terms of the energy of the CV system and the dimension of the corresponding qudit. Our method, therefore, provides a natural decomposition of the states, gates, and measurements from CV to qudits. Previous results have demonstrated mappings of states, operations, and measurements from CV to DV~\cite{PhysRevLett.128.110503,descamps2024,Descamps_2026,arzani2025}. Recently, Ref.~\cite{arzani2025} demonstrated for the first time a systematic way to map finite-energy CV unitary operations constructed from arbitrary polynomials to DV in a way that bounds the error of the approximation in terms of the energy of the system. However, our results focus on the gate-based CV architecture, for which we are able to bound the error in terms of the total energy of the system and the number of elementary gates applied.

We note that proving the ability to simulate CV circuits with DV circuits is sufficient to prove the equivalence of the two models. 
The CV set of gates enables the ability to engineer ``bosonic codes'' \cite{cochrane1999,gottesman2001,grimsmo2020,Brady_2024}---which encode logical DV information into CV systems---and to perform arbitrary qubit gates on them, effectively retrieving qubit quantum computation~\cite{blair2025faulttolerantquantumcomputationuniversal}. This demonstrates that discrete systems can be efficiently simulated using CV devices. Our results, therefore, complete the proof of equivalence.

A major consequence of our results is that the classical algorithms used for the simulation of DV quantum computing~\cite{vidal2003,aaronson2004,van-den-nest2010,ohliger2012,mari2012,veitch2012,pashayan2015,bravyi2016,bennink2017, bu2019, seddon2021, Marshall:23,  hahn2024b,PhysRevA.99.062337} can be applied for the simulation of CV quantum computers. This significantly advances existing efforts to simulate CV systems, which are often intractable unless the system is all Gaussian~\cite{ferraro2005,serafini2017,mari2012,veitch2013,pashayan2015} or contains inherent symmetries, such as those seen in the Gottesman-Kitaev-Preskill (GKP) encoding~\cite{garcia-alvarez2019,calcluth2022,calcluth2023,Calcluth2025}. We outline this procedure for the simulation of CV systems, which consists of using the SSD, identifying a correct dimensionality of a DV system depending on the energy of the state and operations involved, and then using the simulation algorithms for the corresponding qudit system. This result is of independent interest, given the potential and timeliness of CV systems for quantum computation and quantum error correction, and the compelling need for classical algorithms to tackle their simulation~\cite{mari2012, veitch2013, rahimi-keshari2016, liu2023simulating, oh2023tensor, PhysRevResearch.3.033018,PhysRevLett.130.090602, bourassa2021fast, hahn2024c,dias2024classical, Calcluth2025}.

We now give an outline of the remaining Sections of this paper. In Sec.~\ref{sec:main-results}, we provide an overview of the main results presented in this work, including introducing the main Theorem demonstrating an equivalence between CV and DV QCs. In Sec.~\ref{sec:background}, we provide an overview of the background of our work along with previously known results. In Sec.~\ref{sec:realistic-CVQC}, we introduce how to calculate the errors of approximating CV operations with DV operations. We also define various CV models, including those with additional restrictions, which we use to bridge the gap between CV and DV quantum computation. In Sec.~\ref{sec:equivalence}, we combine the results obtained in the previous Section that lead to the main Theorem of this work. Finally, in Sec.~\ref{sec:conclusion}, we offer some conclusions and suggest some implications of our results.

\section{Main results}
\label{sec:main-results}
The titular result of this work is to demonstrate that under realistic conditions, a gate-based CVQC can be simulated using a qudit-based quantum computer in polynomial time, with arbitrarily small error, which is bounded in terms of the dimension of the encoding qudit and the total energy of the system. Importantly, we also provide a method to perform this simulation.
These two complementary results are important from both a fundamental and a practical point of view. From a fundamental perspective, we formally demonstrate that CVQCs cannot outperform DVQCs in a way that would provide an exponential speed-up. At the same time, practically, we provide a framework to translate any CV algorithm to work on any qudit- or qubit-based hardware.

In order to achieve this, we define realistic CVQC (RCVQC) through three simple and physically motivated assumptions of CVQCs. First, we restrict the circuit such that the energy of the system is bounded throughout the evolution of the circuit. This restriction is physically motivated by the fact that, in practice, all bosonic platforms enforce a limit on the expected photon number.  Formally, we describe the energy of a multimode CV system using the Hamiltonian
\begin{align}
    \hat H=\sum_{j=1}^n\left(\hat n_j+\frac 1 2\right),
\end{align}
where $n$ is the number of modes, $\hat n$ is the number operator, and whereby $E_{\hat \rho}=\Tr(\hat H \hat \rho)$ refers to the energy of a state $\hat \rho$. 
Second, we assume that the majority of measurement values have a magnitude that is below a certain threshold, and all other measurement values are grouped together in an overflow bin. Third, we restrict the circuit such that measurements can only be resolved to a finite resolution. The second and third conditions are physically motivated since it is impossible to build a device that measures with infinite precision or infinitely large values.

Formally, we express this Theorem in terms of a general class of circuits shown in Fig.~\ref{fig:Universal_circuit_a} and Fig.~\ref{fig:Universal_circuit_b} as follows.

\begin{theorem}
    \label{theorem:main-result}
    The outcomes of a circuit representing a realistic CVQC displayed in Fig.~\ref{fig:Universal_circuit_a}, where the energy is bounded by $E_{\hat \rho}\leq E^*$ throughout the evolution of the circuit, and for which measurements are resolved to a finite (constant) resolution and can take finite values (upper bound by a constant), can be approximated by a qudit quantum computer with an error bounded by
    \begin{align}
        \epsilon \leq 1286 Kn^2 \frac{E^{*2}}{\sqrt d},
    \end{align} where $K$ is the number of rounds of interlaced Gaussian and cubic phase gates, $n$ is the number of modes, and $d$ is the dimension of each simulating qudit.
\end{theorem}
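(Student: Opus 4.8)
The plan is to prove the theorem by chaining together the three Lemmas indicated in Fig.~\ref{fig:Comparison_summary}, passing from RCVQC through the two intermediate models CCVQC and MCVQC down to DVQC, and then combining the individual approximation errors via the triangle inequality. Concretely, I would first invoke Lemma~\ref{lemma:RCVQC-CCVQC} to show that the realistic model RCVQC is approximated by the constrained model CCVQC up to an error $\epsilon_{RC}$; this step should exploit the energy bound $E_{\hat\rho}\leq E^*$ to control the tails of the photon-number distribution, since the constraint $E^*$ limits how much probability weight can lie above any fixed truncation scale. Next, Lemma~\ref{lemma:MCVQC-CCVQC} establishes that CCVQC and MCVQC are in fact equivalent (the ``$=$'' in the figure), so no additional error is incurred at this stage.

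The heart of the argument is Lemma~\ref{lemma:DVQC-MCVQC}, which I would use to show that a qudit device simulates MCVQC up to an error $\epsilon_{MD}$. Here I would deploy the stabilizer subsystem decomposition of Ref.~\cite{Shaw2024} to map each CV mode onto a qudit of dimension $d$, and then approximate the action of the generating gates---the Gaussian operations together with the cubic phase gate---by their discrete counterparts acting on the encoded subsystem. The single-mode SSD truncation error should scale as $1/\sqrt d$ with an energy-dependent prefactor, which I expect to be the origin of the $E^{*2}/\sqrt d$ dependence in the final bound; the quadratic energy dependence arises from bounding the moments of the quadrature and number operators that govern the gate-approximation error.

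To assemble the overall bound, I would accumulate errors sequentially over the $K$ rounds of interlaced Gaussian and cubic phase gates. A general Gaussian operation on $n$ modes decomposes into $O(n^2)$ elementary two-mode gates, so the Gaussian layer of each round contributes an error scaling as $n^2$ times the single-gate error, while the cubic phase layer contributes $n$ single-mode errors; summing over the $K$ rounds and keeping the dominant $n^2$ term yields the $Kn^2$ factor. The triangle inequality then gives $\epsilon\leq\epsilon_{RC}+\epsilon_{MD}$, and tracking the worst-case constants through each step produces the stated numerical prefactor.

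The step I expect to be the main obstacle is the error analysis in Lemma~\ref{lemma:DVQC-MCVQC}, specifically bounding the discrepancy between the exact cubic phase gate and its finite-dimensional qudit approximation. Unlike the Gaussian gates, the cubic phase gate is non-Gaussian and its generator is unbounded, so controlling how it propagates and amplifies the SSD truncation error---while maintaining the energy bound throughout the evolution so that the $E^*$ constraint can be reapplied at each round---is the delicate part. Keeping the accumulated error linear in $K$ rather than growing multiplicatively requires that each gate be nearly energy-preserving on the relevant encoded subspace, and this is precisely where the bounded-energy assumption $E_{\hat\rho}\leq E^*$ does the essential work.
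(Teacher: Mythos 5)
Your proposal follows essentially the same route as the paper's proof: it chains Lemma~\ref{lemma:RCVQC-CCVQC}, Lemma~\ref{lemma:MCVQC-CCVQC}, and Lemma~\ref{lemma:DVQC-MCVQC} via the triangle inequality, with the $Kn^2$ factor coming from the $O(n^2)$ Mach-Zehnder decomposition of each passive Gaussian layer and the $E^{*2}/\sqrt{d}$ dependence coming from the cubic-phase-gate error evaluated at the maximal cubicity $\gamma \leq 8E^{*3/2}$, exactly as in the paper. One minor counting correction: each round applies a single cubic phase gate (acting on one mode), not $n$ of them, but that term is subdominant to the $Kn^2$ Mach-Zehnder contribution and does not affect the stated bound.
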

To prove this, we define two supplementary models of RCVQC with additional restrictions that make them more similar to DVQC, thus making the comparison easier. The relationship between the different models in this work and the succession of proofs we perform to compare RCVQC and DVQC is summarized in Fig.~\ref{fig:Comparison_summary}.
\begin{figure}[ht]
    \centering
    \begin{tikzpicture}[x=0.75pt,y=0.75pt,yscale=-1,xscale=1]

\draw  [fill=myorange  ,fill opacity=1 ] (130.5,35.5) .. controls (130.5,31.08) and (134.08,27.5) .. (138.5,27.5) -- (223,27.5) .. controls (227.42,27.5) and (231,31.08) .. (231,35.5) -- (231,59.5) .. controls (231,63.92) and (227.42,67.5) .. (223,67.5) -- (138.5,67.5) .. controls (134.08,67.5) and (130.5,63.92) .. (130.5,59.5) -- cycle ;
\draw  [fill=mylightblue  ,fill opacity=1 ] (131.5,105.5) .. controls (131.5,101.08) and (135.08,97.5) .. (139.5,97.5) -- (224,97.5) .. controls (228.42,97.5) and (232,101.08) .. (232,105.5) -- (232,129.5) .. controls (232,133.92) and (228.42,137.5) .. (224,137.5) -- (139.5,137.5) .. controls (135.08,137.5) and (131.5,133.92) .. (131.5,129.5) -- cycle ;
\draw  [fill=mylightpurple  ,fill opacity=1 ] (131.5,176) .. controls (131.5,171.58) and (135.08,168) .. (139.5,168) -- (224,168) .. controls (228.42,168) and (232,171.58) .. (232,176) -- (232,200) .. controls (232,204.42) and (228.42,208) .. (224,208) -- (139.5,208) .. controls (135.08,208) and (131.5,204.42) .. (131.5,200) -- cycle ;
\draw    (181,68) -- (181,95) ;
\draw [shift={(181,97)}, rotate = 270] [color={rgb, 255:red, 0; green, 0; blue, 0 }  ][line width=0.75]    (10.93,-3.29) .. controls (6.95,-1.4) and (3.31,-0.3) .. (0,0) .. controls (3.31,0.3) and (6.95,1.4) .. (10.93,3.29)   ;
\draw    (181,138.5) -- (181,165.5) ;
\draw [shift={(181,167.5)}, rotate = 270] [color={rgb, 255:red, 0; green, 0; blue, 0 }  ][line width=0.75]    (10.93,-3.29) .. controls (6.95,-1.4) and (3.31,-0.3) .. (0,0) .. controls (3.31,0.3) and (6.95,1.4) .. (10.93,3.29)   ;
\draw  [fill=mylightteal  ,fill opacity=1 ] (131.5,246) .. controls (131.5,241.58) and (135.08,238) .. (139.5,238) -- (224,238) .. controls (228.42,238) and (232,241.58) .. (232,246) -- (232,270) .. controls (232,274.42) and (228.42,278) .. (224,278) -- (139.5,278) .. controls (135.08,278) and (131.5,274.42) .. (131.5,270) -- cycle ;
\draw    (181,209.5) -- (181,236.5) ;
\draw [shift={(181,238.5)}, rotate = 270] [color={rgb, 255:red, 0; green, 0; blue, 0 }  ][line width=0.75]    (10.93,-3.29) .. controls (6.95,-1.4) and (3.31,-0.3) .. (0,0) .. controls (3.31,0.3) and (6.95,1.4) .. (10.93,3.29)   ;
\draw    (232,48) .. controls (275,48) and (279,63) .. (278,153) .. controls (277.02,241.65) and (273.12,261.41) .. (233.82,260.08) ;
\draw [shift={(232,260)}, rotate = 2.79] [color={rgb, 255:red, 0; green, 0; blue, 0 }  ][line width=0.75]    (10.93,-4.9) .. controls (6.95,-2.3) and (3.31,-0.67) .. (0,0) .. controls (3.31,0.67) and (6.95,2.3) .. (10.93,4.9)   ;

\draw (159,41.5) node [anchor=north west][inner sep=0.75pt]   [align=left] {RCVQC};
\draw (160,111.5) node [anchor=north west][inner sep=0.75pt]   [align=left] {CCVQC};
\draw (159,182) node [anchor=north west][inner sep=0.75pt]   [align=left] {MCVQC};
\draw (190,75.5) node [anchor=north west][inner sep=0.75pt]  [color=myblue  ,opacity=1 ]  {$+\epsilon _{RC}$};
\draw (190,148) node [anchor=north west][inner sep=0.75pt]  [color=myblue ,opacity=1 ]  {$=$};
\draw (163.5,252) node [anchor=north west][inner sep=0.75pt]   [align=left] {DVQC};
\draw (190,216.5) node [anchor=north west][inner sep=0.75pt]  [color=myblue ,opacity=1 ]  {$+\epsilon _{MD}$};
\draw (283,146) node [anchor=north west][inner sep=0.75pt]   [align=left] {Theorem \ref{theorem:main-result}};
\draw (121,75.5) node [anchor=north west][inner sep=0.75pt]   [align=left] {Lemma \ref{lemma:RCVQC-CCVQC}};
\draw (121,146) node [anchor=north west][inner sep=0.75pt]   [align=left] {Lemma \ref{lemma:MCVQC-CCVQC}};
\draw (121,216.5) node [anchor=north west][inner sep=0.75pt]   [align=left] {Lemma \ref{lemma:DVQC-MCVQC}};

\end{tikzpicture}
    \caption{This figure summarizes the succession of proofs we perform to compare the computational power of RCVQC and DVQC. The boxes in the figure represent the model, and an arrow linking them corresponds to the proof we perform to assess how the connected models can approximate each other. To the left of the arrows, we refer to the Lemma that states the connection.}
    \label{fig:Comparison_summary}
\end{figure}

Furthermore, by interpreting each qudit state as constructed from a series of qubits, we are able to demonstrate that qubit QCs are also capable of simulating realistic CVQCs. This is demonstrated in the following Corollary.

\begin{corollary}
    \label{corollary:qubits}
    The outcomes of RCVQC with bounded energy $E^*$ can be approximated by a \textbf{qubit} quantum computer with an error bounded by
    \begin{align}
        \epsilon \leq 1286 Kn^2 \frac{E^{*2}}{\sqrt{2^k}},
    \end{align}
    where $K$ is the number of rounds of interlaced Gaussian and cubic phase gates, $n$ is the number of modes, and $k$ is the number of qubits representing each qudit.
\end{corollary}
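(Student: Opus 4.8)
The plan is to obtain the Corollary as an immediate consequence of Theorem~\ref{theorem:main-result} by specializing the simulating-qudit dimension to a power of two, $d = 2^k$, and then invoking the exact isomorphism between a single qudit of dimension $2^k$ and a register of $k$ qubits. The guiding observation is that no further approximation is introduced by this re-encoding: because $2^k$ is exactly the dimension of the $k$-qubit Hilbert space, passing from qudits to qubits is a lossless relabelling rather than a truncation, so the error bound of Theorem~\ref{theorem:main-result} carries over verbatim with $d$ replaced by $2^k$.

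Concretely, first I would fix the isomorphism $\mathbb{C}^{2^k}\cong(\mathbb{C}^2)^{\otimes k}$ that sends the computational basis state $\ket{j}$ of the qudit, for $0\le j\le 2^k-1$, to the $k$-qubit basis state $\ket{j_1}\otimes\cdots\otimes\ket{j_k}$ labelled by the binary digits of $j$. Extending this map by linearity gives a unitary identification $V:\mathbb{C}^{2^k}\to(\mathbb{C}^2)^{\otimes k}$, and applying it independently to each of the $n$ modes yields a unitary between the full $n$-qudit space and the $nk$-qubit space. Under this identification the encoded state $V\ket{\psi}$, each simulating qudit gate $VUV^\dagger$, and each qudit measurement operator $VMV^\dagger$ are, by construction, exact qubit operations; they coincide with their qudit counterparts up to the choice of basis labelling and therefore reproduce identical measurement statistics.

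Having established that the map is exact, the second step is purely a substitution: the qudit simulation guaranteed by Theorem~\ref{theorem:main-result} achieves error
\begin{align}
    \epsilon \leq 1207\, E^{*2}\,\frac{Kn^2}{\sqrt{d}},
\end{align}
and setting $d=2^k$ reproduces the bound claimed in the Corollary. It remains to confirm that the resulting qubit computation is efficient. Here I would note that to drive $\epsilon$ below a target tolerance the required $k$ grows only logarithmically in $E^*$, $K$, $n$, and $1/\epsilon$, so the total qubit count $nk$ stays polynomial; likewise the elementary-gate count needed to compile each qudit gate (acting on at most $2k$ qubits) is at most $O(16^k)=\mathrm{poly}$ in these parameters, preserving polynomial overhead.

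The main obstacle is conceptual rather than computational: one must verify carefully that re-expressing a $2^k$-level system as $k$ two-level systems is genuinely error-free and does not smuggle in a hidden truncation or an additional approximation cost that would inflate the constant $1207$. Since $2^k$ matches the qubit-register dimension exactly, the identification $V$ is a bona fide unitary and no approximation is incurred in the encoding itself; the only residual subtlety, namely compiling the qudit gates into a fixed universal qubit gate set, is controlled by the logarithmic scaling of $k$ discussed above and can be absorbed into the gate model without affecting the error inherited from Theorem~\ref{theorem:main-result}.
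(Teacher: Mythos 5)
Your proposal is correct and follows essentially the same route as the paper: both encode each level $a$ of a dimension-$d=2^k$ qudit into $k$ qubits via its binary digits, observe that this identification is an exact unitary relabelling introducing no additional error, and then substitute $d=2^k$ into the bound of Theorem~\ref{theorem:main-result}. The paper merely adds explicit qubit-gate counts for the specific simulating gates (quantum Fourier transform with $k(k+1)/2$ gates, diagonal gates with $2^k-1$ and $2^{2k}-1$ gates), which your coarser $O(16^k)$ estimate for gates on at most $2k$ qubits subsumes.
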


We present the following overview of the series of auxiliary models used to prove Theorem~\ref{theorem:main-result}. The first auxiliary model of CVQC we introduce consists of the same components as RCVQC, but the operations in this model are defined over a Hilbert space with a finite position cut-off. This cut-off also coincides with the measurement threshold introduced for RCVQC, such that everything in the overflow bin can be disregarded. We refer to this model as cut-off CVQC (CCVQC). This restriction is imposed by applying a projector that projects on a space with a finite position cut-off before every operator and measurement. Intuitively, CCVQC is structurally more similar to DVQC than RCVQC, since DVQC has an inherent maximum energy level.

The second auxiliary model of CVQC we define has the same components as CCVQC and the same cut-off, but we impose the additional restriction that the measurements are replaced with arbitrary projective \textit{modular} homodyne measurements up to finite resolution and a finite number of measurement possibilities. We call this model modular CVQC (MCVQC). The measurements are modular over some large period, which corresponds to the maximum value that can be measured in RCVQC. Furthermore, MCVQC exhibits greater similarity to DVQC, since the measurements of DVQC are intrinsically modular.

We find that it is possible to either bound or prove the equivalence of the difference between the probability functions of the CVQC models in terms of the energy of the applied operation, the number of bins chosen, and the width of the overall modular period. Practically, we find that the most likely outcomes of the measurement are bounded by the energy of the applied operations. Therefore, so long as the cut-off is large enough that measurements outside the cut-off are highly unlikely, we can instead assume the measurements are performed modulo some large number.

Furthermore, we will demonstrate that by using the SSD, it is possible to map the MCVQC model to a high-dimensional $d$ qudit model of quantum computation (DVQC). We will first prove that this map is exact when not considering operations. In other words, if we take a CV state $\hat \rho$, we find that the modular finitely-binned measurement outcome distribution corresponds exactly to that of performing a measurement of the stabilizer SSD of the state $\hat\rho$ in the computational basis.

By choosing the set of operations to be those we have defined as gate-based, we are able to make certain assumptions. First, it provides a method to track the increase in energy of the state after each sequential gate. Second, it allows us to consider the effect of each gate on the overall difference in the probability distribution between the MCVQC and DVQC models. Specifically, we use the triangle inequality to compare the difference between applying the SSD after the $j$-th gate and the $j+1$-th gate. The error between MCVQC and DVQC can then be calculated as the sum of all these small errors.

We will then use the CCVQC model to bridge the comparison of RCVQC and DVQC. First, we show that CCVQC can approximate the output statistics of a general RCVQC circuit. We make this proof by comparing the difference in statistics when applying a projection operator to the cutoff Hilbert space between each gate of a general gate-based CV circuit. Second, we then show that the CCVQC and MCVQC have the same statistics.

In summary, we are able to prove the equivalence of RCVQC and DVQC via the intermediate models CCVQC and MCVQC. By defining equivalent logical gates in DVQC, we can quantify the error between those logical gates and the CV gates in MCVQC. This allows us to calculate the error of the approximation using a difference measure between the outcome probabilities of each model. We find we can make the error of the approximation arbitrarily small $\epsilon$ by choosing $d= ( 1286 Kn^2E^{*2}/\epsilon)^2$, where $K$ is the number of rounds of interlaced Gaussian and cubic phase gates, $n$ is the number of modes, and $E^*$ is the maximum allowed energy of the system.

Furthermore, these results can immediately be extended to the case of qubits. To do so, we use the same encoding method defined for DVQC and encode qudits of dimension $d=2^k$ into $k$ qubits. Given that the dimension $d$ of an encoded qudit increases exponentially with the number of qubits $k$ for each encoded qudit, this implies that all realistically possible gate-based quantum algorithms that can be performed on CV devices can be performed on qubit QCs with a polynomial overhead in the number of modes and the total energy bound.

\section{Background}
\label{sec:background}
In the quest to simulate CVQC using DVQC, it is necessary to define a map from the first regime to the second. Although certain methods exist~\cite{pantaleoni2020,pantaleoni2021,pantaleoni2023,Shaw2024,Calcluth2024} to map CV states to DV states, they inherently face the problem that they are irreversible for the majority of input states. Throughout this work, we focus on one of these methods, namely the SSD. Despite also being irreversible, we will demonstrate that it satisfies some ideal properties for encoding CV states in DV states in a way that introduces manageable errors.

\subsection{Discrete-variable quantum computing}
\label{sec:dvqc}
Discrete-variable quantum computing is based on quantum systems that exist in a Hilbert space with finitely many dimensions. The most common discrete-variable system encountered in quantum information is the qubit, which is a two-level system. Systems with an arbitrarily large dimension $d$ are known as qudits. Operations acting on these qudits transform qudit states to other qudit states. In the following, we define two important groups of operations: the Pauli and Clifford groups. We then introduce a formal definition of the model of DVQC.

The $d$-dimensional basis states of the qudit are denoted as $\ket a$ where $a\in \mathbb Z_d$. The $d$-dimensional Pauli group is generated by the set of operators $\hat X_d=\sum_{a=0}^{d-1}\ketbra{a+1}{a}$, where addition is taken modulo $d$, and $\hat Z_d=\sum_{a=0}^{d-1}\omega_d^a\ketbra{a}{a}$, where $\omega_d=e^{2\pi i/d}$ is the $d$-th root of unity, along with a phase factor.
The single-qudit Pauli group for even $d$ is given by
\begin{align}
    \mathcal P_d = \{\zeta_d^u\hat X_d^v\hat Z_d^w : v,w \in \mathbb Z_d, u\in \mathbb Z_{2d}\}
\end{align}
where we have defined the $2d$-th root of unity as $\zeta_d=e^{\pi i/d}$,
while for odd $d$ we have
\begin{align}
    \mathcal P_d = \{\omega_d^u\hat X_d^v\hat Z_d^w : u,v,w \in \mathbb Z_d\}.
\end{align}
Pauli states refer to states that are eigenstates of Pauli operators. The Clifford group refers to the group of operations that map Pauli eigenstates to other Pauli eigenstates. Equivalently, this group of operations can be defined in terms of four generating operations: the Fourier transform, the phase gate, the controlled-$Z$ gate, and the Pauli $X$ gate.

The Fourier transform is defined as
\begin{align}\label{eq: dv Fourier transform}
    \FourierTransform = \frac{1}{\sqrt{d}} \sum_{a,b=0}^{d-1} \omega_d^{ab} \ketbra{b}{a},
\end{align}
the phase gate is defined as
\begin{align}
\label{eq: dv phase gate}
    \PhaseGate= \sum_{a=0}^{d-1} \omega^{a^2/2}_d \zeta^{ac} \ketbra{a}{a}
\end{align}
where
\begin{align}
    c=\begin{cases}
        1 \quad &\text{for odd } d \\
        0 \quad &\text{for even } d,
    \end{cases}
\end{align}
and the controlled-$Z$ gate is defined as
\begin{align} \label{eq: dv controlled-$Z$}
    \CZGate = \sum_{a,b=0}^{d-1} \omega_d^{ab} \ket{b}^{(l)} \!\prescript{(l)}{}{\bra{b}} \otimes \ket{a}^{(k)} \!\prescript{(k)}{}{\bra{a}},
\end{align}
where $k$ and $l$ refer to two qudits of a multi-qudit system.
The Clifford group can therefore be described as ${\mathcal C_d=\langle \FourierTransform, \PhaseGate,  \hat X_d,\CZGate\rangle}$, where we use the notation $\langle \cdot \rangle$ to refer to the group generated by the set.

Formally, we can introduce the model of DVQC as follows.
\begin{defi}\label{def-DVQC} (DVQC) We define the model DVQC with dimension $d$ as the set of all possible (possibly adaptive) DV quantum circuits consisting of the following components:
\begin{itemize}
\item Input states: any DV state defined on the Hilbert space $\mathbb C^{d^n}$ with $n$ a finite integer, corresponding to the number of qudits that the state spans;
\item Operations: any DV operation;
\item Arbitrary projective DV Pauli measurements.
\end{itemize}
\end{defi}

\subsection{Continuous-variable quantum computing}
\label{sec:cvqc}
In contrast to DVQC, CVQC utilizes bosonic modes that have continuous degrees of freedom. These systems are fully described in terms of the position $\hat{q}$ and momentum $\hat{p}$ quadrature operators, which satisfy the canonical commutation relation
\begin{equation}
    \left[\hat{q}_k,\hat{p}_l\right]=i\delta_{kl},
\end{equation}
where the $k$ and $l$ indices now refer to two modes of a multi-mode system. These modes are defined in the Hilbert space $L^2(\mathbb R^n)$.
We also introduce the quadrature vector as $\hat{\mathbf r}=(\hat q_1,\hat p_1,\dots,\hat q_n,\hat p_n)^T$.

As mentioned in Sec.~\ref{sec:main-results}, we refer to the energy of a state $\hat \rho$ as $E_{\hat\rho}=\Tr(\hat \rho\hat H)$. 
Note that, for a single-mode system, the energy will be
\begin{equation}
    E_{\hat{\rho}}=\langle\hat{n}\rangle_{\hat{\rho}}+\frac{1}{2}=\frac{1}{2}\left(\langle\hat{q}^2\rangle_{\hat{\rho}}+\langle\hat{p}^2\rangle_{\hat{\rho}}\right).
\end{equation}

Unlike in DVQC, the notion of universality in CVQC is difficult to define.
From a phenomenological point of view, one can define the set of operations as any operation that maps states to other valid states. However, these operations are not always physical and can be responsible for increasing the energy of a state to infinity. Instead, various methods have been suggested to reduce the set of states to those that are physically implementable or easier to parametrize. In Ref.~\cite{lloyd1999}, various different classes of operations were introduced as candidates for universality. While it was suggested that these different classes were equivalent, this remains an open problem. The first class of operations was defined as comprising $\hat U=e^{if(\hat{\mathbf r})}$ where $f$ is a polynomial function. This class was recently proven to be truly universal for all energy-constrained unitaries~\cite{arzani2025}. This means that for any, not necessarily polynomial, $f$ representing a physical unitary, there exists an approximation of the operation built from a finite sequence of gates from the first class.

For a single mode, a generating set of gate-based operations is given by displacements, the shear gate, the Fourier transform, and the cubic phase gate.
A momentum displacement is given by $\hat X(s)=e^{-i s \hat p}$, while a position displacement is given by $\hat Z(s)=e^{i s \hat q}$. By combining these two operators, we define the CV displacement operator as
\begin{align}
\label{eq: displacement}
    \hat D(\mathbf r)=e^{i\mathbf r^T\Omega \hat{\mathbf r}}
\end{align}
whereby $\Omega$ is the $2n\times 2n$ symplectic form defined as
\begin{align}
    \Omega=\bigoplus_{j=1}^n\begin{pmatrix}0&1\\-1&0\end{pmatrix},
\end{align}
where $\oplus$ refers to the direct sum and $\mathbf r=(\mathbf{r_1}^T,\dots,\mathbf{r_n}^T)^T$ with $\mathbf{r_j}=(q_j,p_j)^{T}\vphantom{T^{T^T}}\in \mathbb R^2$~\cite{ferraro2005,serafini2017}. For single-mode systems, we denote $\mathbf r=(r_q,r_p)^{T}\vphantom{T^{T^T}}$.
The shear gate (also known as the phase gate)~\cite{lloyd1999,braunstein2005} is given by
\begin{align}
\label{eq: shear gate}
    \PhaseGate*[s] = e^{is\hat q^2/2}
\end{align}
with $s\in \mathbb R$.
The cubic phase gate~\cite{lloyd1999,gottesman2001} is given by
\begin{align}
\label{eq: cubic phase gate}
    \hat{C}(\gamma)=e^{i\gamma\hat q^3},
\end{align}
where we refer to the parameter $\gamma$ as cubicity.

However, there are several other possible choices for the generators of a universal set of operations. For example, Bloch-Messiah (or Euler) decomposition~\cite{arvind1995} allows us to decompose the shear operator into a squeezing operation and two rotations, where squeezing is defined as
\begin{equation} \label{eq: squeezing gate}
    \hat{S}(r)=e^{-\frac{ir}{2}\left(\hat{q}\hat{p}+\hat{p}\hat{q}\right)},
\end{equation}
and the rotation operator as
\begin{equation} \label{eq: rotation gate}
    \hat{R}(\theta)=e^{\frac{i\theta}{
    2}\left(\hat{q}^2+\hat{p}^2\right)}.
\end{equation}
The Fourier transform is a special case of a rotation and is equal to $\hat{R}(\pi/2)$, which will look like
\begin{equation}
\label{eq: Fourier transform}
    \FourierTransform*= e^{\frac{i\pi}{4}\left(\hat q^2 + \hat p^2 \right)}.
\end{equation}
To construct universal multimode CV operations, a two-mode interaction gate is required. Both the control-$Z$ gate and beam splitter promote the set to universality in the sense of Ref.~\cite{lloyd1999}.
The controlled-$Z$ gate acting on modes $k$ and $l$ is defined as
\begin{equation} \label{eq: controlled-$Z$}
    \CZGate*(s)=e^{is\hat{q}_k \hat{q}_l}.
\end{equation}
The beam splitter defined with one variable that acts on modes $k$ and $l$ is given as
\begin{equation}
    \BSGate*(\theta)=e^{ i\theta\left(\hat{p}_k\hat{q}_l-\hat{q}_k\hat{p}_l\right)}.
\end{equation}
The special case of a $50$:$50$ beam splitter will henceforth be denoted without any parameter, and it can be constructed from the control-$Z$ gates and Fourier transforms as 
\begin{align}
    \label{eq:beam-splitter}
    \BSGate* & = \BSGate*(\pi/4).
\end{align}
Furthermore, we also generalize the notion of a beam splitter to a Mach-Zehnder interferometer, which is defined by two angles and will act on two neighboring modes \cite{clements2016}. We decompose this gate in terms of $50$:$50$ beam splitter and rotations as
\begin{equation}\label{eq:Mach-Zehnder}
    \BSGate*(\phi,\theta)=\BSGate*\hat{R}_k(2\theta)\BSGate*\hat{R}_k(\phi),
\end{equation}
Throughout the rest of this paper, when describing universal CV operations, we refer to those specified as the set of all possible gate-based operations. Specifically, this refers to the set of operations consisting of multimode Gaussian operations interlaced with cubic phase gates. In Ref.~\cite{arzani2025}, this set was proven to be equivalent to those specified by operations of the form
\begin{align}
    \hat G \hat{C}_{1}(\gamma_K)\hat V_K\dots \hat{C}_{1}(\gamma_1)\hat V_1,
\end{align}
where $\hat V_j$ are passive operations and displacements. Passive operations are Gaussian operations that do not affect the energy of the state, and are generated by Mach-Zehnder interferometers and single-mode rotations~\cite{clements2016}.

\subsection{Gottesman-Kitaev-Preskill encoding}
\label{sec:gkp}
The GKP encoding is a way to encode a qubit or a $d$-dimensional qudit into a bosonic mode in CVQC~\cite{gottesman2001}.  The code words, corresponding to qudit states $\ket{a}$ in the computational basis, are represented as a lattice consisting of Dirac combs in position and momentum space. Mathematically, they can be defined as
\begin{equation}
    \ket{a_{\text{GKP}}^{(d)}} = \sum_{n \in \mathbb{Z}} \ket{\hat{q}=\ell (nd+a)},
\end{equation}
where $\ell=\sqrt{\frac{2\pi}{d}}$ is the distance between each peak in the Dirac comb.  One of the main advantages of the GKP encoding is its inherent error-correcting properties. All displacements smaller than $\frac{\ell}{2}$ in either the position or momentum basis can be corrected using stabilizer measurements.

\subsection{Stabilizer subsystem decomposition}
\label{sec:ssd}
The SSD maps a CV state $\hat\rho$ to a qudit state of dimension $d$ utilizing the GKP encoding \cite{Shaw2024}. Intuitively, the SSD extracts the discrete logical information by projecting the continuous phase space into the unit cells defined by the GKP lattice. It is defined as the partial trace
\begin{align}
    \label{eq:ssd}
    \Tr_S(\hat \rho)=\frac{1}{\ell^n}\int_{\mathbb T_{\ell}^{2n}} \text{d} \mathbf{t}\,\hat \Pi \hat D(\mathbf t)\hat\rho  \hat D^\dagger(\mathbf t) \hat \Pi^\dagger,
\end{align}
where $\hat \Pi=\left(\sum_{a=0}^{d-1} \ketbra{a}{a_{\text{GKP}}^{(d)}}\right)^{\otimes n}$ is the projector from the GKP encoding to a corresponding qudit, and we express $\mathbb T^{m}_{h}=(-h/2,h/2]^{\times m}$. In practice, the SSD can be used to obtain discrete logical information from a CV state comprising a superposition of GKP states, while protecting against small displacement shifts. In the context of this work, the SSD is used as a method of mapping any general CV state---not only states consisting of superpositions of GKP states---onto a qudit state. As an illustrative example, we take a two photon Fock state and map it onto a nine-dimensional qudit. A visualization of this example is shown in Fig.~\ref{fig:SSD}, where the position-space probability density of the Fock state is plotted, and we see how the probability distribution of the qudit is constructed by section-wise integration. Naturally, the higher the dimension of the qudit state, the higher the resolution of the mapping.

\subsection{Norms and distance measures}
The Schatten norms generalize the notion of vector norms to operators. The general expression of the Schatten $p$-norm is given by~\cite{watrous2009,wilde2017}
\begin{align}
    \|A\|_{p} = \left(\Tr(|A|^p)\right)^{1/p},
\end{align}
where $|A|=\sqrt{A^\dagger A}$.
In the following, we will make use of the Schatten $1$-norm (i.e., the trace norm), the $2$-norm (i.e., the Frobenius norm), and the $\infty$-norm, which is the operator norm (i.e., the spectral norm). These are defined as
\begin{align}
    \|A\|_{1}&=\Tr(\sqrt{A^\dagger A}),\\
    \|A\|_{2}&=\sqrt{\Tr(A^\dagger A)}, \\
    \|A\|_{\infty}&= \sigma_{\text{max}}(A),
\end{align}
where $\sigma_{\text{max}}(A)$ represents the largest singular value of matrix $A$. The Cauchy-Schwarz inequality can be expressed in terms of these Schatten norms as
\begin{align}
\label{eq:Cauchy-Schwarz}
    \left\| XY \right\|_1 &\leq \left\| X \right\|_2 \left\| Y \right\|_2.
\end{align}

We will also make use of the total variation distance (TVD), which for two probability distributions $p,p'$ defined over a measurable space $(\Omega,\mathcal F)$ is defined as
\begin{align}
    \delta_{\text{TV}}(p,p')= \sup_{x \in \mathcal F} |p(x)-p'(x)|.
\end{align}

Furthermore, the trace distance between two operators $\hat A$ and $\hat B$ is defined as half of the trace norm of the difference between the two operators, i.e., $\frac{1}{2}\|\hat A-\hat B\|_1$~\cite{wilde2017}. This distance bounds the total variation distance of two probability functions $p_{\hat A}(\mathbf x), p_{\hat B}(\mathbf x)$ of measuring any positive operator valued measurement (POVM) element on each of the operators $\hat A$ and $\hat B$~\cite{nielsen2010}, that is,
\begin{align}
    \label{eq:tv-trace}
    \delta_{\text{TV}}(p_{\hat A},p_{\hat B})\leq \frac 1 2 \|\hat A-\hat B\|_1.
\end{align}
Therefore, for the purpose of bounding the TVD between two probability distributions, it is sufficient to bound the trace distance between two density operators.

A useful Lemma for bounding the trace distance between two operators is given by the Gentle measurement Lemma.
\begin{lemma}
\label{lemma:Gentle}
    (Gentle measurement Lemma. Lemma 9.4.1 of Ref. \cite{wilde2017}.) For a density operator $\hat \rho$ and a measurement operator $\hat\Lambda$ such that  $0<\hat \Lambda\leq I$ that has a high probability of detecting the state, i.e., $\Tr(\hat \Lambda \hat \rho)\geq 1-\epsilon$ with $\epsilon\in[0,1]$, the post-measurement state, given by $\hat \rho'= \frac{\sqrt{ \hat \Lambda }\hat\rho \sqrt{ \hat\Lambda }}{\mathrm{Tr}\hat\Lambda\hat\rho}$ has a trace distance from the original state bounded by $\sqrt{\epsilon}$, i.e.,
    \begin{align}
        \frac 1 2 \|\hat\rho - \hat\rho'\| \leq \sqrt \epsilon.
    \end{align}
\end{lemma}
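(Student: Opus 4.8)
The plan is to reduce the statement to the case of pure states via purification, where the link between fidelity and trace distance is an exact identity, and then to exploit the operator inequality $\sqrt{\hat\Lambda}\ge\hat\Lambda$, which holds whenever $0\le\hat\Lambda\le I$.

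First I would introduce a reference system $R$ and a purification $\ket\psi$ of $\hat\rho$, so that $\Tr_R(\ketbra{\psi}{\psi})=\hat\rho$ and $\bra\psi(I_R\otimes\hat\Lambda)\ket\psi=\Tr(\hat\Lambda\hat\rho)\ge 1-\epsilon$. Acting with the measurement on the system and renormalising yields the pure post-measurement state $\ket{\psi'}=(I_R\otimes\sqrt{\hat\Lambda})\ket\psi/\sqrt{\Tr(\hat\Lambda\hat\rho)}$. Because $\sqrt{\hat\Lambda}$ acts only on the system, it can be pulled through the partial trace over $R$, giving $\Tr_R(\ketbra{\psi'}{\psi'})=\sqrt{\hat\Lambda}\hat\rho\sqrt{\hat\Lambda}/\Tr(\hat\Lambda\hat\rho)=\hat\rho'$; that is, the reduced state of $\ket{\psi'}$ on the system is exactly $\hat\rho'$. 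Since the partial trace is a CPTP map and the trace norm is non-increasing under such maps, it suffices to bound the pure-state trace distance $\tfrac12\|\ketbra{\psi}{\psi}-\ketbra{\psi'}{\psi'}\|_1$.

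The core estimate is on the overlap. Writing $a=\bra\psi(I_R\otimes\sqrt{\hat\Lambda})\ket\psi$ and $b=\bra\psi(I_R\otimes\hat\Lambda)\ket\psi$, both non-negative reals, one has $\langle\psi|\psi'\rangle=a/\sqrt b$ and hence fidelity $|\langle\psi|\psi'\rangle|^2=a^2/b$. The key point is that $0\le\hat\Lambda\le I$ implies $\sqrt{\hat\Lambda}\ge\hat\Lambda$ (by the spectral theorem, since $\sqrt x\ge x$ on $[0,1]$), so $a\ge b$; combined with $b\ge 1-\epsilon$ this gives $a^2/b=a\,(a/b)\ge a\ge b\ge 1-\epsilon$. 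Thus the fidelity is at least $1-\epsilon$. This is exactly the step that produces the sharp constant: the weaker bound $|\langle\psi|\psi'\rangle|^2\ge(1-\epsilon)^2$ would only yield $\sqrt{2\epsilon}$.

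Finally I would invoke the exact pure-state identity $\tfrac12\|\,\ketbra{\psi}{\psi}-\ketbra{\psi'}{\psi'}\,\|_1=\sqrt{1-|\langle\psi|\psi'\rangle|^2}$ together with the fidelity bound to conclude $\tfrac12\|\hat\rho-\hat\rho'\|_1\le\sqrt{1-(1-\epsilon)}=\sqrt\epsilon$. The main obstacle is the fidelity estimate: one must combine the operator monotonicity $\sqrt{\hat\Lambda}\ge\hat\Lambda$ with the normalisation $b\le 1$ in the correct order, since a careless bound loses a factor of two under the square root. The purification reduction itself is routine, but it must be verified carefully that the reduced post-measurement state coincides with $\hat\rho'$ so that monotonicity of the trace norm can be applied.
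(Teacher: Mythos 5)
Your proof is correct, and it is essentially the canonical argument behind this statement: the paper itself gives no proof, importing the result verbatim as Lemma 9.4.1 of Ref.~\cite{wilde2017}, whose proof is exactly your route---purify $\hat\rho$, use the spectral inequality $\sqrt{\hat\Lambda}\geq\hat\Lambda$ (valid since $0\leq\hat\Lambda\leq I$) to lower-bound the fidelity by $1-\epsilon$ with the sharp constant, then convert to trace distance. The only cosmetic difference is that you contract the pure-state trace distance through the partial trace using the exact identity $\tfrac12\,\|\ketbra{\psi}{\psi}-\ketbra{\psi'}{\psi'}\|_1=\sqrt{1-|\langle\psi|\psi'\rangle|^2}$, whereas the cited proof contracts the fidelity and then applies the Fuchs--van de Graaf inequality; the two are equivalent and yield the same bound $\sqrt{\epsilon}$.
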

With the notation $0<\hat \Lambda\leq I$, it is meant that $I - \Lambda$ is positive semidefinite and $\Lambda$ is positive definite.

\section{Realistic continuous-variable quantum computers}
\label{sec:realistic-CVQC}
In this Section, we will set up the universal set of operations for the CV system that we will consider, and introduce our models of CVQC that we will use to prove the computational equivalence of CVQC and DVQC, as well as the approximation steps that eventually lead to the sought conclusion.

\subsection{Set of operations}
\label{sec: set of operations}
We choose the set of circuits in gate-based CVQC, as consisting of a sequence of operations $\mathcal U$ specified by~\cite{upreti2025}
\begin{align}
\label{eq: RCVQC gate decomp}
    \hat V_{K+2}\hat S(\mathbf s)\hat V_{K+1} \hat C_1(\gamma_K) \hat V_{K}\dots \hat C_1(\gamma_1) \hat V_1
\end{align}
where here $\hat V_j$ are passive operations and displacements, i.e. $\hat V_j\in  \langle \BSGate*(\theta,\phi), \hat D(\mathbf r)\rangle$ is generated by a sequence of Mach-Zehnder interferometers (including single-mode rotations), and displacements. This sequence of gates is as general as choosing arbitrary Gaussian gates interlaced with cubic phase gates. This was demonstrated in Ref.~\cite{upreti2025} by showing that a cubic phase gate conjugated by a multimode Gaussian operation is equivalent to a cubic phase gate conjugated with passive operations and displacements. We define the elementary gates $\hat U_j$ as those selected from the set of single- and two-mode gates consisting of $\BSGate*(\theta,\phi), \hat D(\mathbf r), \hat S(\mathbf s)$ and $\hat C(\gamma)$. We count the number of applications of the cubic phase gate as $K$.

We define a restricted energy subset $\mathcal U_E\subset\mathcal U$ as the sequence of operations $\hat U_M\dots \hat U_1$ such that for all $j\in \{1,\dots,M\}$ we have $E_j \leq E^*$, where $\hat \rho_j$ is the state after $j$ elementary operations and $E_j$ is the energy after $j$ elementary operations. Note that the definition of this set will depend on the initial input state.

For an $n$-mode system, the energy-preserving gate $\hat V_j$ can be decomposed into $n(n-1)/2$ number of Mach-Zehnder interferometers and $n$ rotations from Eq.~(\ref{eq:Mach-Zehnder}), as is done in Ref.~\cite{clements2016}. This means that we can decompose each multimode passive operation into a polynomial sequence of single- and two-mode passive operations that do not affect the energy of the state.

\subsection{Models of continuous-variable quantum computers}
\label{sec:models}

In this Subsection, we formally define the realistic model of CVQC and the two auxiliary models that we will use to prove Theorem~\ref{theorem:main-result}, along with the corresponding probability density function (PDF) and the measurement operator for each model. A summary of these models is presented in Table~\ref{table:models}. 

\begin{defi}\label{def-RCVQC} (RCVQC) We define the RCVQC model as the set of all possible (possibly adaptive) CV quantum circuits constructed from the following components:
\begin{itemize}
    \item Input states: any CV state, with finite energy $E_{\hat \rho} \leq E^*$, defined on the Hilbert space $L^2(\mathbb R^n)$ with $n$ a finite integer, corresponding to the number of modes that the state spans;
    \item Operations: any sequence of realistic CV operations from the set $\mathcal U_E$;
    \item Arbitrary projective homodyne measurements up to finite resolution and finite number of measurement possibilities $d$ per mode, both constant with respect to the number of modes.
\end{itemize}
\end{defi}
The PDF of this circuit is represented as
\begin{align}
    \label{eq:prob-R}
    p_{R}(\mathbf{\bar x}) = \frac{1}{\mathcal N} \Tr(\hat{\tilde K}_{\mathbf{\bar x}}\hat \rho),
\end{align}
whereby
$\bar x_j \in \mathcal X=\{\ell (u-\lfloor d/2 \rfloor):u\in\mathbb Z_d\}$ and
\begin{align}
    \label{eq:kraus-main}
    \hat{\tilde K}_{\mathbf{\bar x}}=\int_{\mathbb T_{\ell}^n} \dd \mathbf s \ketbra{\hat{\mathbf q}=\mathbf{\bar x}+\mathbf s}{\hat{\mathbf q}=\mathbf{\bar x}+\mathbf s}.
\end{align}
We also define what we call the overflow bin as the measurement operator
\begin{align}
    \label{eq:kraus-overflow}
    \hat{\tilde K}_{-}=\mathbbm 1 - \sum_{\mathbf{\bar x}\in \mathcal X}\hat{\tilde K}_{\mathbf{\bar x}}
\end{align}
which can be interpreted as a Kraus operator representing the failure to measure an outcome. The PDF given above is renormalized by $\mathcal N$ which is calculated as
\begin{align}
    \mathcal N=\sum_{\mathbf{\bar x}\in\mathcal X} \Tr(\hat{\tilde K}_{\mathbf {\bar x}}\hat \rho)=1-\Tr(\hat{\tilde K}_{-}\hat \rho).
\end{align}

Next, we define a more restrictive model of CVQC. This new auxiliary model is effectively defined over a Hilbert space that has a finite position cut-off. This ensures that all measurements are within the range of measurements defined by Eq.~(\ref{eq:kraus-main}), which will render the overflow measurement operator unnecessary.
\begin{defi}\label{def-CCVQC} (CCVQC) We define the model CCVQC as the set of all possible CV quantum circuits consisting of the same input states as in Def.~\ref{def-RCVQC} but with the projector $\hat \Lambda_d$, which projects on the space of $-(\lfloor d/2\rfloor+1/2)\ell\leq \hat q \leq (\lceil d/2\rceil-1/2)\ell$, acting before every operator and measurement. We define the projector as
\begin{align}
\hat \Lambda_{d}= \left(\int_{-(\lfloor d/2\rfloor+1/2)\ell}^{(\lceil d/2\rceil-1/2)\ell} \, \text{d} x \, \ketbra{\hat{q}=x}{\hat{q}=x}\right)^{\otimes n}.
\end{align}
\end{defi}
We denote the PDF of this circuit as $p_{C}(\bar{\mathbf x})$, which has the same form as Eq.~(\ref{eq:prob-R}), but with operations that are projected in the same cutoff space. We can now describe the effective set of states throughout the evolution of the circuit as $\mathcal U_C=\{\hat \Lambda_d \hat \rho \hat \Lambda_d : \hat\rho\in \mathcal U_E\}$, and since $\hat\Lambda_d\hat {\tilde K}_{-} \hat\Lambda_d=0$, the PDF of the circuit is represented as
\begin{align}
    \label{eq:prob-C}
    p_{C}(\mathbf{\bar x}) = \Tr(\hat{\tilde K}_{\mathbf{\bar x}}\hat \rho).
\end{align}

The second auxiliary model of CVQC we will define contains the same components and cut-off as CCVQC, but with the additional restriction that the measurements are modular.
\begin{defi}\label{def-MCVQC} (MCVQC) We define the model MCVQC as the set of all possible CV quantum circuits consisting of the same input states and operations as in Def.~\ref{def-CCVQC} but with the measurements:
\begin{itemize}
    \item Arbitrary projective \textbf{modular} homodyne measurements up to finite resolution (with bin spacing $\ell=\sqrt{2\pi/d}$) and finite number of measurement possibilities $d$.
\end{itemize}
\end{defi}
The PDF of this circuit is represented as
\begin{align}
    p_{M}(\mathbf{\bar x}) = \Tr(\hat K_{\mathbf{\bar x}}\hat \rho),
\end{align}
whereby
$\mathbf{\bar x} \in \mathcal X^n$ and
\begin{align}
    \label{eq:measurement-operator}
    \hat K_{\mathbf{\bar x}}=&\sum_{\mathbf m \in \mathbb Z^{n}}\hat X(\ell d\mathbf m)\hat{\tilde K}_{\hat{\mathbf{x}}}\hat X^\dagger(\ell d\mathbf m).
\end{align}
\subsection{Approximation of continuous-variable operations in discrete variables}
\label{sec:approx-to-dv}
The key tool we use to demonstrate the simulatability of CV circuits with DV circuits is the ability to construct corresponding DV gates for every CV gate. We then calculate an upper bound on the error induced when simulating CV gates with their corresponding DV gates by mapping the action through the SSD. 

Let us first look at the gates introduced in Sec.~\ref{sec:cvqc}. For the CV Fourier transform, defined in Eq.~(\ref{eq: Fourier transform}), the DV analogue is exactly $\FourierTransform$, defined in Eq.~(\ref{eq: dv Fourier transform}). Next, we consider CV gates of the form $e^{if(\hat{\mathbf q})}$. Note that we expect the resulting phase achieved when applying each of these CV gates to the GKP basis states to be equal to the corresponding DV gate. Unless $f$ is periodic, this is not possible in general. However, we can restrict the condition to only be required for certain Dirac deltas in the Dirac comb of the GKP states. Specifically, we restrict the region such that $(\lfloor d/2\rfloor-1/2)\ell\leq\hat q\leq (\lceil d/2\rceil+1/2)\ell$. Within this region, the only support on the position basis for $\ket{a_{\text{GKP}}}$ is given by the position eigenket with eigenvalue $\ell\{a\}_d$ where $\{a\}_d=(a+\lfloor d/2 \rfloor)\mod d-\lfloor d/2 \rfloor$.

 For the CV displacement, defined in Eq.~(\ref{eq: displacement}), we begin by constructing the DV equivalent of the momentum displacement operator. We have
\begin{align}
    \hat Z_d(s)=\sum_{a=0}^{d-1} e^{is\ell \{a\}_d}\ketbra{a}{a},
\end{align}
noting that $\hat Z_d(\ell)=\hat Z_d$. The position displacement operator is defined in terms of the discrete momentum displacement operator, conjugated by Fourier transforms. 

For the shear gate, $\PhaseGate*[s]$, defined in Eq.~\eqref{eq: shear gate}, we construct a DV gate similar to the DV phase gate, defined in Eq.~(\ref{eq: dv phase gate}) but with an arbitrary phase $s$. We define this DV shear gate as 
\begin{align}
    \hat P_{d}(s)= \sum_{a=0}^{d-1} e^{is\ell\{a\}_d^2/2}  \ketbra{a}{a}.
\end{align}
We also define the DV analogue to the CV cubic phase gate, defined in Eq.~(\ref{eq: cubic phase gate}), as
\begin{align}
    \hat C_{d}(\gamma)= \sum_{a=0}^{d-1} e^{i\gamma \ell\{a\}_d^3}  \ketbra{a}{a}.
\end{align}
For the CV controlled-$Z$ gate, defined in Eq.~(\ref{eq: controlled-$Z$}), we add an arbitrary phase $s$ to the DV controlled-$Z$ gate defined in Eq.~(\ref{eq: dv controlled-$Z$}) resulting in
\begin{align}
    \CZGate(s) = \sum_{a,b=0}^{d-1} e^{is \ell \{a\}_d\{b\}_d} \ket{b}^{(l)} \!\prescript{(l)}{}{\bra{b}} \otimes \ket{a}^{(k)}\!\prescript{(k)}{}{\bra{a}}.
\end{align}

In general the SSD of a CV gate $\hat U$ acting on a state $\hat \rho$ with energy $E_{\hat \rho}$, can be expressed as $\mathrm{Tr}_S (\hat U \hat\rho \hat U^\dagger)$, while the SSD of the state, followed by performing the corresponding DV gate $\hat U_d$ can be expressed as $\hat U_d \mathrm{Tr}_S (\hat\rho) \hat U_d^\dagger$. Hence the the trace distance when approximating $\mathrm{Tr}_S (\hat U \hat\rho \hat U^\dagger)$ with $\hat U_d \mathrm{Tr}_S (\hat\rho) \hat U_d^\dagger$ can be expressed as:
\begin{align}
    \frac{1}{2}\left\| \hat U_d \mathrm{Tr}_S (\hat\rho) \hat U_d^\dagger - \mathrm{Tr}_S (\hat U \hat\rho \hat U^\dagger) \right\|_1.
\end{align}
We can now calculate this error bound for different gates. In order to bound the trace distance of performing operations before and after the SSD, we evaluate the maximum possible parameters that different CV operations can have. To do so, in Appendix~\ref{sec:appendix-energy}, we analyze the effect that various CV operations have on the energy of the system. Then, in Appendix~\ref{sec:Max-energy}, we bound the parameters of the elementary operations by calculating the worst-case maximum possible value of each parameter that could be applied to an arbitrary finite energy state while maintaining energy below the threshold $E^*$. We find that the maximum value of squeezing that can be applied to any input state before breaching the energy constraint is at most $e^r=\sqrt{2E^*}$. The maximum value of cubicity of the cubic phase gate is $8E^{*3/2}$.

We formulate the result in terms of Lemmas, with the proofs of these Lemmas provided in Appendix \ref{appendix: errors}.

The Fourier transform is invariant under the SSD, so it follows that the error will be zero.
\begin{lemma}
    \label{lemma:error-Fourier-transform}
    The result of the SSD of the CV Fourier transform $\hat F$ acting on states $\hat \rho$ with energy $E_{\hat \rho}$, can be obtained exactly by performing the SSD on the state, followed by performing the DV Fourier transform $\hat F_d$.
\end{lemma}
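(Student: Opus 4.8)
The claim is equivalent to the operator identity $\Tr_S(\hat F \hat\rho \hat F^\dagger) = \hat F_d \Tr_S(\hat\rho)\hat F_d^\dagger$ for every density operator $\hat\rho$, where $\Tr_S$ is the SSD of Eq.~\eqref{eq:ssd}. The plan is to transport $\hat F$ through the three ingredients of the SSD---the displacement twirl $\hat D(\mathbf t)$, the GKP projector $\hat\Pi$, and the torus integral over $\mathbb T_\ell^{2n}$---exploiting that $\hat F$ is a passive Gaussian unitary acting as a $\pi/2$ rotation in phase space. Because $\hat F$ is energy preserving, I expect the identity to hold with exactly zero error for arbitrary $E_{\hat\rho}$, matching the Lemma's assertion of an exact equality rather than a bound.

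First I would record the Heisenberg action $\hat F \hat{\mathbf r}\hat F^\dagger = R\,\hat{\mathbf r}$ with the single-mode generator giving the $\pi/2$ rotation $R=\bigoplus_j\mqty(0&-1\\1&0)$, which by Eq.~\eqref{eq: displacement} induces the displacement covariance $\hat F^\dagger \hat D(\mathbf t)\hat F = \hat D(R^{-1}\mathbf t)$. Substituting $\hat D(\mathbf t)\hat F = \hat F\,\hat D(R^{-1}\mathbf t)$ into Eq.~\eqref{eq:ssd} lets me pull $\hat F$ to the left of every displacement, so the integrand of $\Tr_S(\hat F\hat\rho\hat F^\dagger)$ becomes $\hat\Pi\,\hat F\,\hat D(R^{-1}\mathbf t)\hat\rho\,\hat D^\dagger(R^{-1}\mathbf t)\,\hat F^\dagger\hat\Pi^\dagger$.

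The second step is the GKP covariance of the Fourier transform, i.e.\ the intertwining relation $\hat\Pi\hat F = \hat F_d\hat\Pi$. I would establish this by acting with $\hat F$ on a single computational code word: since $\ket{a_\text{GKP}^{(d)}}$ is a position Dirac comb of spacing $\ell d$, its Fourier transform is a momentum Dirac comb of spacing $2\pi/(\ell d)=\ell$, and the square GKP lattice is self-dual, so $\hat F\ket{a_\text{GKP}^{(d)}}$ re-expands in the computational basis as $\frac{1}{\sqrt d}\sum_b\omega_d^{ab}\ket{b_\text{GKP}^{(d)}}$---precisely the logical action of $\hat F_d$ from Eq.~\eqref{eq: dv Fourier transform}. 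Taking overlaps with the $\bra{a_\text{GKP}^{(d)}}$ appearing in $\hat\Pi$ then yields $\hat\Pi\hat F=\hat F_d\hat\Pi$, and transposing gives the companion relation $\hat F^\dagger\hat\Pi^\dagger=\hat\Pi^\dagger\hat F_d^\dagger$. With $\hat F_d$ and $\hat F_d^\dagger$ thus pulled outside, the integrand reads $\hat F_d\,\hat\Pi\,\hat D(R^{-1}\mathbf t)\hat\rho\,\hat D^\dagger(R^{-1}\mathbf t)\hat\Pi^\dagger\,\hat F_d^\dagger$, and I would change variables $\mathbf t'=R^{-1}\mathbf t$: the $\pi/2$ rotation is a measure-preserving symmetry of the fundamental cell $\mathbb T_\ell^{2n}=(-\ell/2,\ell/2]^{2n}$, mapping each square onto itself bijectively with unit Jacobian, so both the domain and the measure are unchanged and I recover exactly $\hat F_d\,\Tr_S(\hat\rho)\,\hat F_d^\dagger$.

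I expect the main obstacle to be the rigorous justification of the intertwining relation $\hat\Pi\hat F=\hat F_d\hat\Pi$. The GKP code words are non-normalizable distributions, $\hat F$ sends the position-comb computational states into momentum-comb states that must be re-identified with computational states through the self-duality of the square lattice, and the phase conventions---including the centered representative $\{a\}_d$ used elsewhere in the paper---must be tracked carefully so that the resulting kernel is exactly $\omega_d^{ab}$ rather than a spuriously phase-shifted version. Once the covariance is pinned down, the displacement-rotation and domain-invariance steps are routine, and the vanishing error follows because no approximation (such as truncating the $\{a\}_d$ support) is ever invoked for this purely passive gate.
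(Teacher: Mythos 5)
Your proposal is correct and follows essentially the same route as the paper's proof in Appendix~\ref{sec:appendix-proof-error-fourier-transform}: the displacement covariance $\hat F^\dagger \hat D(\mathbf t)\hat F$ turning into a rotated displacement, the intertwining relation $\hat\Pi\hat F=\hat F_d\hat\Pi$ obtained from the self-duality of the square GKP lattice (the paper computes $\hat F\hat\Pi^\dagger=\frac{1}{\sqrt d}\sum_{a,b}\omega_d^{ab}\ketbra{b_{\text{GKP}}^{(d)}}{a}$ exactly as you sketch), and a measure-preserving relabeling of the integration variables on the symmetric cell $\mathbb T_\ell^{2n}$ to recover $\Tr_S(\hat\rho)$. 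The only cosmetic difference is that the paper conjugates by $\hat F_d^\dagger$ and shows the result equals $\Tr_S(\hat\rho)$, whereas you prove the equivalent identity $\Tr_S(\hat F\hat\rho\hat F^\dagger)=\hat F_d\Tr_S(\hat\rho)\hat F_d^\dagger$ directly.
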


Similarly to the Fourier transform, we find that displacement is also invariant under the SSD, resulting in an error of zero.
\begin{table*}[ht]
    \centering
    \renewcommand{\arraystretch}{1.3}
    \setlength{\tabcolsep}{4pt}
    \begin{tabular}{|c|c|c|c|c|}
        \hline
        & RCVQC & CCVQC & MCVQC & DVQC \\
        \hline
        States &
        \multicolumn{3}{c|}{
          \begin{minipage}[c][1.1cm][c]{3.5cm}
              \centering
              $\hat \rho \in L^{2n}(\mathbb R)$
          \end{minipage}
        } &
        \begin{minipage}[c][1.1cm][c]{3.5cm}
            \centering
            $\hat \rho \in \mathbb C^{dn}$
        \end{minipage} \\
        \hline
        Operations &
        \begin{minipage}[c][1.1cm][c]{3.5cm}
            \centering
            $\hat{\tilde U} \in \mathcal U_{E}$
        \end{minipage} &
        \multicolumn{2}{c|}{
          \begin{minipage}[c][1.1cm][c]{3.5cm}
              \centering
              $\hat U \in \mathcal U_{C}$
          \end{minipage}
        } &
        \begin{minipage}[c][1.1cm][c]{3.5cm}
            \centering
            $\hat U_d$
        \end{minipage} \\
        \hline
        Measurements &
        \begin{minipage}[c][1.1cm][c]{3.5cm}
            \centering
            $\hat{\tilde K}_{\mathbf{\bar x}}, \hat{\tilde K}_-$
        \end{minipage} &
        \begin{minipage}[c][1.1cm][c]{3.5cm}
            \centering
            $\hat{\tilde K}_{\mathbf{\bar x}}$
        \end{minipage} &
        \begin{minipage}[c][1.1cm][c]{3.5cm}
            \centering
            $\hat K_{\bar{\mathbf x}}$
        \end{minipage} &
        \begin{minipage}[c][1.1cm][c]{3.5cm}
            \centering
            $\ketbra{\mathbf a}{\mathbf a}$
        \end{minipage} \\
        \hline
        \shortstack{\text{Measurement} \\ \text{outcomes}\vspace{0.95cm}}  &
        \includegraphics{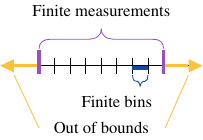} &
    \includegraphics{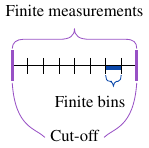} &
    \includegraphics{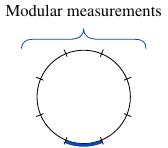} &
    \includegraphics{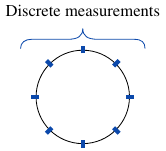}\\
        \hline
    \end{tabular}
\caption{Table highlighting the differences and similarities between each of the four models defined in this work. Here, we denote the set of operations of RCVQC as $\mathcal U_E$ representing the sequence of operations for which the energy never increases beyond $E^*$. The set of operations of CCVQC and MCVQC is represented as $\mathcal U_C$, which contains the sequence of operations in $\mathcal U_E$ interlaced with cut-off projection operators. The measurement operators $\hat {\tilde K}_{\mathbf x}$ are the POVM operators corresponding to bins in position space, $\hat {\tilde K}_{-}$ represents an ``out of bounds'' (i.e. overflow) measurement operator and $\hat {K}_{\mathbf x}$ represents modular measurement operators.}
\label{table:models}
\end{table*}
\begin{lemma}
    \label{lemma:error-displacement}
    The result of the SSD of the CV displacement $\hat D(\mathbf r)$ acting on states $\hat \rho$ with energy $E_{\hat \rho}$ can be obtained exactly by performing the SSD on the state, followed by performing the DV displacement operators.
\end{lemma}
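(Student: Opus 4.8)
The plan is to prove the operator-level identity
\[
\Tr_S\!\big(\hat D(\mathbf r)\,\hat\rho\,\hat D^\dagger(\mathbf r)\big)=\hat D_d(\mathbf r)\,\Tr_S(\hat\rho)\,\hat D_d^\dagger(\mathbf r),
\]
where $\hat D_d(\mathbf r)$ is the DV displacement assembled from $\hat Z_d(s)$ and its Fourier conjugate as constructed in Sec.~\ref{sec:approx-to-dv}; the claimed zero error then follows because the trace distance between equal operators vanishes. I would first substitute the displaced state into the definition of the SSD in Eq.~\eqref{eq:ssd} and merge the circuit displacement $\hat D(\mathbf r)$ with the twirling displacement $\hat D(\mathbf t)$ using the Weyl composition law $\hat D(\mathbf t)\hat D(\mathbf r)=e^{\frac{i}{2}\mathbf t^{T}\Omega\mathbf r}\hat D(\mathbf t+\mathbf r)$. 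The key point is that the cocycle phase $e^{\frac{i}{2}\mathbf t^{T}\Omega\mathbf r}$ occurs with its complex conjugate on the bra side and cancels identically, so the problem collapses to understanding the twirl with its integration variable rigidly shifted, $\mathbf t\mapsto\mathbf t+\mathbf r$.

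Since $\hat D(\mathbf r)$ factorizes, up to a scalar Weyl phase, into a position-quadrature phase $\hat Z(s)=e^{is\hat q}$ and a shift $\hat X(s)=e^{-is\hat p}$, I would treat the two pieces separately. For the phase part I would evaluate the qudit matrix element $\bra{a_{\text{GKP}}}\hat D(\mathbf t)\,\hat Z(s)\hat\rho\hat Z^\dagger(s)\,\hat D^\dagger(\mathbf t)\ket{b_{\text{GKP}}}$ directly in the position representation, writing $\hat Z(s)$ as multiplication by $e^{is(x-y)}$ on the kernel $\rho(x,y)=\bra{x}\hat\rho\ket{y}$. Within the fundamental window identified in Sec.~\ref{sec:approx-to-dv}, each comb $\ket{a_{\text{GKP}}}$ contributes only its central tooth at $x=\ell\{a\}_d$, so the relative coordinate $x-y$ in the extra phase is pinned to $\ell(\{a\}_d-\{b\}_d)$; the factor produced is therefore exactly $e^{is\ell(\{a\}_d-\{b\}_d)}$, which is precisely the action of $\hat Z_d(s)(\cdot)\hat Z_d^\dagger(s)$ on the $(a,b)$ entry, while the remaining integral reproduces $\Tr_S(\hat\rho)$ untouched. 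The shift part $\hat X(s)$ is the Fourier conjugate of $\hat Z(s)$, so its SSD-exactness follows for free from the phase case together with Lemma~\ref{lemma:error-Fourier-transform}, with the matching DV operator being the Fourier conjugate of $\hat Z_d(s)$. Composing the two factors and tracking the single overall Weyl phase (which again cancels under conjugation) yields the general $\hat D(\mathbf r)$.

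The step I expect to be the main obstacle is the sub-unit-cell (fractional) part of $\mathbf r$. For a lattice displacement $\boldsymbol\mu\in\ell\mathbb Z^{2n}$ the argument is immediate: $\hat D(\boldsymbol\mu)$ preserves the GKP code space and realizes an element of $\mathcal P_d$, so $\hat\Pi\hat D(\boldsymbol\mu)$ intertwines with that Pauli and the shifted torus integral is merely a relabelling of cells. For generic $\mathbf r$, however, $\hat D(\mathbf r)$ carries the infinite comb off the GKP lattice, so $\bra{a_{\text{GKP}}}\hat D(\mathbf r)$ has no overlap with any $\ket{b_{\text{GKP}}}$ and the naive intertwining $\hat\Pi\hat D(\mathbf r)=\hat D_d(\mathbf r)\hat\Pi$ is \emph{false}; one must instead show that the twirl over the fundamental cell reorganizes these off-lattice contributions into exactly the DV phases with no residue. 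I expect this to hinge on the displacement generator being \emph{linear} in the quadratures: the phase it imprints on each comb tooth is affine in position, so the non-periodicity that produces a genuine error for the quadratic shear and cubic phase gates here degenerates, after the cell twirl and the central-tooth support, into a global logical relabelling and phase rather than a nonzero approximation error. Making this degeneration rigorous---controlling the non-central teeth and the domain-folding of the shifted torus---is the part that requires care, and is where the proof in Appendix~\ref{appendix: errors} does the real work.
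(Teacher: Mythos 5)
Your proposal is correct, and its overall skeleton matches the paper's: both reduce the general $\hat D(\mathbf r)$ to a position-quadrature phase $\hat Z(s)=e^{is\hat q}$ plus its Fourier conjugate, invoking Lemma~\ref{lemma:error-Fourier-transform} for the conjugated factor (the paper writes the decomposition as $\hat Z(r_q)\hat F\hat Z(r_p)\hat F^\dagger$ up to a global phase). Where you genuinely differ is in how the exactness of the $\hat Z(s)$ part is established. The paper does not compute SSD matrix elements at all: it simply specializes the generalized phase-gate machinery of Appendix~\ref{sec:error-gen-phase}, where the error of any $e^{if(\hat{\mathbf q})}$ is bounded by $\frac{1}{\ell}\int \dd\mathbf t\,\bigl\|\bigl[\hat\rho,\,f(\hat{\mathbf q})-f(\hat{\mathbf q}+\mathbf t_q)\bigr]\bigr\|_1$, and for linear $f(\hat q)=s\hat q$ the difference $f(\hat q)-f(\hat q+t_q)=-st_q$ is a c-number, so the commutator---and hence the bound---vanishes identically. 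Your route instead proves the operator identity directly: window support pins each GKP comb to its central tooth, so the kernel phase $e^{is(x-y)}$ is pinned to $e^{is\ell(\{a\}_d-\{b\}_d)}$ \emph{independently of the twirl variable} $\mathbf t$, factoring out of the integral as conjugation by $\hat Z_d(s)$. These are two faces of the same cancellation (the $t_q$-independence of your pinned phase is exactly the c-number property of the paper's Hamiltonian difference), but your version makes the mechanism more transparent and yields the identity at the matrix-element level, while the paper's buys a two-line proof as a corollary of machinery it needs anyway for the shear, cubic, and controlled-$Z$ gates. Two points to tighten: (i) your central-tooth step silently requires $\hat\rho=\hat\Lambda_d\hat\rho\hat\Lambda_d$, the same standing assumption the paper makes when expanding Eq.~\eqref{eq:ssderror} in Appendix~\ref{sec:error-gen-phase}---state it, since for an unprojected state the non-central teeth you worry about in your last paragraph would not drop out so cleanly; (ii) your opening Weyl-composition reduction (shifting the torus variable $\mathbf t\mapsto\mathbf t+\mathbf r$) is a valid alternative start but is never completed---the domain-folding of the shifted torus into logical Pauli relabellings is precisely the hard part you flag---so as written it is dispensable scaffolding, and your actual proof is carried entirely by the factorized computation, which suffices.
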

This is shown by first comparing displacement only in the momentum quadrature and then using the Fourier transform and Lemma~\ref{lemma:error-Fourier-transform} to recover the general displacement.

The shear gate is not invariant under the SSD, leading to an error when comparing the effects of the CV and the DV shear gate through the SSD. However, this error can be bound in terms of the energy of the state, the amount of shearing, and the dimension of the DV system.
\begin{lemma}
    \label{lemma:error-shear-gate}
    The result of the SSD of the CV shear gate $\PhaseGate*[s]$ with arbitrary phase $s$ acting on states $\hat \rho$ with energy $E_{\hat \rho}$, can be approximated by performing the SSD on the state, followed by performing the DV shear gate $\PhaseGate[s]$ with a trace distance bounded by $|s| \frac{\pi}{d}  \sqrt{E_{\hat \rho}}/\sqrt 2$.
\end{lemma}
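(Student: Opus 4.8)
The plan is to work directly with the operator-sum form of the SSD, $\mathrm{Tr}_S(\hat\rho)=\frac{1}{\ell}\int_{\mathbb T_\ell^2}\mathrm d\mathbf t\,\hat\Pi\hat D(\mathbf t)\hat\rho\hat D^\dagger(\mathbf t)\hat\Pi^\dagger$, and to move the CV shear $\hat P(s)=e^{is\hat q^2/2}$ from inside the decomposition (acting on $\hat\rho$) to the outside, where it should reduce to the DV shear $\hat P_d(s)$. First I would reduce to the single mode on which $\hat P(s)$ acts, since the SSD factorizes across modes, bounding that mode's position variance by the total energy, $\langle\hat q^2\rangle\le 2E_{\hat\rho}$. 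Writing $\hat A(\mathbf t)=\hat P_d(s)\hat\Pi\hat D(\mathbf t)$ and $\hat B(\mathbf t)=\hat\Pi\hat D(\mathbf t)\hat P(s)$, the quantity to control is $\tfrac12\|\tfrac1\ell\int\mathrm d\mathbf t\,[\hat A(\mathbf t)\hat\rho\hat A^\dagger(\mathbf t)-\hat B(\mathbf t)\hat\rho\hat B^\dagger(\mathbf t)]\|_1$, which I would estimate after manipulating the integral, using $\hat A\hat\rho\hat A^\dagger-\hat B\hat\rho\hat B^\dagger=(\hat A-\hat B)\hat\rho\hat A^\dagger+\hat B\hat\rho(\hat A-\hat B)^\dagger$ and, at the final step, the Cauchy--Schwarz inequality \eqref{eq:Cauchy-Schwarz}.

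The key algebraic step is to commute the shear through the displacement. Since $\hat P(s)^\dagger\hat p\hat P(s)=\hat p+s\hat q$, one has $\hat P(s)^\dagger\hat D(t_q,t_p)\hat P(s)=\hat D(t_q,t_p-st_q)$, so that $\hat\Pi\hat D(\mathbf t)\hat P(s)=\hat\Pi\hat P(s)\hat D(t_q,t_p-st_q)$. I would then establish the near-intertwining $\hat\Pi\hat P(s)=\hat P_d(s)\hat\Pi+\hat E$, where the defect $\hat E$ is supported only on the non-central peaks of the GKP combs (those at $\hat q=\ell(md+\{a\}_d)$ with $m\neq0$): on those peaks the quadratic phase $e^{is\ell^2(md+\{a\}_d)^2/2}$ departs from the central value used by $\hat P_d(s)$. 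The contribution of $\hat E$ is governed by the weight of $\hat\rho$ at $|\hat q|\gtrsim \ell d/2$, which I would suppress by Chebyshev's inequality using $\langle\hat q^2\rangle\le 2E_{\hat\rho}$, leaving it negligible in the regime of interest.

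The dominant error is then the shift $t_p\mapsto t_p-st_q$ of the momentum component of the integration domain $\mathbb T_\ell^2$, which produces two boundary slivers of width $\sim|st_q|$ at $t_p=\pm\ell/2$. Here I would exploit the momentum-periodicity of the GKP lattice: translating the integrand by a full momentum period $\ell$ conjugates it by the logical operator $\hat Z_d=e^{2\pi i\hat j/d}$, with $\hat j=\sum_a\{a\}_d\ketbra{a}{a}$. Consequently the two slivers differ by conjugation by $\hat Z_d$, and since $\hat Z_d=e^{2\pi i\hat j/d}\approx I+\tfrac{2\pi i}{d}\hat j$ is close to the identity on the low-logical-value sector populated by finite-energy states, their difference is of order $\tfrac{2\pi}{d}[\hat j,\,\cdot\,]$; this is the origin of the $1/d$ factor. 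Bounding the resulting commutator by Cauchy--Schwarz with $\langle\hat j^2\rangle\lesssim\langle\hat q^2\rangle/\ell^2\le 2E_{\hat\rho}/\ell^2$, combining with the sliver width $|st_q|\le|s|\ell/2$, and using $\ell^2=2\pi/d$, should assemble to the stated bound $|s|\tfrac{\pi}{d}\sqrt{E_{\hat\rho}}/\sqrt2$.

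The hard part will be the last step: isolating the genuine $1/d$ contribution (the twisted-periodicity commutator) from the raw sliver weight, which on its own scales only as $1/\sqrt d$. Obtaining the sharper rate hinges on the precise normalization of the SSD integrand and on the behaviour of the folded distribution at the momentum-cell boundary---features tied to the non-normalizability of the ideal GKP states and to the regularization of Ref.~\cite{Shaw2024}. Ensuring that the energy dependence is linear in $\sqrt{E_{\hat\rho}}$ (a single power of $\hat j\sim\hat q/\ell$) and pinning the exact constant $\pi/\sqrt2$ is where the analysis must be carried out with care.
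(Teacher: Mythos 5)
Your handling of the dominant term is a genuinely different route from the paper's, and it is the harder of the two. The paper never deforms the integration domain: instead of conjugating $\hat D(\mathbf t)$ by the shear to get $t_p\mapsto t_p-st_q$, it commutes the displacement through the phase on the position side, $e^{if(\hat q)}\hat D(\mathbf t)=\hat D(\mathbf t)e^{if(\hat q+t_q)}$, so the entire effect collapses to a position-diagonal residual phase $e^{i[f(\hat q)-f(\hat q+t_q)]}$ conjugating $\hat\rho$. For $f(\hat q)=s\hat q^2/2$ the effective Hamiltonian is $-st_q\hat q$ up to an irrelevant constant, and the chain $\|\hat\rho-e^{i\hat H}\hat\rho e^{-i\hat H}\|_1\leq\|[\hat\rho,\hat H]\|_1$, $\|[\hat\rho,\hat q]\|_1\leq 2\sqrt{2E_{\hat\rho}}$, $\frac1\ell\int_{\mathbb T_\ell^2}\dd\mathbf t\,|t_q|=\ell^2/4=\pi/(2d)$ delivers the bound directly: the $1/d$ comes from the commutator being \emph{linear} in $t_q$ and $\ell^2=2\pi/d$, with no sliver bookkeeping, no quasi-periodicity of the integrand, and no expansion of $\hat Z_d$. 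Your sliver-cancellation mechanism (the two boundary slivers differing by conjugation with $\hat Z_d=e^{2\pi i\hat j/d}$, contributing $\sim|st_q|\cdot\frac{2\pi}{d}\cdot\sqrt{2E_{\hat\rho}}/\ell$ per unit $t_q$) does assemble to the same rate and constant if carried out carefully, including the multi-period wrap-around when $|st_q|>\ell$; but it is a roundabout reconstruction of what the paper's substitution gives for free.

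The genuine gap is your treatment of the intertwining defect $\hat E$ in $\hat\Pi\hat P(s)=\hat P_d(s)\hat\Pi+\hat E$. For arbitrary $s$ the quadratic phase is not quasi-periodic on the GKP lattice, so on the $m\neq 0$ peaks the CV phase differs by $O(1)$ from the DV value, and the ideal GKP bras weight all peaks equally: $\hat E$ is not small as an operator. Suppressing its contribution by the state's tail weight, as you propose, fails at the claimed rate: Chebyshev gives $\Pr(|\hat q|\geq d\ell/2)\leq 4E_{\hat\rho}/(\pi d)$, and converting that probability into a trace-distance cost (gentle measurement) takes a square root, yielding a term of order $\sqrt{E_{\hat\rho}/d}$ that dominates the stated $|s|\frac{\pi}{d}\sqrt{E_{\hat\rho}/2}$ for large $d$. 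The paper does not estimate this tail at all inside the gate lemmas: these lemmas are applied to states already satisfying $\hat\rho=\hat\Lambda_d\hat\rho\hat\Lambda_d$ (the cut-off is built into the CCVQC/MCVQC models, and Appendix~\ref{sec:error-gen-phase} explicitly uses this), under which $\hat\Pi\hat\Lambda_d\hat P(s)=\hat P_d(s)\hat\Pi\hat\Lambda_d$ holds \emph{exactly}, since $\hat\Pi\hat\Lambda_d=\sum_a\ketbra{a}{\hat q=\{a\}_d\ell}$ retains only the central peak of each comb. The $1/\sqrt d$ price of truncation is then paid once, globally, in Lemma~\ref{lemma:RCVQC-CCVQC}, rather than per gate. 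To rescue your argument at the stated rate you must either import that support assumption (making $\hat E\equiv 0$) or prove the defect contributes at order $1/d$, which a plain Chebyshev tail bound cannot do.
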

To prove this Lemma, we have used commutation relations between the shear gate and displacement, as well as with the GKP projector. We have also used the fact that we can bound the difference between a state and a slightly evolved state in terms of the commutator between the state and the Hamiltonian.

Similarly, there will be an error for the CV cubic phase gate, also bound in terms of the energy of the state, the cubic phase parameter, and the dimension of the DV system.
\begin{lemma}
    \label{lemma:error-cubic-phase}
    The result of the SSD of the CV cubic phase gate $\hat C(\gamma)$ acting on states $\hat \rho$ with energy $E_{\hat \rho}$, in the CCVQC model, can be approximated by performing the SSD on the state, followed by performing the DV cubic phase gate $\hat C_d(\gamma)$ with trace distance bounded by $5|\gamma|\pi^{3/2} \sqrt{E_{\hat \rho}/d}$.
\end{lemma}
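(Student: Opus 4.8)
\emph{Setup and reduction.} The plan is to follow the template of the shear-gate estimate (Lemma~\ref{lemma:error-shear-gate}), since $\hat C(\gamma)=e^{i\gamma\hat q^3}$ is, like the shear gate, diagonal in the position quadrature. The goal is to bound $\tfrac12\big\|\hat C_d(\gamma)\,\mathrm{Tr}_S(\hat\rho)\,\hat C_d^\dagger(\gamma)-\mathrm{Tr}_S(\hat C(\gamma)\hat\rho\hat C^\dagger(\gamma))\big\|_1$. First I would substitute the definition of the SSD, Eq.~\eqref{eq:ssd}, into both terms. Because $\hat C_d(\gamma)$ is a fixed operator it commutes with the integral over $\mathbf t$ and can be pulled inside; using that the trace norm is unitarily invariant and jointly convex in the integrand, the problem reduces to bounding $\tfrac{1}{\ell^n}\int_{\mathbb T_\ell^{2n}}\mathrm d\mathbf t\,\tfrac12\big\|\hat C_d\,\hat\Pi\hat D(\mathbf t)\hat\rho\hat D^\dagger(\mathbf t)\hat\Pi^\dagger\,\hat C_d^\dagger-\hat\Pi\hat D(\mathbf t)\hat C\hat\rho\hat C^\dagger\hat D^\dagger(\mathbf t)\hat\Pi^\dagger\big\|_1$.

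\emph{Intertwining via commutation relations.} The core is to compare $\hat\Pi\hat D(\mathbf t)\hat C(\gamma)$ with $\hat C_d(\gamma)\hat\Pi\hat D(\mathbf t)$, which I would do in two steps. Commutation with the displacement: since $\hat C(\gamma)$ is a function of $\hat q$, it commutes with the position displacement $\hat Z(t_p)$ (which cancels against its conjugate) and is merely shifted by the momentum displacement $\hat X(t_q)$, so $\hat D^\dagger(\mathbf t)\hat C(\gamma)\hat D(\mathbf t)=e^{i\gamma(\hat q+t_q)^3}$, where only the sub-cell position shift $|t_q|\le\ell/2$ survives (consistent with displacements being SSD-exact, Lemma~\ref{lemma:error-displacement}). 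Commutation with the GKP projector: by the restricted-region property of Sec.~\ref{sec:approx-to-dv}, in the central fundamental domain the only peak of $\ket{a_{\mathrm{GKP}}^{(d)}}$ sits at $\hat q=\ell\{a\}_d$, so $\hat\Pi\,e^{i\gamma(\hat q+t_q)^3}=\hat C_d(\gamma)\hat W(\mathbf t)\hat\Pi$ there, where $\hat W(\mathbf t)=\exp\!\big(i\gamma[(\hat Q+t_q)^3-\hat Q^3]\big)$ is a residual diagonal qudit unitary and $\hat Q=\sum_a\ell\{a\}_d\ketbra{a}{a}$ is the encoded position. The integrand then collapses (using unitarity of $\hat C_d$) to $\tfrac12\|\hat\sigma(\mathbf t)-\hat W(\mathbf t)\hat\sigma(\mathbf t)\hat W^\dagger(\mathbf t)\|_1$ with $\hat\sigma(\mathbf t)=\hat\Pi\hat D(\mathbf t)\hat\rho\hat D^\dagger(\mathbf t)\hat\Pi^\dagger$.

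\emph{Slightly-evolved-state bound and energy.} Writing $\hat W(\mathbf t)=e^{i\hat A(\mathbf t)}$ with $\hat A(\mathbf t)=\gamma[(\hat Q+t_q)^3-\hat Q^3]$, I would use the commutator estimate $\tfrac12\|\hat\sigma-\hat W\hat\sigma\hat W^\dagger\|_1\le\tfrac12\|[\hat A,\hat\sigma]\|_1$, then apply Cauchy--Schwarz, Eq.~\eqref{eq:Cauchy-Schwarz}, to reduce this to a moment of the generator, $\tfrac12\|[\hat A,\hat\sigma]\|_1\le\sqrt{\mathrm{Tr}(\hat\sigma\hat A^2)}$. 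To leading order $\hat A(\mathbf t)\approx 3\gamma t_q\hat Q^2$, so this produces a fourth moment of $\hat Q$. I would then relate the moment to the energy via $\langle\hat q^2\rangle_{\hat\rho}\le2E_{\hat\rho}$ (Sec.~\ref{sec:cvqc}) and finally integrate over the cell, using $\tfrac1{\ell^n}\int_{\mathbb T_\ell^{2n}}|t_q|\,\mathrm d\mathbf t=\ell^2/4$ and substituting $\ell=\sqrt{2\pi/d}$; if a bound on $|\gamma|$ is needed it is supplied by the maximum-cubicity result $|\gamma|\le 8E^{*3/2}$ of Appendix~\ref{sec:Max-energy}.

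\emph{Main obstacle.} The crux, and the feature that separates this Lemma from the shear case, is the fourth-moment $\sqrt{\mathrm{Tr}(\hat\sigma\hat Q^4)}$ generated by the quadratic-in-$\hat Q$ residual: it cannot be reduced to the energy directly, since the energy only controls the second moment. The key technical step is to spend the operator cutoff coming from the finite encoding dimension, $\hat Q^2\le(\ell d/2)^2$, to trade one factor of $\hat Q^2$ for the cell size $(\ell d/2)^2\propto d$, leaving $\sqrt{\mathrm{Tr}(\hat\sigma\hat Q^4)}\le (\ell d/2)\sqrt{\langle\hat q^2\rangle}$. It is precisely this cutoff-for-moment exchange that converts the quartic moment into $\sqrt{d}\cdot\sqrt{E_{\hat\rho}}$ and, combined with the $\ell^2/4\propto d^{-1}$ from the cell average, yields the advertised $d^{-1/2}$ suppression $|\gamma|\pi^{3/2}\sqrt{E_{\hat\rho}/d}$ rather than the $d^{-1}$ scaling seen for the shear gate. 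The remaining care points are the normalization of the subnormalized $\hat\sigma(\mathbf t)$ under the cell integral and confirming that the momentum-quadrature part of $\hat D(\mathbf t)$ contributes nothing beyond Lemmas~\ref{lemma:error-Fourier-transform} and \ref{lemma:error-displacement}.
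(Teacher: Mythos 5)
Your proposal follows essentially the same route as the paper's Appendix~\ref{appendix: errors} (the generalized phase gate with $f(\hat{\mathbf q})=\gamma\hat q^3$): commute the gate through $\hat D(\mathbf t)$ and the GKP projector, apply the slightly-evolved-state bound $\|\hat\rho-e^{i\hat H}\hat\rho e^{-i\hat H}\|_1\le\|[\hat\rho,\hat H]\|_1$, trade one power of position for the cutoff $d\ell/2$, control the remainder by $\sqrt{E_{\hat\rho}}$ via Cauchy--Schwarz, and integrate $|t_q|$ over the unit cell to obtain the $d\ell^3\propto d^{-1/2}$ scaling. The only cosmetic differences are that the paper keeps the residual phase on the CV side (stripping $\hat\Pi$ with $\|\hat\Pi\|_\infty\le1$) and applies the cutoff inside the commutator product rule rather than to a fourth moment of $\hat Q$, and it explicitly retains the subleading $3\gamma t_q^2\hat q$ term that you drop ``to leading order'' --- none of which changes the mechanism or the resulting bound.
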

The proof of this Lemma follows the same structure as the proof of the shear gate error, but for a different Hamiltonian. Given the maximum cubicity bounded by $8E^{*3/2}$, we see that the trace distance, independent of cubicity, is upper bounded by
\begin{align}
    223 E^{*2}/\sqrt{d}.
\end{align}

As for the shear gate and the cubic phase gate, we also get an error for the two-mode controlled-$Z$ gate bound in terms of the energy of the state, the phase parameter, and the dimension of the DV system.
\begin{lemma}
    \label{lemma:error-controlled-z}
    The result of the SSD of the CV controlled-$Z$ gate $\CZGate*(s)$ acting on states $\hat \rho$ with energy $E_{\hat \rho}$, can be approximated by performing the SSD on the state, followed by performing the DV controlled-$Z$ gate $\CZGate(s)$ with a trace distance bound by $ 2 |s| \frac{\pi^{3/2}}{d^{3/2}} \sqrt{E_{\hat \rho}}$.
\end{lemma}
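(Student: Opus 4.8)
The plan is to follow the template of the proof of Lemma~\ref{lemma:error-shear-gate}, replacing the single-mode shear generator $s\hat q^2/2$ by the two-mode generator $s\hat q_k\hat q_l$. I would begin by writing the left-hand quantity through the SSD integral of Eq.~\eqref{eq:ssd},
\begin{align}
    \Tr_S\!\left(\CZGate*(s)\,\hat\rho\,\CZGate*(s)^\dagger\right)=\frac{1}{\ell^n}\int_{\mathbb{T}_\ell^{2n}}\!\mathrm{d}\mathbf t\;\hat\Pi\hat D(\mathbf t)\,\CZGate*(s)\,\hat\rho\,\CZGate*(s)^\dagger\hat D^\dagger(\mathbf t)\hat\Pi^\dagger,
\end{align}
and comparing it with $\CZGate(s)\,\Tr_S(\hat\rho)\,\CZGate(s)^\dagger$. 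The first step is to commute the CV gate $\CZGate*(s)=e^{is\hat q_k\hat q_l}$ through the displacement inside the SSD. Because the gate is diagonal in the positions, only the momentum components of $\hat D(\mathbf t)$ act nontrivially, and conjugation gives the clean identity $\hat D(\mathbf t)\,e^{is\hat q_k\hat q_l}\,\hat D^\dagger(\mathbf t)=e^{is(\hat q_k-t_{p,k})(\hat q_l-t_{p,l})}$, where $t_{p,k},t_{p,l}\in(-\ell/2,\ell/2]$ are the momentum shifts of the $k$-th and $l$-th cells.

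Next I would invoke the restricted-region property established in Sec.~\ref{sec:approx-to-dv}: on the support selected by $\hat\Pi$, the operators $\hat q_k,\hat q_l$ may be replaced by their lattice values $\ell\{a\}_d,\ell\{b\}_d$. Expanding the exponent, the central bilinear term reproduces exactly the DV gate $\CZGate(s)$, while the remaining contributions collect into a residual phase $e^{i\hat G(\mathbf t)}$ generated by
\begin{align}
    \hat G(\mathbf t)=-s\,t_{p,l}\,\hat q_k-s\,t_{p,k}\,\hat q_l+s\,t_{p,k}t_{p,l},
\end{align}
which is \emph{linear} in the quadratures and vanishes as the cell width $\ell\to0$. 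This yields $\hat\Pi\hat D(\mathbf t)\,\CZGate*(s)=\CZGate(s)\,e^{i\hat G(\mathbf t)}\,\hat\Pi\hat D(\mathbf t)$ on the relevant support. Substituting into the integral and using unitary invariance of the trace norm under $\CZGate(s)$ together with the triangle inequality, the problem reduces to the $\mathbf t$-averaged distance between $\hat\sigma(\mathbf t)=\hat\Pi\hat D(\mathbf t)\hat\rho\hat D^\dagger(\mathbf t)\hat\Pi^\dagger$ and its conjugate,
\begin{align}
    \tfrac12\left\|\Tr_S(\CZGate*(s)\hat\rho\,\CZGate*(s)^\dagger)-\CZGate(s)\Tr_S(\hat\rho)\CZGate(s)^\dagger\right\|_1\le\frac{1}{2\ell^n}\int_{\mathbb{T}_\ell^{2n}}\!\mathrm{d}\mathbf t\;\left\|e^{i\hat G(\mathbf t)}\hat\sigma(\mathbf t)e^{-i\hat G(\mathbf t)}-\hat\sigma(\mathbf t)\right\|_1.
\end{align}

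The core estimate is then the same ``slightly-evolved state'' bound used for the shear gate: for a Hermitian generator $\hat G$ and a subnormalized state $\hat\sigma$ one has $\tfrac12\|e^{i\hat G}\hat\sigma e^{-i\hat G}-\hat\sigma\|_1\le\sqrt{\Tr(\hat G^2\hat\sigma)}$, which follows from the integral representation of the commutator and the Cauchy--Schwarz inequality of Eq.~\eqref{eq:Cauchy-Schwarz}. Since $\hat G(\mathbf t)$ is linear in $\hat q_k,\hat q_l$, its second moment is controlled directly by the position variances; using $|t_{p,k}|,|t_{p,l}|\le\ell/2$ and the energy bound $\langle\hat q_k^2\rangle,\langle\hat q_l^2\rangle\le 2E_{\hat\rho}$, and relating the quadrature moments of $\hat\sigma(\mathbf t)$ back to $E_{\hat\rho}$ uniformly in $\mathbf t$, collecting the cell-width factors through $\ell=\sqrt{2\pi/d}$ reproduces the claimed bound $2|s|\pi^{3/2}d^{-3/2}\sqrt{E_{\hat\rho}}$, where the factor of two reflects the two symmetric single-mode contributions $t_{p,l}\hat q_k$ and $t_{p,k}\hat q_l$.

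The main obstacle I anticipate is making the restricted-region replacement rigorous rather than heuristic. The projector $\hat\Pi$ has support on the whole Dirac comb, so one must show that the tails lying outside the window $(\lfloor d/2\rfloor-\tfrac12)\ell\le\hat q\le(\lceil d/2\rceil+\tfrac12)\ell$ contribute at most at the same energy-dependent order, and one must verify that the quadrature second moments of the subnormalized $\hat\sigma(\mathbf t)$ are bounded by $E_{\hat\rho}$ uniformly over the integration domain. The two-mode cross-terms also require careful bookkeeping to track the constants and the precise power of $\ell$; by contrast, the linearity of $\hat G$ is what keeps the dependence on the state at the level of $\sqrt{E_{\hat\rho}}$ rather than a higher moment, mirroring the shear-gate case.
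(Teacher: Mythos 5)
Your proposal is correct and follows essentially the same route as the paper's proof: the paper treats $\CZGate*(s)$ as an instance of its generalized phase gate $e^{if(\hat{\mathbf q})}$, commutes it through the displacement inside the SSD to leave a residual generator linear in $\hat q_k,\hat q_l$ with coefficients bounded by $\ell/2$, bounds the slightly-evolved state via $\|\hat\rho-e^{i\hat H}\hat\rho e^{-i\hat H}\|_1\leq\|[\hat\rho,\hat H]\|_1$ and $\|[\hat\rho,\hat q]\|_1\leq2\sqrt{2E_{\hat\rho}}$ (your second-moment inequality $\sqrt{\Tr(\hat G^2\hat\sigma)}$ is the same Cauchy--Schwarz estimate in different packaging and yields the identical constant), and integrates over the cell to collect the $\ell^3\propto d^{-3/2}$ factor with the two symmetric terms. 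The obstacle you flag---making the lattice replacement rigorous despite the full Dirac-comb support of $\hat\Pi$---is resolved in the paper not by tail estimates but exactly, by inserting the cut-off projector $\hat\Lambda_d$ and using $\hat\rho=\hat\Lambda_d\hat\rho\hat\Lambda_d$ (valid for the CCVQC/MCVQC states to which the Lemma is applied), which gives $\hat\Pi\hat\Lambda_d\hat P^f=\hat P^f_d\hat\Pi\hat\Lambda_d$ as an operator identity.
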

Again, the proof of this Lemma follows the same structure as the proof of the shear gate error, but for a different Hamiltonian, now over two modes.

We can also obtain an error for rotation.
\begin{lemma}
\label{lemma:error-rotation}
    The result of the SSD of the rotation gate $\hat R(\theta)$ with any angle $\theta$ acting on states $\hat \rho$ with energy $E_{\hat \rho}$, can be approximated by performing the SSD on the state, followed by performing an equivalent DV gate with an error bound by $52\frac{\sqrt{E_{\hat \rho}}}{d }$.
\end{lemma}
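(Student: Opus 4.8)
The plan is to reduce the rotation to the already-established shear and Fourier-transform lemmas rather than rerun the commutator argument of Lemma~\ref{lemma:error-shear-gate} directly for the generator $\tfrac12(\hat q^2+\hat p^2)$, because that generator is not diagonal in position and hence does not interact cleanly with the GKP projector appearing in the SSD. The starting observation is the symplectic factorization that writes any single-mode rotation as a product of three shears,
\begin{equation}
    \hat R(\theta)=\hat P(a)\,e^{ib\hat p^2/2}\,\hat P(a),\qquad a=-\tan(\theta/2),\ b=\sin\theta ,
\end{equation}
with $\hat P$ the shear gate of Eq.~\eqref{eq: shear gate}, which follows from the factorization of the $2\times2$ rotation matrix into lower- and upper-triangular symplectic matrices. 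Using the Fourier transform of Eq.~\eqref{eq: Fourier transform} to interchange the quadratures, $e^{ib\hat p^2/2}=\hat F\hat P(b)\hat F^\dagger$, this becomes a word of three $q$-shears and two Fourier transforms, $\hat R(\theta)=\hat P(a)\hat F\hat P(b)\hat F^\dagger\hat P(a)$. The natural DV analogue is the same word in the DV gates, $\hat R_d=\hat P_d(a)\hat F_d\hat P_d(b)\hat F_d^\dagger\hat P_d(a)$.

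The difficulty with this decomposition is that $a=-\tan(\theta/2)$ diverges as $\theta\to\pi$, whereas the target bound is uniform in $\theta$; so the first genuine step is to fold the angle into a bounded fundamental interval. Since $\hat F=\hat R(\pi/2)$ up to a global phase, and global phases cancel in $\hat U\hat\rho\hat U^\dagger$, I would write $\hat R(\theta)=\hat F^k\hat R(\theta')$ with $k\in\mathbb Z$ chosen so that $\theta'\in(-\pi/4,\pi/4]$. By Lemma~\ref{lemma:error-Fourier-transform} the SSD carries each $\hat F$ through exactly, and because $\hat F$ commutes with $\hat H$ it preserves the energy; hence the error incurred by $\hat R(\theta)$ equals that incurred by $\hat R(\theta')$, and it suffices to treat $|\theta'|\le\pi/4$, for which $|a|\le\tan(\pi/8)=\sqrt2-1$ and $|b|\le1/\sqrt2$.

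With a bounded rotation in hand, I would estimate the total error by the same telescoping argument the paper uses for gate composition: inserting the SSD after each successive gate of the five-gate word and using that conjugation by the remaining (unitary) DV gates preserves the trace norm, the error splits into a sum of five single-gate errors. The two Fourier steps contribute nothing by Lemma~\ref{lemma:error-Fourier-transform}, while each of the three $q$-shear steps is bounded by Lemma~\ref{lemma:error-shear-gate}, giving a contribution $|s_j|\tfrac{\pi}{d\sqrt2}\sqrt{E_{\sigma_{j-1}}}$, where $\sigma_{j-1}$ is the intermediate state on which that shear acts. The last ingredient is to control the intermediate energies: a shear $\hat P(s)$ sends $\hat p\mapsto\hat p\pm s\hat q$, so $E\mapsto E+\tfrac{s^2}{2}\langle\hat q^2\rangle\pm\tfrac{s}{2}\langle\{\hat q,\hat p\}\rangle\le(1+|s|+s^2)E$, using $\langle\hat q^2\rangle\le2E$ and $|\langle\{\hat q,\hat p\}\rangle|\le2\sqrt{\langle\hat q^2\rangle\langle\hat p^2\rangle}\le2E$ from Cauchy--Schwarz, while the intervening Fourier transforms leave the energy fixed. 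Chaining these bounds expresses all three shear energies as explicit constant multiples of $E_{\hat\rho}$, and summing the three contributions yields a bound of the advertised form $C\sqrt{E_{\hat\rho}}/d$.

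I expect the principal obstacle to be the angle-reduction step rather than any individual estimate: the $\tan(\theta/2)$ divergence is exactly what makes a naive three-shear decomposition fail to produce a $\theta$-independent constant, and it is the exactness of $\hat F$ under the SSD (Lemma~\ref{lemma:error-Fourier-transform}) that rescues the argument by confining the effective angle to $|\theta'|\le\pi/4$. The secondary, purely bookkeeping obstacle is that the shears are \emph{not} energy-preserving, so the error of the outermost shear is governed by an energy already inflated by the inner two; carrying the constants through this chain, together with the small shear parameters guaranteed by the fundamental-interval reduction, is what fixes the final numerical constant $52$.
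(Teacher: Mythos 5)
Your proposal is correct, and it follows the same skeleton as the paper's proof---reduce to $|\theta|\le\pi/4$ using the exactness of the Fourier transform under the SSD (Lemma~\ref{lemma:error-Fourier-transform}), decompose the residual rotation into shears interleaved with Fourier transforms, telescope the trace distance into a sum of single-gate errors bounded by Lemma~\ref{lemma:error-shear-gate}, and track the intermediate energies---but it genuinely differs in the decomposition and in one supporting estimate. You use the symmetric three-shear factorization $\hat R(\theta)=\hat P(a)\,\hat F\hat P(b)\hat F^\dagger\,\hat P(a)$, whereas the paper uses $\hat R(\theta)=\hat F\hat P(s_3)\hat F\hat P(s_2)\hat F\hat P(s_1)$ with $s_1=\sec\theta+\tan\theta$, $s_2=\cos\theta$, $s_3=\cos\theta+(1+\sin\theta)\tan\theta$, whose parameters on the fundamental interval can be as large as $\sqrt2+1$. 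One caveat on your version: with the paper's conventions ($\hat P(s)\colon\hat p\mapsto\hat p+s\hat q$ and $e^{ib\hat p^2/2}\colon\hat q\mapsto\hat q-b\hat p$), matching the symplectic matrices forces $ab=2\sin^2(\theta/2)$, so the correct parameters are $a=+\tan(\theta/2)$, $b=\sin\theta$ (your $a=-\tan(\theta/2)$ reproduces $\hat R(-\theta)$); this is a harmless convention slip since only $|a|\le\tan(\pi/8)=\sqrt2-1$ and $|b|\le1/\sqrt2$ enter the bound. You also rederive the shear energy growth as $E\mapsto(1+|s|+s^2)E$ via Cauchy--Schwarz rather than invoking Lemma~\ref{lemma:energy-shear}; your form is slightly tighter and, unlike the literal statement $(1+s)^2E$ of that Lemma, is manifestly valid for $s<0$ (the paper's rotation proof implicitly uses $(1+|s|)$ anyway). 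The payoff of your smaller shear parameters is a substantially better constant: your weighted sum $|a|+|b|\sqrt{1+|a|+a^2}+|a|\sqrt{(1+|a|+a^2)(1+|b|+b^2)}\approx 2.1$ yields a bound of roughly $5\sqrt{E_{\hat\rho}}/d$, compared with the paper's $|s_1|+(1+|s_1|)|s_2|+(1+|s_1|)(1+|s_2|)|s_3|<23$, which is what produces the constant $52$; both certainly establish the Lemma as stated.
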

To prove this error bound, we use the fact that rotation can be decomposed into Fourier transforms and shear gates, in conjunction with Lemma~\ref{lemma:error-Fourier-transform} and Lemma~\ref{lemma:error-shear-gate} as well as the bound on the energy of the state after applying a shear gate.

Being interested in the set specified by Eq.~(\ref{eq: RCVQC gate decomp}), in addition to displacement and the cubic phase gate, we also need the error for the Mach-Zehnder interferometer and squeezing.
\begin{lemma}
\label{lemma:error-beamsplitter}
    The result of the SSD of the Mach-Zehnder interferometer acting on states $\hat \rho$ with energy $E_{\hat \rho}$, can be approximated by performing the SSD on the state, followed by performing an equivalent DV gate with a trace distance of at most $108\frac{\sqrt{E_{\hat \rho}}}{d}$.
\end{lemma}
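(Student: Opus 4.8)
The plan is to decompose the Mach-Zehnder interferometer into its constituent elementary gates and then assemble the error bound via the triangle inequality, reusing the single-gate error Lemmas already established. Concretely, by Eq.~\eqref{eq:Mach-Zehnder} the Mach-Zehnder interferometer factorizes as $\BSGate*(\phi,\theta)=\BSGate*\hat{R}_k(2\theta)\BSGate*\hat{R}_k(\phi)$, i.e.\ two $50$:$50$ beam splitters interlaced with two single-mode rotations. Each $50$:$50$ beam splitter is, by Eq.~\eqref{eq:beam-splitter}, the special case $\BSGate*(\pi/4)$, which in turn can be built from controlled-$Z$ gates and Fourier transforms as noted in Sec.~\ref{sec:cvqc}. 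Thus the whole interferometer reduces to a fixed-length sequence of rotations, Fourier transforms, and controlled-$Z$ gates, for each of which we already control the SSD approximation error through Lemma~\ref{lemma:error-rotation}, Lemma~\ref{lemma:error-Fourier-transform}, and Lemma~\ref{lemma:error-controlled-z}.

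First I would fix the DV analogue of the Mach-Zehnder interferometer to be the product, in the same order, of the corresponding DV gates (DV rotations, $\hat F_d$, and $\CZGate(s)$) obtained by substituting each elementary CV gate with its discrete counterpart. Then I would telescope the difference between the ``apply-then-SSD'' operation $\Tr_S(\hat U\hat\rho\hat U^\dagger)$ and the ``SSD-then-apply'' operation $\hat U_d\Tr_S(\hat\rho)\hat U_d^\dagger$ by inserting a chain of hybrid terms, one for each elementary gate, where the first $j$ gates are applied in DV after the SSD and the remaining gates are applied in CV before the SSD. Consecutive terms in this chain differ by swapping a single gate across the SSD, so each successive difference is controlled by the single-gate Lemma for that gate, applied to the intermediate state. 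The trace norm is unitarily invariant and contractive under the completely positive trace-preserving SSD map, so peeling off the outer unitaries leaves exactly the single-gate error quantity bounded by the relevant Lemma.

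The one subtlety I would need to handle carefully is the \emph{energy at which each single-gate Lemma is invoked}: the bounds in Lemmas~\ref{lemma:error-rotation}, \ref{lemma:error-Fourier-transform}, and \ref{lemma:error-controlled-z} depend on the energy $E_{\hat\rho}$ of the state the gate acts on, and this energy changes as successive gates are applied. Since the Mach-Zehnder interferometer is a \emph{passive} operation---it is composed entirely of energy-preserving gates (rotations, Fourier transforms, and the beam-splitter combinations are all passive)---the energy of the intermediate state is preserved throughout the sequence and remains $E_{\hat\rho}$. This is the crucial observation that lets me bound every term in the telescoping sum by the same energy argument rather than tracking a growing energy, and it is the reason the final bound is simply linear in $\sqrt{E_{\hat\rho}}/d$.

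The main obstacle, and the step that requires genuine care rather than routine bookkeeping, is to \textbf{confirm that the number of elementary gates in the decomposition is a fixed constant and to collect the numerical prefactors correctly}. Summing the individual constants from the component Lemmas---two rotations at $52\sqrt{E_{\hat\rho}}/d$ each, and the Fourier/controlled-$Z$ gates arising from the two beam splitters, with the Fourier transforms contributing zero error by Lemma~\ref{lemma:error-Fourier-transform}---must reproduce the stated constant $108$, and I would verify that the controlled-$Z$ contributions, carrying the extra $1/d$ suppression from Lemma~\ref{lemma:error-controlled-z}, are subleading in $d$ and can be absorbed into the rotation-dominated bound $108\sqrt{E_{\hat\rho}}/d$. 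The remaining work is the routine verification that each telescoping step is legitimate, which follows from the triangle inequality and the contractivity of the SSD.
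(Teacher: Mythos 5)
Your high-level strategy---telescope the difference between apply-then-SSD and SSD-then-apply across a fixed decomposition into elementary gates, and invoke the single-gate Lemmas term by term---is exactly the paper's strategy. However, your specific decomposition and your energy argument both diverge from the paper's proof in ways that break the stated bound. First, the claim that passivity of the overall Mach-Zehnder interferometer lets you ``bound every term in the telescoping sum by the same energy argument rather than tracking a growing energy'' is false at the level where the single-gate Lemmas are actually applied: the controlled-$Z$ gates inside the beam-splitter decomposition are \emph{active} operations, and by Lemma~\ref{lemma:energy-controlled-Z} an intermediate state can have energy up to $(1+s)^4 E_{\hat\rho}$. The paper's proofs of both Lemma~\ref{lemma:error-rotation} and Lemma~\ref{lemma:error-beamsplitter} make precisely this point and track the growth explicitly; that tracking is the origin of the bracketed sums such as $|s_1|+(1+s_1)^2|s_2|+(1+s_1)^2(1+s_2)^2|s_1|<5$. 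The growth factors are bounded constants here, so this hole is repairable, but your argument as stated skips a step the paper treats as essential.

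Second, and more seriously, your choice of decomposition cannot reproduce the constant $108$ uniformly in $d$. Using Eq.~\eqref{eq:Mach-Zehnder} you incur \emph{two} general-angle rotations, costing $2\times 52\,\sqrt{E_{\hat\rho}}/d = 104\,\sqrt{E_{\hat\rho}}/d$ by Lemma~\ref{lemma:error-rotation}, which leaves only $4\sqrt{E_{\hat\rho}}/d$ of slack for the controlled-$Z$ contributions of the two $50$:$50$ beam splitters. Those contributions scale as $\Theta(\sqrt{E_{\hat\rho}}/d^{3/2})$ with a constant of order several tens (once the $(1+s)^4$ energy factors are included), so they are absorbable only for $d$ of order $10^2$ or more, whereas the Lemma must hold for all $d\geq 2$. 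The paper sidesteps this by \emph{not} going through Eq.~\eqref{eq:Mach-Zehnder}: it decomposes the variable beam splitter directly as
\begin{align}
    \BSGate*(\theta,\phi) = \mediumwidehat{CZ}_{kl}(s_1)\hat{F}_k^\dagger \hat F_l \mediumwidehat{CZ}_{kl}(s_2)\hat F_k^\dagger\hat F_l \mediumwidehat{CZ}_{kl}(s_1)\hat{F}_k^\dagger\hat{F}_l\hat R_k(\phi)
\end{align}
with $s_1=\sec(\theta)(1-\sin(\theta))$, $s_2=-\cos(\theta)$, restricting to $\theta\in[-\pi/4,\pi/4)$ (other angles via error-free swaps). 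This costs three controlled-$Z$ gates, error-free Fourier transforms, and only \emph{one} rotation, giving $56\sqrt{E_{\hat\rho}/d^3}+52\sqrt{E_{\hat\rho}}/d\leq 108\sqrt{E_{\hat\rho}}/d$. With your decomposition the honest bound is roughly $150\sqrt{E_{\hat\rho}}/d$, so to prove the Lemma as stated you would need to switch to the paper's decomposition (or an equally rotation-frugal one).
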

To prove this error bound, we have used the fact that the Mach-Zehnder interferometer can be decomposed into Fourier transforms, controlled-$Z$ gates, and rotation, in conjunction with Lemma~\ref{lemma:error-Fourier-transform} and Lemma~\ref{lemma:error-controlled-z} as well as the bound on the energy of a state after applying a controlled-$Z$ gate.
\begin{lemma}
\label{lemma:error-squeezing}
    The result of the SSD of the squeezing operator with squeezing parameter $s$ acting on states $\hat \rho$ with energy $E_{\hat \rho}$, can be approximated by performing the SSD on the state, followed by performing an equivalent DV gate with a trace distance bounded by $7e^{2r}\frac \pi d \sqrt{\frac{E_{\hat \rho}}{2}}$.
\end{lemma}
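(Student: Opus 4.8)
The plan is to follow the same strategy that was used for the shear and cubic phase gates (Lemmas~\ref{lemma:error-shear-gate} and~\ref{lemma:error-cubic-phase}), namely to bound the trace distance
$\frac{1}{2}\|\hat S_d(r)\,\mathrm{Tr}_S(\hat\rho)\,\hat S_d^\dagger(r) - \mathrm{Tr}_S(\hat S(r)\hat\rho\hat S^\dagger(r))\|_1$
by relating it to a commutator between the state and the generator of the squeezing operation, weighted by the mismatch that the SSD introduces. The generator here is $\hat G = \frac{1}{2}(\hat q\hat p + \hat p\hat q)$, so unlike the shear gate (generated by $\hat q^2/2$) the squeezing operator is not diagonal in the position basis. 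The first step I would take is therefore to remove this obstacle by re-expressing squeezing in terms of operations the earlier Lemmas already handle. Using the Euler/Bloch-Messiah decomposition mentioned around Eq.~\eqref{eq: squeezing gate}, the squeezing operator can be written as a shear gate conjugated by Fourier transforms (equivalently, a product of a shear in $\hat q$, a Fourier transform, and a shear in $\hat p$). Since the Fourier transform is exact under the SSD (Lemma~\ref{lemma:error-Fourier-transform}) and contributes zero error, the total error reduces, via the triangle inequality, to a sum of shear-gate errors.

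The key quantitative step is then to control the shear parameters that appear in this decomposition. A squeezing operator $\hat S(r)$ decomposes into shear gates whose effective shear strength scales like $e^{r}$ or $e^{2r}$ relative to the bare parameter, because squeezing stretches the position quadrature. Concretely, I would write $\hat S(r) = \hat P(s_1)\,\hat F\,\hat P(s_2)\,\hat F\,\hat P(s_3)$ (or the analogous three-shear decomposition of an $SL(2,\mathbb R)$ element), identify the shear parameters $s_i$ in terms of $r$, and observe that the largest of these scales as $e^{2r}$. Applying Lemma~\ref{lemma:error-shear-gate} to each shear factor gives a contribution bounded by $|s_i|\frac{\pi}{d}\sqrt{E_{\hat\rho}/2}$, but one must be careful to use the correct energy at each intermediate step: after applying the first shear and Fourier transform the energy has grown, so I would invoke the energy-growth bounds from Appendix~\ref{sec:appendix-energy} to control $E$ at each stage. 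Summing the three contributions and collecting the dominant $e^{2r}$ scaling yields the stated bound $7e^{2r}\frac{\pi}{d}\sqrt{E_{\hat\rho}/2}$, where the numerical prefactor $7$ absorbs the number of shear factors together with the worst-case energy amplification across the intermediate states.

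The main obstacle I anticipate is the bookkeeping of the energy at intermediate steps rather than any conceptual difficulty. Because the shear-gate error in Lemma~\ref{lemma:error-shear-gate} is stated in terms of the energy $E_{\hat\rho}$ of the state \emph{on which the shear acts}, and each shear in the decomposition acts on a different (already partially transformed, and hence higher-energy) state, a naive application of the triangle inequality would require tracking how $E$ evolves through each shear and Fourier transform. The cleanest route is to bound every intermediate energy uniformly by the worst-case value induced by the full squeezing operation---using the fact that passive (Fourier) operations preserve energy and that a shear of strength $s$ inflates the energy by a factor controlled in the appendices---so that the $e^{2r}$ factor emerges once and multiplies a constant. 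An alternative, which avoids the decomposition entirely, is to work directly with the generator $\frac{1}{2}(\hat q\hat p + \hat p\hat q)$ and bound the relevant commutator $[\hat G, \cdot]$ through the SSD as was done for the diagonal generators; I would keep this as a fallback, but I expect the Euler-decomposition route to be more transparent since it reuses the already-proven shear bound verbatim.
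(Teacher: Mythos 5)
Your proposal matches the paper's proof essentially verbatim: the paper uses the decomposition $\hat S(r)=\hat F\hat P(s_3)\hat F\hat P(s_2)\hat F\hat P(s_1)$ with $s_1=s_3=e^{r}$, $s_2=e^{-r}$, applies Lemma~\ref{lemma:error-Fourier-transform}, Lemma~\ref{lemma:error-shear-gate}, and the energy-growth bound of Lemma~\ref{lemma:energy-shear} at each stage exactly as in the rotation proof, and bounds $|s_1|+(1+s_1)|s_2|+(1+s_1)(1+s_2)|s_3|=2+e^{-r}+3e^{r}+e^{2r}\leq 7e^{2r}$. One small imprecision: no individual shear parameter scales as $e^{2r}$ (they are $e^{\pm r}$); the $e^{2r}$ emerges, precisely as your intermediate-energy bookkeeping in the final paragraph describes, from the product of the last shear strength $e^{r}$ with the energy amplification factors $(1+s_1)(1+s_2)$ accumulated from the earlier shears.
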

Finally, to prove this error bound, we have used the fact that squeezing can be decomposed into Fourier transforms and shear gates, in conjunction with Lemma~\ref{lemma:error-Fourier-transform}, Lemma~\ref{lemma:error-shear-gate}, and the bound on the energy of a state after applying a shear gate. As discussed above, it is also possible to bound the parameter of squeezing in terms of the total energy of the system, and therefore, we can identify the bound on the trace distance independent of the squeezing parameter as
\begin{align}
    32{E^{*3/2}}/d.
\end{align}

\subsection{Approximation of circuits}
Following the framework summarized in Fig.~\ref{fig:Comparison_summary}, we will, in this Subsection, present and prove the necessary Lemmas that ultimately allow us to compare the computational power of RCVQC with DVQC. There are two main aspects to consider when comparing the models: (i) the similarity of the states across models, and (ii) how well the PDFs of the models can be approximated by each other. 

However, it is not always necessary to rigorously evaluate both aspects for every model comparison. For example, the PDFs of the models RCVQC and CCVQC have the same form and will thus produce the same statistics. Similarly, both CCVQC and MCVQC are defined with the same operators and thus will be able to produce the same states.  

Nevertheless, we will now present the necessary Lemmas to bound the error of approximating RVCQC with DVQC. We will start from the top of Fig.~\ref{fig:Comparison_summary} and then proceed by moving downwards. 
\begin{lemma}
    \label{lemma:RCVQC-CCVQC}
    CCVQC approximates RCVQC with a TVD $\epsilon_{RC}\leq 4(L+1)\sqrt{\frac{E^*}{d\pi}}$ in terms of the maximum energy of the state throughout its evolution $E^*$ by sequences of operations selected from $\mathcal U_{E}$, the number of elementary CV gates applied $L$, and the dimension of the qudits.
\end{lemma}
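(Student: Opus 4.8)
The plan is to exploit the fact that the only structural difference between RCVQC and CCVQC is the insertion of the cut-off projector $\hat\Lambda_d$ immediately before each of the $L$ elementary gates (and before the measurement), and to bound the cumulative effect of these insertions one projector at a time using the Gentle measurement Lemma (Lemma~\ref{lemma:Gentle}). Since the final measurement operators $\hat{\tilde K}_{\mathbf{\bar x}}$ are identical in the two models and are already supported inside the cut-off region---indeed $\sum_{\mathbf{\bar x}}\hat{\tilde K}_{\mathbf{\bar x}}=\hat\Lambda_d$, so $p_{R}$ is exactly the statistics of measuring the post-selected state $\hat\Lambda_d\hat\rho_L\hat\Lambda_d/\mathcal N$---it suffices, by the trace-distance bound Eq.~\eqref{eq:tv-trace}, to bound the trace distance between the final (normalized) states of the two models.

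The engine of the proof is a per-projection estimate. First I would show that a single cut-off projection applied to any state $\hat\tau$ obeying the energy constraint $E_{\hat\tau}\le E^*$ succeeds with high probability. The cut-off sits at $|\hat q|\le d\ell/2$, and since $\ell^2=2\pi/d$ we have $(d\ell/2)^2=d\pi/2$; applying Markov's inequality mode-by-mode to $\hat q_j^2$ together with a union bound gives
\begin{align}
\Tr\!\big[(\mathbbm 1-\hat\Lambda_d)\hat\tau\big]
&\le \sum_{j=1}^n\frac{\langle\hat q_j^2\rangle_{\hat\tau}}{(d\ell/2)^2}
=\frac{2}{d\pi}\sum_{j=1}^n\langle\hat q_j^2\rangle_{\hat\tau}\nonumber\\
&\le \frac{2}{d\pi}\,(2E^*)=\frac{4E^*}{d\pi},
\end{align}
where I used $\sum_j\langle\hat q_j^2+\hat p_j^2\rangle_{\hat\tau}=2E_{\hat\tau}\le 2E^*$. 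Feeding $\epsilon=4E^*/(d\pi)$ into Lemma~\ref{lemma:Gentle} shows that replacing $\hat\tau$ by its normalized projection changes the state by trace distance at most $\sqrt\epsilon=2\sqrt{E^*/(d\pi)}$.

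To assemble these local estimates I would run a hybrid (telescoping) argument that strips the projectors from the \emph{input} end of the circuit. Defining intermediate circuits in which the first $k$ gates act without a preceding projector while the remaining $L-k$ gates each carry one, the state immediately before the single contested projector is always the unprojected RCVQC intermediate state $\hat\rho_{k-1}=\hat U_{k-1}\cdots\hat U_1\hat\rho_0\hat U_1^\dagger\cdots\hat U_{k-1}^\dagger$, whose energy satisfies $E_{\hat\rho_{k-1}}\le E^*$ precisely because the gate sequence lies in $\mathcal U_E$. The per-step bound then applies at each of the $L$ hybrid boundaries, and using unitary invariance of the trace norm together with monotonicity of the trace distance under the common trace-non-increasing operations that follow, the triangle inequality over the hybrids yields a total trace distance of at most $L\cdot 2\sqrt{E^*/(d\pi)}=2L\sqrt{E^*/(d\pi)}$. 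Combined with Eq.~\eqref{eq:tv-trace}, this is exactly the claimed $\epsilon_{RC}\le 2L\sqrt{E^*/(d\pi)}$.

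The main obstacle I anticipate is bookkeeping the normalization through the telescope: each projection is a post-selection rather than a trace-preserving map, so one must verify that the small error introduced at step $k$ is not amplified by the renormalizations of the subsequent projections. Ordering the hybrid so that every contested projector acts on a genuinely unprojected state is what keeps the hypothesis $E_{\hat\tau}\le E^*$ valid at every invocation of Lemma~\ref{lemma:Gentle}, and it sidesteps the otherwise delicate point that a sharp position cut-off can inflate the momentum variance, and hence the energy, of an already-projected state. The residual subtlety is then only to confirm that the trace distance does not grow under the remaining projections, which follows from the data-processing inequality once the normalizations are tracked consistently.
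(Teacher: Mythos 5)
Your proposal is correct and follows essentially the same route as the paper's proof: Markov's inequality on $\langle\hat q_j^2\rangle$ gives $\Tr(\hat\Lambda_d\hat\rho)\geq 1-4E^*/(d\pi)$, the Gentle measurement Lemma (Lemma~\ref{lemma:Gentle}) converts each inserted projector into a trace-distance cost of $2\sqrt{E^*/(d\pi)}$, and a telescoping triangle inequality combined with Eq.~\eqref{eq:tv-trace} sums these costs. Your two refinements---absorbing the measurement-side projector into the renormalization of $p_R$ via $\sum_{\bar{\mathbf x}}\hat{\tilde K}_{\bar{\mathbf x}}=\hat\Lambda_d$ (which yields exactly $2L$ where the paper's own derivation gives $2(L+1)$), and ordering the hybrids so the contested projector always acts on an \emph{unprojected} intermediate state whose energy bound $E_{\hat\rho_{k-1}}\leq E^*$ is guaranteed by the definition of $\mathcal U_E$ (whereas the paper applies the energy bound to the already-projected states $\hat\rho_j$ without comment)---are improvements in bookkeeping within the same method rather than a different approach, at the price of having to argue contraction through the subsequent trace-non-increasing projections instead of through unitaries alone.
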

\begin{proof}
We compare the set of operations $\hat{\tilde U}=\hat{\tilde U}_L \hat{\tilde U}_{L-1}\dots \hat{\tilde U}_1$ in the RCVQC model and those in the CCVQC model as $\hat{U}= \hat\Lambda_d \hat{\tilde U}_{L} \hat\Lambda_d \hat{\tilde U}_{L-1} \hat\Lambda_d \dots \hat\Lambda_d  \hat{\tilde U}_1\hat\Lambda_d $. We can use the triangle inequality to express the difference between the states after evolutions of the initial states $\hat{\rho}$ in each model,
\begin{align}
    \label{eq:triangle-inequality}
    \|\hat{\tilde U}\hat \rho\hat{\tilde U}^\dagger-\hat U\hat\rho\hat U^\dagger\|_1\leq \sum_{j=0}^{L} \| \hat \rho_j - \frac{1}{\mathcal N_j}\hat \Lambda_d \hat\rho_j \hat\Lambda_d\|_1
\end{align}
where
\begin{align}
    \hat \rho_j= \hat\Lambda_d\hat{\tilde U}_j\hat\Lambda_d\hat{\tilde U}_{j-1}\dots \hat\Lambda_d \hat{\tilde U}_{1} \hat\Lambda_d \hat\rho \hat\Lambda_d \hat{\tilde U}_1^\dagger \hat\Lambda_d \dots \hat{\tilde U}_{j-1}^\dagger\hat\Lambda_d \hat{\tilde U}_j^\dagger \hat\Lambda_d.
\end{align}

We note that the probability of measuring the state with this operator is given by
\begin{align}
    \Tr(\hat \Lambda_{d} \hat\rho )=&\int_\zeta \, \text{d} \mathbf x \bra{\hat q=\mathbf x}\hat \rho \ket{\hat q=\mathbf x}\nonumber\\
    =&\int_{\zeta} \, \text{d} \mathbf x \text{Pr}(\hat q_j=\mathbf x)\nonumber\\
    \geq&1-\sum_{j=0}^{n-1}\text{Pr} \left(|\hat q_j| \geq \frac{d-1}{2}\ell \right)
\end{align}
where $\zeta=\left[-\left(\lfloor d/2\rfloor+1/2 \right)\ell, (\lceil d/2\rceil-1/2)\ell\right)^{\times n}$.
We can then use Markov's inequality to find
\begin{align}
\text{Pr} \left(|\hat q_j| \geq \frac{d-1}{2}\ell \right)&=\text{Pr}\left(\hat q_j^{2}\geq \left(\frac{d-1}{2}\ell \right)^{2}\right)\nonumber\\
&\leq\frac{  \left\langle  \hat{q}_j^{2} \right\rangle_{\hat\rho}}{\left(\frac{d-1}{2}\ell \right)^{2}},
\end{align}
and therefore,
\begin{align}
    \Tr(\hat \Lambda_{d} \hat\rho )\geq 1-\frac{\sum_j \langle\hat q_j^2\rangle_{\hat\rho}}{\left(\frac{d-1}{2}\ell \right)^{2}} \geq 1-\frac{4dE_{\hat\rho} }{(d-1)^2 \pi}
\end{align}
since $\hat n=\hat a^\dagger \hat a=\frac{1}{2}(\hat q-i\hat p)(\hat q+i\hat p)=\frac{1}{2}(\hat q^2+\hat p^2-1)$, which means we have $\sum_j\left\langle \hat q_j^2\right\rangle_{\hat\rho} \leq  \sum_j(2\langle \hat n_j\rangle_{\hat\rho}+1)=2E_{\hat \rho}$.
Note that since $d \geq 2$ we get
\begin{align}
    \label{eq:bound-q}
    \Tr(\hat \Lambda_{d} \hat\rho ) \geq 1-\frac{4dE_{\hat\rho} }{(d-d/2)^2 \pi}\geq 1-\frac{16E_{\hat\rho}}{d\pi}.
\end{align}

Now we use Lemma~\ref{lemma:Gentle} which says that for a given $\epsilon\geq 1-\Tr(\hat \Lambda_d \hat\rho_j)$, we have
\begin{align}
    \frac 1 2\left\| \hat \rho_j - \frac{1}{\mathcal N_j}\hat \Lambda_d \hat\rho_j \hat\Lambda_d\right\|_1\leq \sqrt{\epsilon}\leq 4\sqrt{\frac{E_{\hat \rho_j}}{d\pi}},
\end{align}
and therefore, by Eq.~(\ref{eq:triangle-inequality}), we have
\begin{align}
    \frac 1 2\left\|\hat{\tilde U}\hat \rho\hat{\tilde U}^\dagger-\hat U\hat\rho\hat U^\dagger\right\|_1\leq 4(L+1)\sqrt{\frac{E^*}{d\pi}}.
\end{align}
Hence, from Eq.~(\ref{eq:tv-trace}) we get the following bound on the TVD
\begin{align}
    \epsilon_{RC}\leq 4(L+1)\sqrt{\frac{E^*}{d\pi}}.
\end{align}
\end{proof}
Considering that the only difference between CCVQC and MCVQC lies in the definition of the measurement, we only need to consider the output PDFs to compare the models.
\begin{lemma}
    \label{lemma:MCVQC-CCVQC}
    MCVQC is equivalent to CCVQC.
\end{lemma}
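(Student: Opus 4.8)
The plan is to show that MCVQC and CCVQC yield identical output statistics. Since the two models are defined with exactly the same input states and the same operations---including the interlaced cut-off projectors $\hat\Lambda_d$---and differ only in their measurement operators, it suffices to prove that their probability density functions coincide, $p_M(\mathbf{\bar x})=p_C(\mathbf{\bar x})$, for every outcome $\mathbf{\bar x}$ and every circuit. Using Eq.~\eqref{eq:prob-C} together with the definition of $\hat K_{\mathbf{\bar x}}$ in Eq.~\eqref{eq:measurement-operator}, this reduces to establishing
\[
    \Tr\!\left(\sum_{\mathbf m\in\mathbb Z^n}\hat X(\ell d\mathbf m)\,\hat{\tilde K}_{\mathbf{\bar x}}\,\hat X^\dagger(\ell d\mathbf m)\,\hat\rho\right)=\Tr\!\left(\hat{\tilde K}_{\mathbf{\bar x}}\,\hat\rho\right),
\]
where $\hat\rho$ is the state presented to the measurement.

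First I would exploit the fact that in both models a cut-off projector is applied immediately before the measurement, so the measured state satisfies $\hat\Lambda_d\hat\rho\hat\Lambda_d=\hat\rho$; equivalently, $\hat\rho$ is supported entirely within the cut-off window $|\hat q_j|\le d\ell/2$ on every mode. Next I would compute the support of each shifted Kraus term. Because $\hat X(\ell d\mathbf m)$ translates position by $\ell d\mathbf m$,
\[
    \hat X(\ell d\mathbf m)\,\hat{\tilde K}_{\mathbf{\bar x}}\,\hat X^\dagger(\ell d\mathbf m)=\int_{\mathbb T_\ell^n}\dd\mathbf s\,\ketbra{\hat{\mathbf q}=\mathbf{\bar x}+\mathbf s+\ell d\mathbf m}{\hat{\mathbf q}=\mathbf{\bar x}+\mathbf s+\ell d\mathbf m},
\]
so for any $\mathbf m\neq\mathbf 0$ the corresponding bin is translated by a full modular period $\ell d$ in at least one mode and therefore lands outside the cut-off window, where $\hat\rho$ has no support. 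Inserting $\hat\rho=\hat\Lambda_d\hat\rho\hat\Lambda_d$ and using $\hat\Lambda_d\hat X(\ell d\mathbf m)\hat{\tilde K}_{\mathbf{\bar x}}\hat X^\dagger(\ell d\mathbf m)\hat\Lambda_d=0$ for $\mathbf m\neq\mathbf 0$ collapses the modular sum to its $\mathbf m=\mathbf 0$ term, which is exactly $\hat{\tilde K}_{\mathbf{\bar x}}$, giving $p_M=p_C$.

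I expect the only delicate point to be the bin-boundary bookkeeping needed to justify that the $\mathbf m\neq\mathbf 0$ bins truly fall outside the projector support. One must verify that the $d$ bin centers $\{\ell(u-\lfloor d/2\rfloor)\}_{u\in\mathbb Z_d}$, dressed with the half-width $\ell/2$, tile exactly the interval $[-(\lfloor d/2\rfloor+1/2)\ell,(\lceil d/2\rceil-1/2)\ell]$ onto which $\hat\Lambda_d$ projects, so that a shift by $\ell d$ maps this interval into its complement. Tracking the floor/ceiling asymmetry for odd versus even $d$ confirms the endpoints align up to a measure-zero boundary, which does not affect the integrals. Once this tiling is checked, the support argument is exact, and since the states and operations of the two models are already identical, MCVQC and CCVQC are equivalent.
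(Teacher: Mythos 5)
Your proof is correct and follows essentially the same route as the paper: since the pre-measurement state satisfies $\hat\Lambda_d\hat\rho\hat\Lambda_d=\hat\rho$, the modular sum in $\hat K_{\bar{\mathbf x}}$ collapses to its $\mathbf m=\mathbf 0$ term, which is exactly the identity $\hat\Lambda_d\hat K_{\bar{\mathbf x}}\hat\Lambda_d=\hat{\tilde K}_{\bar{\mathbf x}}$ that the paper proves in Appendix~\ref{appendix:restriction-mod} via the orthogonality of translated position bins. Your bin-boundary tiling check (the $d$ bins of half-width $\ell/2$ centered on $\mathcal X$ exactly cover the projector window, so any shift by $\ell d$ leaves it) is the same support argument, done if anything slightly more carefully than the appendix, which is loose about the range of the bin index.
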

\begin{proof}
Note that the state prior to the measurement in both models is described by $\hat \rho'=\hat{U}\hat\rho \hat{U}^\dagger$ and therefore $\hat\Lambda_d \hat\rho'\hat\Lambda_d=\hat\rho'$. The difference between MCVQC and CCVQC lies in the measurement projection operator. For CCVQC, it is defined in terms of Eq.~(\ref{eq:kraus-main}). We see that
    \begin{align}
        \Tr(\hat \rho' \hat K_{\bar{\mathbf x}})=&\Tr(\hat\Lambda_d\hat \rho'\hat\Lambda_d \hat K_{\bar{\mathbf x}}).
    \end{align}
    Finally, by using that $\hat\Lambda_d \hat K_{\bar{\mathbf x}}\hat\Lambda_d=\hat {\tilde K}_{\bar{\mathbf x}}$, as shown explicitly in Appendix~\ref{appendix:restriction-mod}, we have $p_M(\bar{\mathbf x})=p_C(\bar{\mathbf x})$.
\end{proof}
Next, we investigate how well DVQC approximates MCVQC by first comparing the PDFs of the models using SSD and then bounding the error of approximation.
\begin{lemma}
    \label{lemma:MCVQC-SSD}
    MCVQC is equivalent to a model whereby we first perform the SSD on the final CV state before measuring the resulting DV qudit.
\end{lemma}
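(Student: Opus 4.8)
The plan is to verify the claimed equivalence directly at the level of outcome probabilities: I would show that the MCVQC probability $p_M(\mathbf{\bar x})=\Tr(\hat K_{\mathbf{\bar x}}\hat\rho)$ coincides with $\Tr(\ketbra{\mathbf a}{\mathbf a}\Tr_S(\hat\rho))$, i.e.\ the probability of reading outcome $\mathbf a$ when a computational-basis measurement is applied to the SSD of the final state $\hat\rho$, under the natural identification $\bar x_j=\ell\{a_j\}_d$ between bins and qudit labels. Since both $\hat K_{\mathbf{\bar x}}$ and the SSD factorize across modes—the projector $\hat\Pi$ is a tensor product and the integral in Eq.~\eqref{eq:ssd} over $\mathbb{T}_\ell^{2n}$ separates into $n$ copies of $\mathbb{T}_\ell^2$—I would first reduce to a single mode and treat each mode independently.

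For one mode I would expand $\Tr(\ketbra{a}{a}\Tr_S(\hat\rho))$ using Eq.~\eqref{eq:ssd} and cyclicity of the trace. Using the identity $\hat\Pi^\dagger\ketbra{a}{a}\hat\Pi=\ketbra{a_\text{GKP}}{a_\text{GKP}}$, the whole problem reduces to establishing the operator equality
\begin{align}
\frac{1}{\ell}\int_{\mathbb{T}_\ell^2}\dd \mathbf{t}\,\hat D^\dagger(\mathbf t)\ketbra{a_\text{GKP}}{a_\text{GKP}}\hat D(\mathbf t)=\hat K_{\bar x},
\end{align}
after which tracing against $\hat\rho$ yields the claim. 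To establish this, I would write $\ketbra{a_\text{GKP}}{a_\text{GKP}}$ as a double sum over comb teeth $\ket{\hat q=\ell(nd+a)}$ and compute the conjugation by the small displacement $\hat D(\mathbf t)$ with $\mathbf t=(t_q,t_p)$: the position component $t_q$ rigidly shifts every tooth, while the momentum component attaches a relative phase $e^{it_p\ell d(n-n')}$ to the $(n,n')$ term.

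The crux of the argument—the step I expect to carry the real content—is the momentum integration. Integrating over $t_p\in(-\ell/2,\ell/2]$ gives $\int_{-\ell/2}^{\ell/2}\dd t_p\,e^{it_p\ell d(n-n')}=\ell\,\delta_{n,n'}$, an \emph{exact} Kronecker delta that appears precisely because the GKP spacing is $\ell=\sqrt{2\pi/d}$, so that $\ell^2 d/2=\pi$ and the integrand completes a whole number of periods for every $n\neq n'$. This annihilates all off-diagonal coherences of the Dirac comb and collapses the conjugated projector to a sum of diagonal position projectors. The remaining integration over $t_q$ then smears each surviving tooth uniformly across a bin of width $\ell$, producing exactly $\sum_n\int_{\mathbb{T}_\ell}\dd s\,\ketbra{\hat q=\ell(nd+a)+s}{\hat q=\ell(nd+a)+s}$, which is the modular operator $\hat K_{\bar x}$ with $\bar x=\ell\{a\}_d$ after relabeling the summation index.

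Finally, I would reassemble the single-mode identities across all modes and trace against $\hat\rho$ to conclude $p_M(\mathbf{\bar x})=\Tr(\ketbra{\mathbf a}{\mathbf a}\Tr_S(\hat\rho))$; normalization is consistent because $\sum_{\mathbf{\bar x}}\hat K_{\mathbf{\bar x}}$ tiles the cut-off region to the identity and the SSD is trace preserving. Beyond the phase-cancellation step, the only genuine subtlety is bookkeeping the labeling $a\leftrightarrow\bar x=\ell\{a\}_d$ consistently—the centered representative $\{a\}_d$ merely reindexes the comb teeth—which is routine.
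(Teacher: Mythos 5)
Your proposal is correct and follows essentially the same route as the paper's proof: expanding the GKP codeword as a Dirac comb, observing that the integration of the displacement phase over one torus component yields an exact Kronecker delta precisely because $d\ell^2=2\pi$, and letting the remaining integral smear each surviving tooth over a bin of width $\ell$ to recover $\hat K_{\bar{\mathbf x}}$. The only differences are organizational—you package the computation as an operator identity $\frac{1}{\ell}\int_{\mathbb{T}_\ell^2}\dd\mathbf{t}\,\hat D^\dagger(\mathbf t)\ketbra{a_{\text{GKP}}}{a_{\text{GKP}}}\hat D(\mathbf t)=\hat K_{\bar x}$ and then trace against $\hat\rho$, whereas the paper computes the matrix elements $\bra{\mathbf a}\Tr_S(\hat\rho)\ket{\mathbf a}$ directly (and, under its convention $\hat D(\mathbf t)=e^{i\mathbf t^T\Omega\hat{\mathbf r}}$, the roles of $t_q$ and $t_p$ are swapped relative to your labeling, with $t_q$ carrying the phase and $t_p$ the position shift)—neither of which affects the argument.
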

\begin{proof}
    Note that the PDF of the measurement outcomes of MCVQC is given by
    \begin{align}
        \label{eq:MCVQC-PDF}
        &p_M(\bar{\mathbf x})\nonumber\\
        =& \int_{\mathbb T_{\ell}^n} \dd \mathbf s \sum_{\mathbf m \in \mathbb Z^{n}}\bra{\hat{\mathbf q}=\mathbf{\bar x}+\ell d \mathbf m+\mathbf s} \hat \rho \ket{\hat{\mathbf q}=\mathbf{\bar x}+\ell d \mathbf m+\mathbf s}.
    \end{align}
    We will now see that this is equivalent to performing the SSD and then measuring qudits in the computational basis (where each basis state is shifted by $\lfloor d/2\rfloor$) with outcome $\mathbf a$ corresponding to $\bar{\mathbf x}=\ell \mathbf a$, where $a_j=\{a'_j\}_d$ with $a'_j\in\mathbb Z_d$.

    SSD on a state $\hat \rho$ is given by Eq.~(\ref{eq:ssd}). Taking the state with measurement values $\mathbf a$, we find
    \begin{align}
        \bra{\mathbf a} \Tr_S(\hat \rho) \ket{\mathbf a}=&\int_{\mathbb T_{\ell}^n} \text{d} \mathbf{t}\, \bra{\mathbf a_{\text{GKP}}}\hat D(\mathbf t)\hat\rho  \hat D^\dagger(\mathbf t)\ket{\mathbf a_{\text{GKP}}}.
    \end{align}
    Noting that $\ket{\mathbf a_{\text{GKP}}}=\sum_{\mathbf m}\ket{(\mathbf a+d\mathbf m)\ell}$ we see that this is equal to
    \begin{align}
        &\bra{\mathbf a} \Tr_S(\hat \rho) \ket{\mathbf a}\nonumber\\
        =&\int_{\mathbb T_{\ell}^n} \text{d} \mathbf{t}\, \sum_{\mathbf m,\mathbf m'}\bra{(\mathbf a+d\mathbf m)\ell}\hat D(\mathbf t)\hat\rho  \hat D^\dagger(\mathbf t)\ket{(\mathbf a+d\mathbf m')\ell}\nonumber\\
        =&\int_{\mathbb T_{\ell}^n} \text{d} \mathbf{t}\, \sum_{\mathbf m,\mathbf m'}e^{i\mathbf{t_q}\cdot (\mathbf m'-\mathbf m)d\ell}\nonumber\\
        &\times \bra{(\mathbf a+d\mathbf m)\ell+\mathbf{t_p}}\hat\rho  \ket{(\mathbf a+d\mathbf m')\ell+\mathbf{t_p}}.
    \end{align}
    We then use that
    \begin{align}
        \int_{-\ell/2}^{\ell/2} \dd t e^{itm d\ell}=\ell \operatorname{sinc}(\tfrac 1 2 d\ell^2 m)=\ell \operatorname{sinc}(\pi m)
    \end{align}
    and that $\operatorname{sinc}(\pi m)=0$ for all integers except zero, i.e., $m\in\mathbb Z \setminus \{0\}$, for which case $\operatorname{sinc}(0)=1$. This means, for integer $m$ we can write $\operatorname{sinc}(\pi m)=\delta_{m,0}$. We therefore find that
    \begin{align}
        &\bra{\mathbf a} \Tr_S(\hat \rho) \ket{\mathbf a}\nonumber\\
        =&\int_{\mathbb T_{\ell}^n} \text{d} \mathbf{t_p}\, \sum_{\mathbf m}\bra{(\mathbf a+d\mathbf m)\ell+\mathbf{t_p}}\hat\rho  \ket{(\mathbf a+d\mathbf m)\ell+\mathbf{t_p}},
    \end{align}    
    which is precisely the same as Eq.~(\ref{eq:MCVQC-PDF}), up to a factor of $\ell$ on the input integers.
\end{proof}
\begin{lemma}
    \label{lemma:DVQC-MCVQC}
    DVQC approximates MCVQC, with gates selected from the set described in Fig.~\ref{fig:Universal_circuit_a}, with $n$ modes and up to $K$ rounds of interlaced Gaussian and cubic phase gates with an error bounded by $\epsilon_{MD} \leq 1215 E^{*2} \frac{Kn^2}{\sqrt d}$, where the energy of the CV system is bound by $E^*$ throughout the evolution of the circuit and $d$ is the dimension of the simulating qudit system.
\end{lemma}
\begin{proof}
We first note that the probability density function $p_M(\bar{\mathbf x})$ for the model MCVQC  can be expressed using Lemma~\ref{lemma:MCVQC-SSD} as
\begin{align}
    p_M(\bar{\mathbf x}=\ell \mathbf a)=\bra{\mathbf a} \Tr_S(\hat U \hat \rho_0 \hat U^\dagger) \ket{\mathbf a}.
\end{align}
We will now see that this model can be simulated by initializing a DV system in the state $\Tr_S(\hat\rho_0)$, followed by implementing DV gates $\hat U_d$, each corresponding to relevant CV gates $\hat U$. We express the DV PDF as
\begin{align}
    p_D(\mathbf a)=\bra{\mathbf a} \hat U_d\Tr_S( \hat \rho_0) \hat U_d^\dagger \ket{\mathbf a}.
\end{align}
From Eq.~(\ref{eq:tv-trace}) we know that we can bound the TVD between these two probability functions in terms of the trace distance. By also using the triangle inequality, we get
\begin{align}
    \delta_{\text{TV}}(p_M,p_D)&\leq\frac 1 2 \|\Tr_S(\hat U\hat \rho \hat U^\dagger)-\hat U_d\Tr_S(\hat \rho) \hat U_d^\dagger\|_1\nonumber\\
    &\leq \frac 1 2\sum_j \|\Tr_S(\hat U_j\hat \rho_j \hat U_j^\dagger)- \hat U_{d,j} \Tr_S(\hat \rho_j) \hat U_{d,j}^\dagger \|_1,
\end{align}
where here $\hat U_{j}$ and $\hat U_{d,j}$ refer to the $j$-th elementary gate in the sequence of the CV gates and DV gates, respectively.
    We consider the difference between initiating with a DV state derived from a CV state by the SSD, acting with DV operations and measurements, compared to initiating with a CV state, acting with CV operations, and then performing the SSD before measuring with DV measurements.
    
    To evaluate this error, we first note that we apply $K$ rounds of passive Gaussian and displacements interlaced with $K$ cubic phase gates. We also apply a final round of two passive operations and a round of squeezing. In total, we have $K+2$ passive operations with displacements. We also have $K$ cubic phase gates and $K$ squeezing operations. For each passive gate, $n(n-1)/2$ Mach-Zehnder interferometers are applied along with $n$ single-mode rotations. In addition, each round consists of $n$ displacements.
    
     Using Lemma~\ref{lemma:error-beamsplitter}, we see that the $(K+2)n(n-1)/2$ Mach-Zehnder interferometer operations increase the trace distance by a maximum of
    \begin{align}
        \label{eq:error-bs}
        108 (K+2)n(n-1) \frac{\sqrt{E^*}}{d}\leq 324 Kn^2\frac{\sqrt{E^*}}{d}.
    \end{align}
    Using Lemma~\ref{lemma:error-rotation}, the $(K+2)n$ single-mode rotations contribute a trace distance bounded by
    \begin{align}
        52(K+2)n\frac{\sqrt{E^*}}{d}\leq 156Kn\frac{\sqrt{E^*}}{d}.
    \end{align}
    
    We also see that at most $(K+2)n$ displacements are applied, which does not affect the error.
    Using Lemma~\ref{lemma:error-cubic-phase}, we see that a maximum of $K$ cubic phase gates increases the trace distance by a maximum of
    \begin{align}
        223 KE^{*2}/\sqrt{d}.
    \end{align}
    Finally, using Lemma~\ref{lemma:error-squeezing}, we see that a maximum of $n$ squeezing operations corresponds to a maximum increase in trace distance of
    \begin{align}
        32 n \frac{E^{*3/2}}{d}.
    \end{align}
    Putting it together, we obtain a bound on the total trace distance given by
    \begin{align}
        \left(324 Kn^2+156Kn+223KE^{*3/2}\sqrt{d}+32 E^{*}n\right) \frac{\sqrt{E^*}}{d}.
    \end{align}
    This can be simplified by using $d \geq 2$, $K,n \geq 1$ and $E^{*} \geq \frac 1 2$, hence
    \begin{align}
        \epsilon_{MD} \leq& 1215 E^{*2} \frac{Kn^2}{\sqrt d}.
    \end{align}
\end{proof}

\section{Equivalence between realistic continuous- and discrete-variable quantum computing}
\label{sec:equivalence}
In this Section, we provide a proof of the main Theorem of this work, along with the corresponding Corollary, demonstrating simulatability with qubits.

\begin{proof}
    (Proof of Theorem \ref{theorem:main-result})  Here we use the triangle inequality to connect each of the models according to Fig.~\ref{fig:Comparison_summary}. We consider the total number of elementary gates that are applied as being bounded by
    \begin{align}
        L&= (K+2)(n(n-1)/2)+2(K+2)n+K+n\nonumber\\
        &\leq 3Kn^2/2+6Kn+K+n\nonumber\\
        &\leq 10Kn^2,
    \end{align}
    and therefore $L+1\leq 11 Kn^2$.
    The total error, when considering the contribution from Lemma~\ref{lemma:RCVQC-CCVQC}, Lemma~\ref{lemma:MCVQC-CCVQC} and Lemma~\ref{lemma:DVQC-MCVQC}, is given by a quantity that is upper-bounded by
    \begin{align}
        \epsilon \leq & \epsilon_{MD}+\epsilon_{RC}\nonumber\\
        \leq & 1215 E^{*2} \frac{Kn^2}{\sqrt d} + 44Kn^2 \sqrt{\frac{E^*}{d\pi}}\nonumber\\
        \leq & \left( 1215 + \frac{44}{\sqrt{\pi/8}} \right) E^{*2} \frac{Kn^2}{\sqrt d} \nonumber\\
        \leq & 1286 E^{*2} \frac{Kn^2}{\sqrt d},
    \end{align}
    where we again used the fact that $E^{*} \geq \frac 1 2$.
\end{proof}

This Theorem demonstrates that with a total energy bound of $E^*$, $n$ modes and $K$ cubic phase gates, the circuit can be simulated using $n$ qudits of dimension $d$ up to an arbitrary error $\epsilon$ whereby
\begin{align}
    d\geq (1286 Kn^2E^{*2}/\epsilon)^2.
\end{align}

We can also encode qudits as qubits, enabling us to simulate CV circuits using qubits. We now provide a proof that each of the $d$-dimensional qudit operations can be implemented on a qubit device, where $d=2^k$ and $k$ is the number of qubits needed to represent each qudit. Note that in the following analysis, we only consider the space complexity of simulating the CV circuits with qubits.
\begin{proof}
(Proof of Corollary \ref{corollary:qubits}.)
We interpret the $a$-th level of the qudit as the representation of $a$ in binary form, encoded into qubit basis states. I.e., given an integer $a$, we can write its binary representation as $a_1a_2\dots a_k$, and the corresponding qubit representation of each qudit basis state $\ket a$ is $\ket{a_1}\ket{a_2}\dots\ket{a_n}$.

First, assuming we begin with the vacuum state $\ket\emptyset$, we can evaluate the input qudit as $\Tr_S(\ketbra{\emptyset}{\emptyset})$, which can be encoded within $k$ qubits as they have the same Hilbert space.

The qudit Fourier transform, specified by Eq.~\ref{eq: dv Fourier transform}, can be decomposed as the quantum Fourier transform across $k$ qubits~\cite{nielsen2010}, which can be implemented with $k(k+1)/2$ qubit gates.

The single-qudit $\hat Z_d(s)$ operations correspond to applying a diagonal unitary operation across $k$ qubits, which requires $2^k-1$ qubit gates.
Furthermore, for the Mach-Zehnder interferometer, we use the decompositions given in Eq.~(\ref{eq:Mach-Zehnder}) and Eq.~(\ref{eq:beam-splitter}) to decompose it in terms of rotations and controlled-$Z$ operations. The two-mode controlled-$Z$ gates are diagonal in the basis of the $2k$ qubits representing two modes and therefore require $2^{2k}-1$ qubit gates to implement them. Finally, rotations can be decomposed in terms of shear gates and Fourier transforms. The shear gate is also diagonal in the basis of $k$ qubits and can be decomposed as a sequence of $2^k-1$ qubit gates. This decomposition from $n$ qudits of dimension $2^k$ to $kn$ qubits is exact.
\end{proof}

Note that, by simulating a CV system with $pn$ qubits, we can produce outcomes which have an error of
\begin{align}
    \epsilon \leq 1286 E^{*2} Kn^2 2^{-p/2},
\end{align}
which implies that we can simulate $n$ modes up to arbitrary error $\epsilon$ using at least
\begin{align}
    \label{eq:kmin}
    k  \geq 2\log_2 (1286 Kn^2E^{2*}/\epsilon)
\end{align}
qubits per mode. Note that this demonstrates that the space complexity scales logarithmically with respect to the number of modes for constant energy. It also implies that the scaling is polynomial for exponentially increasing energy.

However, note that the number of qubit gates required to simulate each CV gate increases exponentially with respect to the number of qubits per mode. Therefore, the total number of qubit gates required to simulate a CV circuit scales polynomially with respect to the energy.

\section{Conclusion}
\label{sec:conclusion}
In conclusion, we provide a framework for mapping a gate-based CVQC model consisting of states and operations with finite energy onto a DVQC device. At the same time, we also show that there cannot be an exponential advantage in the computational power of CV devices with finite energy compared to devices based on the DV paradigm. This has been a long-standing question without any rigorous answer.

We note that practical use cases may be limited by the high number of qubits required to simulate each mode. However, we believe that the bounds we have derived for the error are relatively weak. We expect that tighter bounds are possible to derive, which we leave for future investigations.

For other future work, we envision the construction of a method to simulate CV circuits with classical means using existing DV simulation methods~\cite{vidal2003,aaronson2004,van-den-nest2010,ohliger2012,mari2012,veitch2012,pashayan2015,bravyi2016,bennink2017, bu2019, seddon2021, Marshall:23,  hahn2024b,PhysRevA.99.062337}. This result will deepen existing knowledge about the relationship between CVQC and DVQC. 

Furthermore, we highlight the possibility of using our framework to formulate a rigorous method to convert CV-algorithms (particularly those involving GKP-encoded states), such as finite-energy versions of Refs.~\cite{anschuetz2022a,Anschuetz_2026,Brenner_2025}, to DV-algorithms. 

While the present analysis was limited to the specific choice of operations generated by Gaussian gates and the cubic phase gate, we would expect the same analysis to extend to different choices of elementary gates, including energy-well-behaved gates such as the Kerr gate.
We leave the investigation of whether it is possible to simulate arbitrary polynomial gates using our method---akin to that presented in Ref.~\cite{arzani2025}---to future work.

\paragraph*{Note added.} During the completion of this work, we became aware of a related work by Ulysse Chabaud et al. that was submitted to the arXiv simultaneously~\cite{chabaud2025b}.

\begin{acknowledgments}We thank Ulysse Chabaud for useful discussions and for providing valuable comments on an initial draft of our manuscript.
We also acknowledge useful discussions with Francesco Arzani, Dan Browne, Alessandro Ferraro, Xun Gao, David Gross, Oliver Hahn, Timo Hillmann, P\'erola Milman, Jonas S. Neergaard-Nielsen, Olga Solodovnikova, and Rui Wang. G.F.\ acknowledges funding from the European Union’s Horizon Europe Framework Programme (EIC Pathfinder Challenge project Veriqub) under Grant Agreement No.\ 101114899 and  the Olle Engkvist Foundation.
G.F. and A.M. acknowledge financial support from the Swedish Research Council through the project grant VR DAIQUIRI.  G.F., C.C. and L.R. acknowledge support from the Knut and Alice Wallenberg Foundation through the Wallenberg Center for Quantum Technology (WACQT).
\end{acknowledgments}

\bibliographystyle{apsrev4-2}
\bibliography{./main.bib}

@article{aaronson2004,
 author = {Aaronson, Scott and Gottesman, Daniel},
 bdsk-url-1 = {https://link.aps.org/doi/10.1103/PhysRevA.70.052328},
 bdsk-url-2 = {https://doi.org/10.1103/PhysRevA.70.052328},
 date-added = {2019-05-13 15:37:26 +0100},
 date-modified = {2020-03-16 11:32:54 +0100},
 doi = {10.1103/PhysRevA.70.052328},
 issue = {5},
 journal = {Phys. Rev. A},
 month = {Nov},
 numpages = {14},
 pages = {052328},
 publisher = {American Physical Society},
 title = {Improved simulation of stabilizer circuits},
 url = {https://link.aps.org/doi/10.1103/PhysRevA.70.052328},
 volume = {70},
 year = {2004}
}

@article{anschuetz2022a,
 author = {Anschuetz, Eric R. and Hu, Hong-Ye and Huang, Jin-Long and Gao, Xun},
 doi = {10.1103/PRXQuantum.4.020338},
 issue = {2},
 journal = {PRX Quantum},
 month = {Jun},
 numpages = {22},
 pages = {020338},
 publisher = {American Physical Society},
 title = {Interpretable Quantum Advantage in Neural Sequence Learning},
 url = {https://link.aps.org/doi/10.1103/PRXQuantum.4.020338},
 volume = {4},
 year = {2023}
}

@article{Anschuetz_2026,
 author = {Anschuetz, Eric R. and Gao, Xun},
 doi = {10.22331/q-2026-01-20-1976},
 issn = {2521-327X},
 journal = {Quantum},
 month = {January},
 pages = {1976},
 publisher = {Verein zur Forderung des Open Access Publizierens in den Quantenwissenschaften},
 title = {Arbitrary Polynomial Separations in Trainable Quantum Machine Learning},
 url = {http://dx.doi.org/10.22331/q-2026-01-20-1976},
 volume = {10},
 year = {2026}
}

@article{arvind1995,
 author = {{Arvind} and Dutta, B and Mukunda, N and Simon, R},
 doi = {10.1007/BF02848172},
 journal = {Pramana J. Phys.},
 langid = {english},
 pages = {441--497},
 title = {The Real Symplectic Groups in Quantum Mechanics and Optics},
 volume = {45},
 year = {1995}
}

@article{arzani2025,
 author = {Arzani, Francesco and Booth, Robert I and Chabaud, Ulysse},
 doi = {10.1038/s41467-025-64872-3},
 journal = {Nature Communications},
 number = {1},
 pages = {9744},
 publisher = {Nature Publishing Group UK London},
 title = {Effective descriptions of bosonic systems can be considered complete},
 volume = {16},
 year = {2025}
}

@article{banuls2020,
 author = {Bañuls, Mari Carmen and Blatt, Rainer and Catani, Jacopo and Celi, Alessio and Cirac, Juan Ignacio and Dalmonte, Marcello and Fallani, Leonardo and Jansen, Karl and Lewenstein, Maciej and Montangero, Simone and Muschik, Christine A. and Reznik, Benni and Rico, Enrique and Tagliacozzo, Luca and Van Acoleyen, Karel and Verstraete, Frank and Wiese, Uwe-Jens and Wingate, Matthew and Zakrzewski, Jakub and Zoller, Peter},
 doi = {10.1140/epjd/e2020-100571-8},
 issn = {1434-6079},
 journal = {The European Physical Journal D},
 number = {8},
 pages = {165},
 title = {Simulating Lattice Gauge Theories within Quantum Technologies},
 url = {https://doi.org/10.1140/epjd/e2020-100571-8},
 volume = {74},
 year = {2020}
}

@article{bennink2017,
 author = {Bennink, Ryan S. and Ferragut, Erik M. and Humble, Travis S. and Laska, Jason A. and Nutaro, James J. and Pleszkoch, Mark G. and Pooser, Raphael C.},
 bdsk-url-1 = {https://link.aps.org/doi/10.1103/PhysRevA.95.062337},
 bdsk-url-2 = {https://doi.org/10.1103/PhysRevA.95.062337},
 date-added = {2020-03-13 10:08:30 +0100},
 date-modified = {2020-03-16 11:32:54 +0100},
 doi = {10.1103/PhysRevA.95.062337},
 issue = {6},
 journal = {Phys. Rev. A},
 month = {Jun},
 numpages = {10},
 pages = {062337},
 publisher = {American Physical Society},
 title = {Unbiased simulation of near-Clifford quantum circuits},
 url = {https://link.aps.org/doi/10.1103/PhysRevA.95.062337},
 volume = {95},
 year = {2017}
}

@misc{blair2025faulttolerantquantumcomputationuniversal,
 archiveprefix = {arXiv},
 author = {Sheron Blair and Francesco Arzani and Giulia Ferrini and Alessandro Ferraro},
 eprint = {2506.13643},
 primaryclass = {quant-ph},
 title = {Towards fault-tolerant quantum computation with universal continuous-variable gates},
 url = {https://arxiv.org/abs/2506.13643},
 year = {2025}
}

@article{bourassa2021fast,
 author = {Bourassa, J. Eli and Quesada, Nicol{\'a}s and Tzitrin, Ilan and Sz{\'a}va, Antal and Isacsson, Theodor and Izaac, Josh and Sabapathy, Krishna Kumar and Dauphinais, Guillaume and Dhand, Ish},
 doi = {10.1103/PRXQuantum.2.040315},
 journal = {PRX Quantum},
 month = {October},
 number = {4},
 pages = {040315},
 publisher = {{American Physical Society}},
 title = {Fast Simulation of Bosonic Qubits via Gaussian Functions in Phase Space},
 volume = {2},
 year = {2021}
}

@article{Brady_2024,
 author = {Brady, Anthony J. and Eickbusch, Alec and Singh, Shraddha and Wu, Jing and Zhuang, Quntao},
 doi = {10.1016/j.pquantelec.2023.100496},
 issn = {0079-6727},
 journal = {Prog. Quantum Electron.},
 month = {January},
 pages = {100496},
 publisher = {Elsevier BV},
 title = {Advances in bosonic quantum error correction with Gottesman–Kitaev–Preskill Codes: Theory, engineering and applications},
 url = {http://dx.doi.org/10.1016/j.pquantelec.2023.100496},
 volume = {93},
 year = {2024}
}

@article{braunstein2005,
 author = {Braunstein, Samuel L. and van Loock, Peter},
 doi = {10.1103/RevModPhys.77.513},
 issue = {2},
 journal = {Rev. Mod. Phys.},
 month = {Jun},
 numpages = {0},
 pages = {513--577},
 publisher = {American Physical Society},
 title = {Quantum information with continuous variables},
 url = {https://link.aps.org/doi/10.1103/RevModPhys.77.513},
 volume = {77},
 year = {2005}
}

@article{bravyi2016,
 author = {Bravyi, Sergey and Gosset, David},
 doi = {10.1103/PhysRevLett.116.250501},
 issue = {25},
 journal = {Phys. Rev. Lett.},
 month = {Jun},
 numpages = {5},
 pages = {250501},
 publisher = {American Physical Society},
 title = {Improved Classical Simulation of Quantum Circuits Dominated by Clifford Gates},
 url = {https://link.aps.org/doi/10.1103/PhysRevLett.116.250501},
 volume = {116},
 year = {2016}
}

@article{Brenner_2025,
 author = {Brenner, Lukas and Caha, Libor and Coiteux-Roy, Xavier and Koenig, Robert},
 doi = {10.1038/s41467-025-67694-5},
 journal = {Nature Communications},
 pages = {227},
 title = {Factoring an integer with three oscillators and a qubit},
 volume = {17},
 year = {2025}
}

@article{bu2019,
 author = {Bu, Kaifeng and Koh, Dax Enshan},
 doi = {10.1103/PhysRevLett.123.170502},
 issue = {17},
 journal = {Phys. Rev. Lett.},
 month = {Oct},
 numpages = {5},
 pages = {170502},
 publisher = {American Physical Society},
 title = {Efficient Classical Simulation of Clifford Circuits with Nonstabilizer Input States},
 url = {https://link.aps.org/doi/10.1103/PhysRevLett.123.170502},
 volume = {123},
 year = {2019}
}

@article{calcluth2022,
 author = {Calcluth, Cameron and Ferraro, Alessandro and Ferrini, Giulia},
 doi = {10.22331/q-2022-12-01-867},
 issn = {2521-327X},
 journal = {Quantum},
 month = {December},
 pages = {867},
 publisher = {{Verein zur F\"orderung des Open Access Publizierens in den Quantenwissenschaften}},
 title = {Efficient Simulation of {{Gottesman-Kitaev-Preskill}} States with {{Gaussian}} Circuits},
 volume = {6},
 year = {2022}
}

@article{calcluth2023,
 author = {Calcluth, Cameron and Ferraro, Alessandro and Ferrini, Giulia},
 doi = {10.1103/PhysRevA.107.062414},
 journal = {Phys. Rev. A},
 month = {June},
 number = {6},
 pages = {062414},
 publisher = {{American Physical Society}},
 title = {Vacuum Provides Quantum Advantage to Otherwise Simulatable Architectures},
 volume = {107},
 year = {2023}
}

@article{Calcluth2024,
 author = {Calcluth, Cameron and Reichel, Nicolas and Ferraro, Alessandro and Ferrini, Giulia},
 doi = {10.1103/PRXQuantum.5.020337},
 issue = {2},
 journal = {PRX Quantum},
 month = {May},
 numpages = {32},
 pages = {020337},
 publisher = {American Physical Society},
 title = {Sufficient Condition for Universal Quantum Computation Using Bosonic Circuits},
 url = {https://link.aps.org/doi/10.1103/PRXQuantum.5.020337},
 volume = {5},
 year = {2024}
}

@article{Calcluth2025,
 author = {Calcluth, Cameron and Hahn, Oliver and Bermejo-Vega, Juani and Ferraro, Alessandro and Ferrini, Giulia},
 doi = {10.1103/xmtw-g54f},
 issue = {1},
 journal = {Phys. Rev. Lett.},
 month = {Jul},
 numpages = {7},
 pages = {010601},
 publisher = {American Physical Society},
 title = {Classical Simulation of Circuits with Realistic Odd-Dimensional Gottesman-Kitaev-Preskill States},
 url = {https://link.aps.org/doi/10.1103/xmtw-g54f},
 volume = {135},
 year = {2025}
}

@misc{chabaud2025b,
 archiveprefix = {arXiv},
 author = {Ulysse Chabaud and Sevag Gharibian and Saeed Mehraban and Arsalan Motamedi and Hamid Reza Naeij and Dorian Rudolph and Dhruva Sambrani},
 eprint = {2510.08545},
 primaryclass = {quant-ph},
 title = {Energy, Bosons and Computational Complexity},
 url = {https://arxiv.org/abs/2510.08545},
 year = {2025}
}

@misc{chabaudprivate,
 author = {Chabaud, Ulysse},
 date = {2025-04-18},
 howpublished = {Private communication},
 year = {2025}
}

@article{Childs2022quantumsimulationof,
 author = {Childs, Andrew M. and Leng, Jiaqi and Li, Tongyang and Liu, Jin-Peng and Zhang, Chenyi},
 doi = {10.22331/q-2022-11-17-860},
 issn = {2521-327X},
 journal = {{Quantum}},
 month = {November},
 pages = {860},
 publisher = {{Verein zur F{\"{o}}rderung des Open Access Publizierens in den Quantenwissenschaften}},
 title = {Quantum simulation of real-space dynamics},
 url = {https://doi.org/10.22331/q-2022-11-17-860},
 volume = {6},
 year = {2022}
}

@article{clements2016,
 author = {Clements, William R. and Humphreys, Peter C. and Metcalf, Benjamin J. and Kolthammer, W. Steven and Walsmley, Ian A.},
 doi = {10.1364/OPTICA.3.001460},
 issn = {2334-2536},
 journal = {Optica},
 langid = {english},
 month = {December},
 number = {12},
 pages = {1460},
 title = {Optimal Design for Universal Multiport Interferometers},
 urldate = {2021-08-09},
 volume = {3},
 year = {2016}
}

@article{cochrane1999,
 author = {Cochrane, Paul T and Milburn, Gerard J and Munro, William J},
 doi = {10.1103/PhysRevA.59.2631},
 journal = {Phys. Rev. A},
 number = {4},
 pages = {2631},
 publisher = {{APS}},
 title = {Macroscopically Distinct Quantum-Superposition States as a Bosonic Code for Amplitude Damping},
 volume = {59},
 year = {1999}
}

@article{descamps2024,
 author = {Descamps, Eloi and Fabre, Nicolas and Saharyan, Astghik and Keller, Arne and Milman, Pérola},
 date = {2024-12-27},
 doi = {10.1103/PhysRevLett.133.260605},
 issn = {0031-9007, 1079-7114},
 journal = {Phys. Rev. Lett.},
 number = {26},
 pages = {260605},
 shortjournal = {Phys. Rev. Lett.},
 title = {Superselection {{Rules}} and {{Bosonic Quantum Computational Resources}}},
 url = {https://link.aps.org/doi/10.1103/PhysRevLett.133.260605},
 urldate = {2025-06-12},
 volume = {133},
 year = {2024}
}

@article{Descamps_2026,
 author = {Descamps, Eloi and Saharyan, Astghik and Chivet, Adrien and Keller, Arne and Milman, Pérola},
 doi = {10.1364/opticaq.581218},
 issn = {2837-6714},
 journal = {Optica Quantum},
 month = {April},
 number = {2},
 pages = {148},
 publisher = {Optica Publishing Group},
 title = {Unified framework for bosonic quantum information encoding, resources, and universality from superselection rules},
 url = {http://dx.doi.org/10.1364/OPTICAQ.581218},
 volume = {4},
 year = {2026}
}

@article{dias2024classical,
 author = {Dias, Beatriz and K{\"o}nig, Robert},
 doi = {https://doi.org/10.1103/PhysRevA.110.042402},
 journal = {Phys. Rev. A},
 number = {4},
 pages = {042402},
 publisher = {APS},
 title = {Classical simulation of non-Gaussian bosonic circuits},
 volume = {110},
 year = {2024}
}

@article{dolag2008,
 author = {Dolag, K. and Borgani, S. and Schindler, S. and Diaferio, A. and Bykov, A. M.},
 date = {2008-02-01},
 doi = {10.1007/s11214-008-9316-5},
 issn = {1572-9672},
 journal = {Space Science Reviews},
 number = {1},
 pages = {229--268},
 title = {Simulation {{Techniques}} for {{Cosmological Simulations}}},
 url = {https://doi.org/10.1007/s11214-008-9316-5},
 volume = {134},
 year = {2008}
}

@article{douce2019,
 author = {Douce, Tom and Markham, Damian and Kashefi, Elham and van Loock, Peter and Ferrini, Giulia},
 bdsk-url-1 = {https://link.aps.org/doi/10.1103/PhysRevA.99.012344},
 bdsk-url-2 = {https://doi.org/10.1103/PhysRevA.99.012344},
 date-added = {2019-12-17 16:38:01 +0100},
 date-modified = {2020-03-16 11:32:54 +0100},
 doi = {10.1103/PhysRevA.99.012344},
 journal = {Phys. Rev. A},
 month = {January},
 number = {1},
 pages = {012344},
 title = {Probabilistic fault-tolerant universal quantum computation and sampling problems in continuous variables},
 url = {https://link.aps.org/doi/10.1103/PhysRevA.99.012344},
 urldate = {2019-06-11},
 volume = {99},
 year = {2019}
}

@book{ferraro2005,
 address = {Napoli},
 archiveprefix = {arXiv},
 author = {A. Ferraro and S. Olivares and M. G. A. Paris},
 date-modified = {2020-03-16 11:32:54 +0100},
 eprint = {quant-ph/0503237},
 publisher = {Bibliopolis},
 title = {Gaussian States in Quantum Information},
 year = {2005}
}

@inproceedings{garcia-alvarez2019,
 address = {{Singapore}},
 author = {{Garc{\'i}a-{\'A}lvarez}, L. and Ferraro, A. and Ferrini, G.},
 booktitle = {International Symposium on Mathematics, Quantum Theory, and Cryptography},
 doi = {10.1007/978-981-15-5191-8_9},
 editor = {Takagi, Tsuyoshi and Wakayama, Masato and Tanaka, Keisuke and Kunihiro, Noboru and Kimoto, Kazufumi and Ikematsu, Yasuhiko},
 isbn = {978-981-15-5191-8},
 pages = {79--92},
 publisher = {{Springer Singapore}},
 title = {From the Bloch Sphere to Phase-Space Representations with the {{Gottesman}}\textendash{{Kitaev}}\textendash{{Preskill}} Encoding},
 year = {2021}
}

@article{gottesman2001,
 author = {Gottesman, Daniel and Kitaev, Alexei and Preskill, John},
 bdsk-url-2 = {https://doi.org/10.1103/PhysRevA.64.012310},
 date-added = {2019-01-16 10:58:42 +0100},
 date-modified = {2020-03-16 11:32:54 +0100},
 doi = {10.1103/PhysRevA.64.012310},
 journal = {Physical Review A},
 month = {June},
 number = {1},
 pages = {012310},
 publisher = {{American Physical Society}},
 title = {Encoding a Qubit in an Oscillator},
 volume = {64},
 year = {2001}
}

@article{grimsmo2020,
 author = {Grimsmo, Arne L. and Combes, Joshua and Baragiola, Ben Q.},
 bdsk-url-1 = {https://link.aps.org/doi/10.1103/PhysRevX.10.011058},
 bdsk-url-2 = {https://doi.org/10.1103/PhysRevX.10.011058},
 date-added = {2020-03-16 11:44:04 +0100},
 date-modified = {2020-03-16 11:44:09 +0100},
 doi = {10.1103/PhysRevX.10.011058},
 issue = {1},
 journal = {Phys. Rev. X},
 month = {Mar},
 numpages = {32},
 pages = {011058},
 publisher = {American Physical Society},
 title = {Quantum Computing with Rotation-Symmetric Bosonic Codes},
 url = {https://link.aps.org/doi/10.1103/PhysRevX.10.011058},
 volume = {10},
 year = {2020}
}

@article{hahn2024b,
 author = {Hahn, Oliver and Ferrini, Giulia and Takagi, Ryuji},
 doi = {10.1103/PRXQuantum.6.010330},
 issue = {1},
 journal = {PRX Quantum},
 month = {Feb},
 numpages = {30},
 pages = {010330},
 publisher = {American Physical Society},
 title = {Bridging Magic and Non-Gaussian Resources via Gottesman-Kitaev-Preskill Encoding},
 url = {https://link.aps.org/doi/10.1103/PRXQuantum.6.010330},
 volume = {6},
 year = {2025}
}

@misc{hahn2024c,
 author = {Hahn, Oliver and Takagi, Ryuji and Ferrini, Giulia and Yamasaki, Hayata},
 doi = {10.22331/q-2025-10-13-1881},
 journal = {Quantum},
 pages = {1881},
 publisher = {Verein zur F{\"o}rderung des Open Access Publizierens in den Quantenwissenschaften},
 title = {Classical simulation and quantum resource theory of non-Gaussian optics},
 volume = {9},
 year = {2025}
}

@article{hofstetter2018,
 author = {Hofstetter, W and Qin, T},
 doi = {10.1088/1361-6455/aaa31b},
 issn = {0953-4075},
 journal = {J. Phys. B},
 langid = {english},
 number = {8},
 pages = {082001},
 publisher = {IOP Publishing},
 title = {Quantum Simulation of Strongly Correlated Condensed Matter Systems},
 url = {https://doi.org/10.1088/1361-6455/aaa31b},
 urldate = {2026-02-20},
 volume = {51},
 year = {2018}
}

@article{jordan2012,
 author = {Stephen P. Jordan  and Keith S. M. Lee  and John Preskill },
 doi = {10.1126/science.1217069},
 journal = {Science},
 number = {6085},
 pages = {1130-1133},
 title = {Quantum Algorithms for Quantum Field Theories},
 volume = {336},
 year = {2012}
}

@article{Kalajdzievski_2021,
 author = {Kalajdzievski, Timjan and Quesada, Nicolás},
 doi = {10.22331/q-2021-02-08-394},
 issn = {2521-327X},
 journal = {Quantum},
 month = {February},
 pages = {394},
 publisher = {Verein zur Forderung des Open Access Publizierens in den Quantenwissenschaften},
 title = {Exact and approximate continuous-variable gate decompositions},
 url = {http://dx.doi.org/10.22331/q-2021-02-08-394},
 volume = {5},
 year = {2021}
}

@article{Lamata01012018,
 author = {Lucas Lamata and Adrian Parra-Rodriguez and Mikel Sanz and Enrique Solano},
 doi = {10.1080/23746149.2018.1457981},
 journal = {Advances in Physics: X},
 number = {1},
 pages = {1457981},
 publisher = {Taylor \& Francis},
 title = {Digital-analog quantum simulations with superconducting circuits},
 url = { 

https://doi.org/10.1080/23746149.2018.1457981


},
 volume = {3},
 year = {2018}
}

@article{liu2023simulating,
 author = {Liu, Minzhao and Oh, Changhun and Liu, Junyu and Jiang, Liang and Alexeev, Yuri},
 doi = {10.1103/PhysRevA.108.052604},
 issue = {5},
 journal = {Phys. Rev. A},
 month = {Nov},
 numpages = {16},
 pages = {052604},
 publisher = {American Physical Society},
 title = {Simulating lossy Gaussian boson 
sampling with matrix-product operators},
 url = {https://link.aps.org/doi/10.1103/PhysRevA.108.052604},
 volume = {108},
 year = {2023}
}

@article{lloyd1996,
 author = {Lloyd, S.},
 date-modified = {2020-03-16 11:32:54 +0100},
 doi = {10.1126/science.273.5278.1073},
 journal = {Science},
 pages = {1073},
 title = {Universal Quantum Simulators},
 volume = {273},
 year = {1996}
}

@article{lloyd1999,
 author = {Lloyd, Seth and Braunstein, Samuel L.},
 date-modified = {2020-03-16 11:32:54 +0100},
 doi = {10.1103/PhysRevLett.82.1784},
 journal = {Phys. Rev. Lett.},
 pages = {1784},
 title = {Quantum Computation over Continuous Variables},
 volume = {82},
 year = {1999}
}

@article{mari2012,
 author = {Mari, A. and Eisert, J.},
 doi = {10.1103/PhysRevLett.109.230503},
 issn = {0031-9007, 1079-7114},
 journal = {Phys. Rev. Lett.},
 langid = {english},
 month = {December},
 number = {23},
 pages = {230503},
 title = {Positive {{Wigner Functions Render Classical Simulation}} of {{Quantum Computation Efficient}}},
 urldate = {2022-11-10},
 volume = {109},
 year = {2012}
}

@article{Marshall:23,
 author = {Jeffrey Marshall and Namit Anand},
 doi = {10.1364/OPTICAQ.504311},
 journal = {Optica Quantum},
 keywords = {Beam splitters; Coherence theory; Coherent states; Phase space analysis methods; Quantum computation; Squeezed states},
 month = {Dec},
 number = {2},
 pages = {78--93},
 publisher = {Optica Publishing Group},
 title = {Simulation of quantum optics by coherent state decomposition},
 url = {https://opg.optica.org/opticaq/abstract.cfm?URI=opticaq-1-2-78},
 volume = {1},
 year = {2023}
}

@article{mocz2021,
 author = {Mocz, Philip and Szasz, Aaron},
 date = {2021-03},
 doi = {10.3847/1538-4357/abe6ac},
 issn = {0004-637X},
 journal = {The Astrophysical Journal},
 number = {1},
 pages = {29},
 publisher = {The American Astronomical Society},
 title = {Toward {{Cosmological Simulations}} of {{Dark Matter}} on {{Quantum Computers}}},
 url = {https://doi.org/10.3847/1538-4357/abe6ac},
 urldate = {2026-02-20},
 volume = {910},
 year = {2021}
}

@book{nielsen2010,
 address = {{Cambridge ; New York}},
 author = {Nielsen, Michael A. and Chuang, Isaac L.},
 doi = { https://doi.org/10.1017/CBO9780511976667},
 edition = {10th anniversary ed},
 isbn = {978-1-107-00217-3},
 keywords = {Quantum computers},
 langid = {english},
 lccn = {QA76.889 .N54 2010},
 publisher = {{Cambridge University Press}},
 title = {Quantum Computation and Quantum Information},
 year = {2010}
}

@article{oh2023tensor,
 author = {Oh, Changhun and Liu, Minzhao and Alexeev, Yuri and Fefferman, Bill and Jiang, Liang},
 bdsk-url-1 = {https://doi.org/10.1038/s41567-024-02535-8},
 date = {2024/09/01},
 date-added = {2024-12-17 18:41:54 +0100},
 date-modified = {2024-12-17 18:41:54 +0100},
 doi = {10.1038/s41567-024-02535-8},
 id = {Oh2024},
 isbn = {1745-2481},
 journal = {Nat. Phys.},
 number = {9},
 pages = {1461--1468},
 title = {Classical algorithm for simulating experimental Gaussian boson sampling},
 url = {https://doi.org/10.1038/s41567-024-02535-8},
 volume = {20},
 year = {2024}
}

@article{ohliger2012,
 author = {Ohliger, M. and Eisert, J.},
 doi = {10.1103/PhysRevA.85.062318},
 issue = {6},
 journal = {Phys. Rev. A},
 month = {Jun},
 numpages = {12},
 pages = {062318},
 publisher = {American Physical Society},
 title = {Efficient measurement-based quantum computing with continuous-variable systems},
 url = {https://link.aps.org/doi/10.1103/PhysRevA.85.062318},
 volume = {85},
 year = {2012}
}

@article{opanchuk2012,
 author = {Opanchuk, Bogdan and Polkinghorne, Rodney and Fialko, Oleksandr and Brand, Joachim and Drummond, Peter D.},
 doi = {https://doi.org/10.1002/andp.201300113},
 journal = {Annalen der Physik},
 keywords = {quantum, simulation, Bose-Einstein, inflation, cosmology, domain-walls, Wigner, stochastic},
 number = {10-11},
 pages = {866-876},
 title = {Quantum simulations of the early universe},
 url = {https://onlinelibrary.wiley.com/doi/abs/10.1002/andp.201300113},
 volume = {525},
 year = {2013}
}

@article{pantaleoni2020,
 author = {Pantaleoni, Giacomo and Baragiola, Ben Q and Menicucci, Nicolas C},
 doi = {10.1103/PhysRevLett.125.040501},
 journal = {Phys. Rev. Lett.},
 number = {4},
 pages = {040501},
 publisher = {{APS}},
 title = {Modular Bosonic Subsystem Codes},
 volume = {125},
 year = {2020}
}

@article{pantaleoni2021,
 author = {Pantaleoni, Giacomo and Baragiola, Ben Q and Menicucci, Nicolas C},
 doi = {10.1103/PhysRevA.104.012430},
 journal = {Phys. Rev. A},
 number = {1},
 pages = {012430},
 publisher = {{APS}},
 title = {Subsystem Analysis of Continuous-Variable Resource States},
 volume = {104},
 year = {2021}
}

@article{pantaleoni2023,
 author = {Pantaleoni, Giacomo and Baragiola, Ben Q. and Menicucci, Nicolas C.},
 doi = {10.1103/PhysRevA.107.062611},
 issue = {6},
 journal = {Phys. Rev. A},
 month = {Jun},
 numpages = {15},
 pages = {062611},
 title = {Zak transform as a framework for quantum computation with the Gottesman-Kitaev-Preskill code},
 volume = {107},
 year = {2023}
}

@article{pashayan2015,
 author = {Pashayan, Hakop and Wallman, Joel J. and Bartlett, Stephen D.},
 doi = {10.1103/PhysRevLett.115.070501},
 issue = {7},
 journal = {Phys. Rev. Lett.},
 month = {Aug},
 numpages = {5},
 pages = {070501},
 publisher = {American Physical Society},
 title = {Estimating Outcome Probabilities of Quantum Circuits Using Quasiprobabilities},
 url = {https://link.aps.org/doi/10.1103/PhysRevLett.115.070501},
 volume = {115},
 year = {2015}
}

@article{PhysRevA.109.052412,
 author = {Jha, Raghav G. and Ringer, Felix and Siopsis, George and Thompson, Shane},
 doi = {10.1103/PhysRevA.109.052412},
 issue = {5},
 journal = {Phys. Rev. A},
 month = {May},
 numpages = {16},
 pages = {052412},
 publisher = {American Physical Society},
 title = {Continuous-variable quantum computation of the O(3) model in $1+1$ dimensions},
 url = {https://link.aps.org/doi/10.1103/PhysRevA.109.052412},
 volume = {109},
 year = {2024}
}

@article{PhysRevA.92.063825,
 author = {Marshall, Kevin and Pooser, Raphael and Siopsis, George and Weedbrook, Christian},
 doi = {10.1103/PhysRevA.92.063825},
 issue = {6},
 journal = {Phys. Rev. A},
 month = {Dec},
 numpages = {7},
 pages = {063825},
 publisher = {American Physical Society},
 title = {Quantum simulation of quantum field theory using continuous variables},
 url = {https://link.aps.org/doi/10.1103/PhysRevA.92.063825},
 volume = {92},
 year = {2015}
}

@article{PhysRevA.99.062337,
 author = {Rall, Patrick and Liang, Daniel and Cook, Jeremy and Kretschmer, William},
 doi = {10.1103/PhysRevA.99.062337},
 issue = {6},
 journal = {Phys. Rev. A},
 month = {Jun},
 numpages = {10},
 pages = {062337},
 publisher = {American Physical Society},
 title = {Simulation of qubit quantum circuits via Pauli propagation},
 url = {https://link.aps.org/doi/10.1103/PhysRevA.99.062337},
 volume = {99},
 year = {2019}
}

@article{PhysRevLett.128.110503,
 author = {Hastrup, Jacob and Park, Kimin and Brask, Jonatan Bohr and Filip, Radim and Andersen, Ulrik Lund},
 doi = {10.1103/PhysRevLett.128.110503},
 issue = {11},
 journal = {Phys. Rev. Lett.},
 month = {Mar},
 numpages = {5},
 pages = {110503},
 publisher = {American Physical Society},
 title = {Universal Unitary Transfer of Continuous-Variable Quantum States into a Few Qubits},
 url = {https://link.aps.org/doi/10.1103/PhysRevLett.128.110503},
 volume = {128},
 year = {2022}
}

@article{PhysRevLett.130.090602,
 author = {Chabaud, Ulysse and Walschaers, Mattia},
 doi = {10.1103/PhysRevLett.130.090602},
 issue = {9},
 journal = {Phys. Rev. Lett.},
 month = {Mar},
 numpages = {7},
 pages = {090602},
 publisher = {American Physical Society},
 title = {Resources for Bosonic Quantum Computational Advantage},
 url = {https://link.aps.org/doi/10.1103/PhysRevLett.130.090602},
 volume = {130},
 year = {2023}
}

@article{PhysRevResearch.3.033018,
 author = {Chabaud, Ulysse and Ferrini, Giulia and Grosshans, Fr\'ed\'eric and Markham, Damian},
 doi = {10.1103/PhysRevResearch.3.033018},
 issue = {3},
 journal = {Phys. Rev. Res.},
 month = {Jul},
 numpages = {12},
 pages = {033018},
 publisher = {American Physical Society},
 title = {Classical simulation of Gaussian quantum circuits with non-Gaussian input states},
 url = {https://link.aps.org/doi/10.1103/PhysRevResearch.3.033018},
 volume = {3},
 year = {2021}
}

@article{preskill2018,
 author = {Preskill, John},
 date-added = {2019-08-09 16:16:27 +0200},
 date-modified = {2020-03-16 11:32:54 +0100},
 doi = {10.22331/q-2018-08-06-79},
 journal = {Quantum},
 pages = {79},
 title = {Quantum {{Computing}} in the {{NISQ}} Era and Beyond},
 volume = {2},
 year = {2018}
}

@article{rahimi-keshari2016,
 author = {Rahimi-Keshari, Saleh and Ralph, Timothy C. and Caves, Carlton M.},
 bdsk-url-1 = {https://link.aps.org/doi/10.1103/PhysRevX.6.021039},
 bdsk-url-2 = {https://doi.org/10.1103/PhysRevX.6.021039},
 date-added = {2019-12-17 16:38:01 +0100},
 date-modified = {2020-03-16 11:32:54 +0100},
 doi = {10.1103/PhysRevX.6.021039},
 issue = {2},
 journal = {Phys. Rev. X},
 month = {Jun},
 numpages = {13},
 pages = {021039},
 publisher = {American Physical Society},
 title = {Sufficient Conditions for Efficient Classical Simulation of Quantum Optics},
 volume = {6},
 year = {2016}
}

@article{RevModPhys.86.153,
 author = {Georgescu, I. M. and Ashhab, S. and Nori, Franco},
 doi = {10.1103/RevModPhys.86.153},
 issue = {1},
 journal = {Rev. Mod. Phys.},
 month = {Mar},
 numpages = {33},
 pages = {153--185},
 publisher = {American Physical Society},
 title = {Quantum simulation},
 url = {https://link.aps.org/doi/10.1103/RevModPhys.86.153},
 volume = {86},
 year = {2014}
}

@article{seddon2021,
 author = {Seddon, James R. and Regula, Bartosz and Pashayan, Hakop and Ouyang, Yingkai and Campbell, Earl T.},
 doi = {10.1103/PRXQuantum.2.010345},
 issn = {2691-3399},
 journal = {PRX Quantum},
 langid = {english},
 month = {March},
 number = {1},
 pages = {010345},
 shorttitle = {Quantifying {{Quantum Speedups}}},
 title = {Quantifying {{Quantum Speedups}}: {{Improved Classical Simulation From Tighter Magic Monotones}}},
 volume = {2},
 year = {2021}
}

@book{serafini2017,
 address = {{Boca Raton, FL}},
 author = {Serafini, Alessio},
 date-modified = {2020-03-16 11:54:59 +0100},
 doi = {https://doi.org/10.1201/9781315118727},
 isbn = {978-1-4822-4634-6},
 publisher = {{CRC Press, Taylor \& Francis Group}},
 title = {Quantum Continuous Variables : A Primer of Theoretical Methods},
 year = {2017}
}

@article{Shaw2024,
 author = {Shaw, Mackenzie H. and Doherty, Andrew C. and Grimsmo, Arne L.},
 doi = {10.1103/PRXQuantum.5.010331},
 issue = {1},
 journal = {PRX Quantum},
 month = {Feb},
 numpages = {36},
 pages = {010331},
 publisher = {American Physical Society},
 title = {Stabilizer Subsystem Decompositions for Single- and Multimode Gottesman-Kitaev-Preskill Codes},
 url = {https://link.aps.org/doi/10.1103/PRXQuantum.5.010331},
 volume = {5},
 year = {2024}
}

@article{tacchino2020,
 author = {Tacchino, Francesco and Chiesa, Alessandro and Carretta, Stefano and Gerace, Dario},
 doi = {https://doi.org/10.1002/qute.201900052},
 journal = {Advanced Quantum Technologies},
 keywords = {quantum circuits, quantum simulation, superconducting qubits, trapped ions},
 number = {3},
 pages = {1900052},
 title = {Quantum Computers as Universal Quantum Simulators: State-of-the-Art and Perspectives},
 url = {https://advanced.onlinelibrary.wiley.com/doi/abs/10.1002/qute.201900052},
 volume = {3},
 year = {2020}
}

@article{upreti2025,
 author = {Upreti, Varun and Rudolph, Dorian and Chabaud, Ulysse},
 doi = {10.1038/s41534-026-01255-6},
 journal = {npj Quantum Information},
 publisher = {Nature Publishing Group UK London},
 title = {Bounding the computational power of bosonic systems},
 year = {2026}
}

@article{van-den-nest2010,
 author = {Van Den Nest, Maarten},
 date-added = {2020-03-13 09:58:39 +0100},
 date-modified = {2020-03-16 11:32:54 +0100},
 doi = {10.26421/QIC10.3-4-6},
 journal = {Quantum Inf. Comput.},
 keywords = {⛔ No DOI found},
 number = {3},
 pages = {258--271},
 publisher = {{Rinton Press, Incorporated Paramus, NJ}},
 title = {Classical Simulation of Quantum Computation, the {{Gottesman-Knill}} Theorem, and Slightly Beyond},
 volume = {10},
 year = {2010}
}

@article{veitch2012,
 author = {Victor Veitch and Christopher Ferrie and David Gross and Joseph Emerson},
 bdsk-url-1 = {https://doi.org/10.1088%2F1367-2630%2F14%2F11%2F113011},
 bdsk-url-2 = {https://doi.org/10.1088/1367-2630/14/11/113011},
 date-added = {2020-03-16 14:31:52 +0100},
 date-modified = {2020-03-16 14:32:02 +0100},
 doi = {10.1088/1367-2630/14/11/113011},
 journal = {New J. Phys.},
 month = {nov},
 number = {11},
 pages = {113011},
 publisher = {{IOP} Publishing},
 title = {Negative quasi-probability as a resource for quantum computation},
 url = {https://doi.org/10.1088%2F1367-2630%2F14%2F11%2F113011},
 volume = {14},
 year = {2012}
}

@article{veitch2013,
 author = {Victor Veitch and Nathan Wiebe and Christopher Ferrie and Joseph Emerson},
 bdsk-url-1 = {https://doi.org/10.1088%2F1367-2630%2F15%2F1%2F013037},
 bdsk-url-2 = {https://doi.org/10.1088/1367-2630/15/1/013037},
 date-added = {2020-04-07 17:23:28 +0200},
 date-modified = {2020-04-07 17:23:33 +0200},
 doi = {10.1088/1367-2630/15/1/013037},
 journal = {New J. Phys.},
 month = {jan},
 number = {1},
 pages = {013037},
 publisher = {{IOP} Publishing},
 title = {Efficient simulation scheme for a class of quantum optics experiments with non-negative Wigner representation},
 url = {https://doi.org/10.1088%2F1367-2630%2F15%2F1%2F013037},
 volume = {15},
 year = {2013}
}

@article{vidal2003,
 author = {Vidal, Guifr\'e},
 bdsk-url-1 = {https://link.aps.org/doi/10.1103/PhysRevLett.91.147902},
 bdsk-url-2 = {https://doi.org/10.1103/PhysRevLett.91.147902},
 date-added = {2020-03-13 09:58:39 +0100},
 date-modified = {2020-03-16 11:32:54 +0100},
 doi = {10.1103/PhysRevLett.91.147902},
 issue = {14},
 journal = {Phys. Rev. Lett.},
 month = {Oct},
 numpages = {4},
 pages = {147902},
 publisher = {American Physical Society},
 title = {Efficient Classical Simulation of Slightly Entangled Quantum Computations},
 url = {https://link.aps.org/doi/10.1103/PhysRevLett.91.147902},
 volume = {91},
 year = {2003}
}

@inbook{watrous2009,
 address = {New York, NY},
 author = {Watrous, John},
 booktitle = {Encyclopedia of Complexity and Systems Science},
 doi = {10.1007/978-0-387-30440-3_428},
 editor = {Meyers, Robert A.},
 isbn = {978-0-387-30440-3},
 pages = {7174--7201},
 publisher = {Springer New York},
 title = {Quantum Computational Complexity},
 url = {https://doi.org/10.1007/978-0-387-30440-3_428},
 year = {2009}
}

@book{wilde2017,
 author = {Wilde, Mark M.},
 doi = {10.1017/CBO9781139525343},
 edition = {2},
 place = {Cambridge},
 publisher = {Cambridge University Press},
 title = {Quantum Information Theory},
 year = {2017}
}

@misc{winter2017,
 archiveprefix = {arXiv},
 author = {Andreas Winter},
 eprint = {1712.10267},
 primaryclass = {quant-ph},
 title = {Energy-constrained diamond norm with applications to the uniform continuity of continuous variable channel capacities},
 url = {https://arxiv.org/abs/1712.10267},
 year = {2017}
}

@article{yang2023,
 author = {Yang, Zebo and Zolanvari, Maede and Jain, Raj},
 doi = {10.1109/COMST.2023.3254481},
 journal = {IEEE Communications Surveys \& Tutorials},
 keywords = {Quantum computing;Photonics;Quantum mechanics;Hardware;Quantum entanglement;Computer science;Computational modeling;Surveys;Quantum communication;quantum computing;quantum cryptography;quantum Internet;quantum machine;learning;quantum network},
 number = {2},
 pages = {1059-1094},
 title = {A Survey of Important Issues in Quantum Computing and Communications},
 volume = {25},
 year = {2023}
}

@article{zohar2021,
 author = {Zohar, Erez},
 doi = {10.1098/rsta.2021.0069},
 issn = {1364-503X},
 journal = {Philos. Trans. R. Soc. A},
 number = {2216},
 pages = {20210069},
 title = {Quantum Simulation of Lattice Gauge Theories in More than One Space Dimension—Requirements, Challenges and Methods},
 url = {https://doi.org/10.1098/rsta.2021.0069},
 volume = {380},
 year = {2021}
}
\appendix

\begin{widetext}

\section{Energy analysis of CV operations}
\label{sec:appendix-energy}
In this Appendix, we will derive the effect of different CV operations on the energy of the state. First, note that passive operations do not increase the energy of a state~\cite{braunstein2005}. Passive operations include rotations and beam splitters. Active operations are those such as squeezing, displacements, shear, and the cubic phase gate. We will now derive the effect of each of these operations on the energy of a general state.

\subsection{Proof of maximum energy increase of squeezing}
\begin{lemma}
    \label{lemma:energy-squeezing}
    The energy of a state $\hat\rho$ after applying the squeezing operator with parameter $r$ is bound as $e^{-2|r|} E_{\hat \rho}\leq E_{\hat{S}(r)\hat{\rho}\hat{S}^\dagger(r)} \leq e^{2|r|} E_{\hat \rho}$.
\end{lemma}
\begin{proof}
The energy of the squeezed state can be written as
\begin{align}
    E_{\hat{S}(r)\hat{\rho}\hat{S}^\dagger(r)} & = \frac{1}{2}\left(\langle\hat{q}^2\rangle_{\hat{S}(r)\hat{\rho}\hat{S}^\dagger(r)}+\langle\hat{p}^2\rangle_{\hat{S}\hat{\rho}\hat{S}^\dagger(r)}\right) \notag\\
    & = \frac{1}{2}\left(e^{2r}\langle\hat{q}^2\rangle_{\hat{\rho}}+e^{-2r}\langle\hat{p}^2\rangle_{\hat{\rho}}\right) \label{eq:par squeezed energy}
\end{align}
If we first consider the case when $r\geq0$, then we can see that 
\begin{align}
    E_{\hat{S}(r)\hat{\rho}\hat{S}^\dagger(r)} & \leq \frac{1}{2}\left(e^{2r}\langle\hat{q}^2\rangle_{\hat{\rho}}+e^{2r}\langle\hat{p}^2\rangle_{\hat{\rho}}\right) \notag\\
    & = e^{2r}E_{\hat{\rho}},
\end{align}
where we have used $\langle\hat{q}^2\rangle_{\hat{\rho}},\langle\hat{p}^2\rangle_{\hat{\rho}}\geq0$. We can also obtain a lower bound of the energy from Eq.~\eqref{eq:par squeezed energy} as
\begin{align*}
    E_{\hat{S}(r)\hat{\rho}\hat{S}^\dagger(r)} & \geq \frac{1}{2}\left(e^{-2r}\langle\hat{q}^2\rangle_{\hat{\rho}}+e^{-2r}\langle\hat{p}^2\rangle_{\hat{\rho}}\right) \notag \\
    & = e^{-2r}E_{\hat{\rho}}
\end{align*}
Note that the case of $r<0$ is equivalent to performing a rotation before and after applying positive squeezing. Since the energy does not change when rotating the state, we immediately see that
\begin{align}
    e^{-2|r|} E_{\hat \rho}\leq E_{\hat{S}(r)\hat{\rho}\hat{S}^\dagger(r)} \leq e^{2|r|} E_{\hat \rho}.
\end{align}
\end{proof}
\subsection{Proof of maximum energy increase of shear gate}
\begin{lemma}
    \label{lemma:energy-shear}
    The energy of a state $\hat\rho$ after applying the shear operator with parameter $s$ is bound by $E_{\hat P(s)\hat \rho \hat P^\dagger(s)}\leq (1+s)^2 E_{\hat \rho}$.
\end{lemma}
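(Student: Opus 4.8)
The plan is to reduce everything to the Heisenberg action of the shear gate on the quadratures, which is exactly solvable since $\hat P(s)=e^{is\hat q^2/2}$ is Gaussian. First I would compute the conjugations $\hat P^\dagger(s)\hat q\hat P(s)$ and $\hat P^\dagger(s)\hat p\hat P(s)$. Because $\hat q$ commutes with $\hat q^2$, the position is left invariant, $\hat P^\dagger(s)\hat q\hat P(s)=\hat q$, while using $[\hat q^2,\hat p]=2i\hat q$ in a Baker--Campbell--Hausdorff expansion (which truncates after a single commutator, since the next nested commutator $[\hat q^2,\hat q]$ vanishes) gives $\hat P^\dagger(s)\hat p\hat P(s)=\hat p+s\hat q$.

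With these in hand I would write the post-shear energy, using the single-mode identity $E_{\hat\rho}=\tfrac12\langle\hat q^2+\hat p^2\rangle_{\hat\rho}$ from Sec.~\ref{sec:cvqc}, as
\begin{align}
E_{\hat P(s)\hat\rho\hat P^\dagger(s)}&=\Tr\!\left[\hat\rho\,\hat P^\dagger(s)\tfrac12(\hat q^2+\hat p^2)\hat P(s)\right]\nonumber\\
&=\tfrac12\left\langle \hat q^2+(\hat p+s\hat q)^2\right\rangle_{\hat\rho}.
\end{align}
Expanding the square, this equals $E_{\hat\rho}+\tfrac12 s\langle\hat q\hat p+\hat p\hat q\rangle_{\hat\rho}+\tfrac12 s^2\langle\hat q^2\rangle_{\hat\rho}$, so the task reduces to bounding the symmetrized cross term and the second moment $\langle\hat q^2\rangle_{\hat\rho}$ separately.

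For the second moment I would use $\langle\hat q^2\rangle_{\hat\rho}\le\langle\hat q^2+\hat p^2\rangle_{\hat\rho}=2E_{\hat\rho}$. For the cross term, the key observation is that positivity of $(\hat q\pm\hat p)^2$ implies $|\langle\hat q\hat p+\hat p\hat q\rangle_{\hat\rho}|\le\langle\hat q^2+\hat p^2\rangle_{\hat\rho}=2E_{\hat\rho}$. Substituting both bounds yields $E_{\hat P(s)\hat\rho\hat P^\dagger(s)}\le(1+|s|+s^2)E_{\hat\rho}$, and since $1+|s|+s^2\le 1+2s+s^2=(1+s)^2$ for $s\ge0$, this gives the claimed estimate.

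The main obstacle is getting the cross-term bound right: it is the symmetrized operator $\hat q\hat p+\hat p\hat q$ (not $\hat q\hat p$ alone) that appears, so the positivity argument applies cleanly and no commutator correction is silently dropped. A secondary point is the sign of $s$, since the displayed bound $(1+s)^2$ is only tighter than the natural symmetric bound $(1+|s|)^2$ when $s\ge 0$; I would therefore either read the Lemma as stated for $s\ge0$ or, at no extra cost, prove the slightly stronger $(1+|s|)^2E_{\hat\rho}$, which holds for all real $s$. Finally, although written for a single mode, the bound extends immediately to the total multimode energy, because the shear acts on one mode only and $(1+s)^2\ge1$ leaves the untouched modes' contribution dominated.
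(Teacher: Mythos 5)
Your proof is correct, but it takes a genuinely different route from the paper. The paper does not compute Heisenberg-evolved quadratures at all: it invokes the decomposition $\hat P(s)=\hat R(\theta)\hat S(r)\hat R(\theta')$ with $r=\operatorname{arcsinh}(s/2)$ (Ref.~\cite{Kalajdzievski_2021}), uses the fact that rotations preserve energy, applies the squeezing bound of Lemma~\ref{lemma:energy-squeezing} to get the factor $e^{2r}=\bigl(\tfrac{s}{2}+\sqrt{\tfrac{s^2}{4}+1}\bigr)^2$, and then relaxes this to $(1+s)^2$. Your direct computation $\hat P^\dagger(s)\,\hat p\,\hat P(s)=\hat p+s\hat q$, followed by the positivity bound $\lvert\langle \hat q\hat p+\hat p\hat q\rangle_{\hat\rho}\rvert\le\langle\hat q^2+\hat p^2\rangle_{\hat\rho}=2E_{\hat\rho}$ and $\langle\hat q^2\rangle_{\hat\rho}\le 2E_{\hat\rho}$, is entirely self-contained and arguably more elementary — it avoids both the external decomposition and the squeezing lemma — and it cleanly settles the sign question by proving $(1+\lvert s\rvert)^2E_{\hat\rho}$ for all real $s$, a caveat the paper leaves implicit (its own argument also effectively assumes $s\ge 0$, since Lemma~\ref{lemma:energy-squeezing} gives $e^{2\lvert r\rvert}$). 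What the paper's route buys in exchange: the intermediate factor $e^{2\lvert r\rvert}=1+\tfrac{s^2}{2}+\lvert s\rvert\sqrt{\tfrac{s^2}{4}+1}$ is the exact worst-case amplification of the shear (it is the squeezing factor, and rotations are energy-neutral), which is slightly tighter than your $1+\lvert s\rvert+s^2$ before both are relaxed to $(1+s)^2$; and the same decomposition strategy is reused verbatim for the controlled-$Z$ energy bound in Lemma~\ref{lemma:energy-controlled-Z}, which keeps the appendix uniform. Both arguments land on the identical stated bound, and your multimode remark is also sound since the shear acts on a single mode and leaves the other modes' energy untouched.
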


\begin{proof}
    For this proof, we will use the decomposition of the shear gate from Ref.~\cite{Kalajdzievski_2021}
\begin{equation}
    \label{eq:squeezing-to-shear}
    \hat{P}(s)=e^{i\frac{s}{2}\hat{q}^2} = \hat{R}(\theta)\hat{S}(r)\hat{R}(\theta')
\end{equation}
where
\begin{equation}
\label{eq: decomp params}
    r=\text{arcsinh}\left(\frac{s}{2}\right)=\ln\left[\frac{s}{2}+\sqrt{\frac{s^2}{4}+1}\right],\quad\tan(\theta)=e^r,\quad\theta'=\theta-\frac{\pi}{2}.
\end{equation}
Therefore, we have
\begin{align}
    E_{\hat{P}(s)\hat{\rho}\hat{P}(s)^\dagger} & = E_{\hat{R}(\theta)\hat{S}(r)\hat{R}(\theta')\hat{\rho}\hat{R}^\dagger(\theta')\hat{S}^\dagger(r)\hat{R}^\dagger(\theta)} \notag\\
    & = E_{\hat{S}(r)\hat{R}(\theta')\hat{\rho}\hat{R}^\dagger(\theta')\hat{S}^\dagger(r)} \notag\\
    & \leq e^{2r}E_{\hat{R}(\theta')\hat{\rho}\hat{R}^\dagger(\theta')} \notag\\
    & =  e^{2r}E_{\hat{\rho}} \notag\\
    & = \frac{1}{4}\left(s+\sqrt{s^2+4}\right)^2E_{\hat{\rho}} \notag\\
    & \leq \frac{1}{4}(2+2s)^2E_{\hat{\rho}} \notag\\
    & = (1+s)^2E_{\hat{\rho}},
\end{align}
where Lemma~\ref{lemma:energy-squeezing} has been used in the third line, $\left(s+\sqrt{s^2+4}\right)^2\leq(2+2s)^2$ in the fifth line and that rotations does not change the energy. 
\end{proof}
\subsection{Proof of maximum energy increase of controlled-\texorpdfstring{$Z$}{Z} gate}
\begin{lemma}
\label{lemma:energy-controlled-Z}
    The energy of a state $\hat\rho$ after applying the controlled-$Z$ operator with parameter $s$ is bound by $E_{\CZGate*(s)\hat \rho \CZGate*^\dagger(s)}\leq (s+1)^4E_{\hat \rho}$.
\end{lemma}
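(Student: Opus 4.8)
The plan is to reduce the two-mode controlled-$Z$ gate to single-mode shear gates conjugated by an energy-preserving beam splitter, and then to invoke the shear energy bound of Lemma~\ref{lemma:energy-shear} twice. First I would rewrite the generator using the identity $\hat q_k\hat q_l=\tfrac14[(\hat q_k+\hat q_l)^2-(\hat q_k-\hat q_l)^2]$. Since the two quadratic forms commute, this factorizes the gate as
\begin{align}
    \CZGate*(s)=e^{\frac{is}{4}(\hat q_k+\hat q_l)^2}\,e^{-\frac{is}{4}(\hat q_k-\hat q_l)^2}.
\end{align}
Recalling that a $50$:$50$ beam splitter rotates the quadratures so that $\BSGate*^\dagger\hat q_l^2\BSGate*=\tfrac12(\hat q_k+\hat q_l)^2$ and $\BSGate*^\dagger\hat q_k^2\BSGate*=\tfrac12(\hat q_k-\hat q_l)^2$, each exponential becomes a single-mode shear gate conjugated by the beam splitter, yielding
\begin{align}
    \CZGate*(s)=\BSGate*^\dagger\,\hat P_l(s)\,\hat P_k(-s)\,\BSGate*,
\end{align}
where $\hat P_k$ and $\hat P_l$ denote shear gates [Eq.~\eqref{eq: shear gate}] acting on modes $k$ and $l$ respectively.

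Next I would use that the beam splitter is passive and therefore leaves the energy invariant, so the outer conjugation $\BSGate*^\dagger(\cdots)\BSGate*$ can be stripped off. Writing $\hat\sigma=\BSGate*\hat\rho\,\BSGate*^\dagger$ with $E_{\hat\sigma}=E_{\hat\rho}$, the output energy equals that of $\hat P_l(s)\hat P_k(-s)\,\hat\sigma\,\hat P_k^\dagger(-s)\hat P_l^\dagger(s)$. I would then apply Lemma~\ref{lemma:energy-shear} mode by mode: a single-mode shear on mode $k$ leaves $\langle\hat n_l\rangle$ untouched and, applying the single-mode bound to the reduced state on mode $k$, multiplies $\langle\hat n_k\rangle+\tfrac12$ by at most $(1+|s|)^2$; since $(1+|s|)^2\ge1$, the total two-mode energy is itself multiplied by at most $(1+|s|)^2$. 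Applying this once for each shear gives
\begin{align}
    E_{\CZGate*(s)\hat\rho\CZGate*^\dagger(s)}\le(1+|s|)^4\,E_{\hat\rho}=(s+1)^4E_{\hat\rho}.
\end{align}

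The main obstacle I anticipate is essentially careful bookkeeping in two places. First, I must verify the beam-splitter decomposition with the correct conventions---in particular tracking the factors of $\tfrac14$ versus $\tfrac12$ and confirming the quadratic transformation $\BSGate*^\dagger\hat q_{k,l}^2\BSGate*=\tfrac12(\hat q_k\mp\hat q_l)^2$---so that the reconstructed product reproduces $\CZGate*(s)$ exactly. Second, I must justify that Lemma~\ref{lemma:energy-shear}, stated for a single mode, composes correctly across two modes and for a negative parameter: the former holds because a shear on one mode commutes with tracing out the other, so the single-mode bound applies to the relevant marginal, and the latter holds because the squeezing decomposition $r=\operatorname{arcsinh}(-s/2)$ underlying the proof of Lemma~\ref{lemma:energy-shear} depends only on $|s|$ through $e^{2|r|}\le(1+|s|)^2$. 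I note that a direct Heisenberg-picture evaluation of $\CZGate*^\dagger(s)\hat H\CZGate*(s)$, with the cross terms $\langle\hat p_k\hat q_l\rangle$ bounded by Cauchy-Schwarz, would in fact give the tighter constant $(1+|s|)^2$; however, the decomposition route above already establishes the stated bound and more closely mirrors the structure of the preceding energy lemmas.
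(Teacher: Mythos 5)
Your proof is correct, but it takes a genuinely different route from the paper. The paper invokes the decomposition of Ref.~\cite{Kalajdzievski_2021}, writing $\CZGate*(s)=\hat{F}_k\BSGate*(\theta)\hat{F}_k^\dagger\,\hat{S}_k(r)\hat{S}_l(r)\,\hat{F}_k\BSGate*(\theta')\hat{F}_k^\dagger$ with $r=\operatorname{arcsinh}(s/2)$, strips the passive Fourier/beam-splitter layers, applies Lemma~\ref{lemma:energy-squeezing} twice to get $e^{4r}E_{\hat\rho}=\tfrac{1}{16}\left(s+\sqrt{4+s^2}\right)^4E_{\hat\rho}\leq(s+1)^4E_{\hat\rho}$. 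You instead factor the generator via $\hat q_k\hat q_l=\tfrac14\left[(\hat q_k+\hat q_l)^2-(\hat q_k-\hat q_l)^2\right]$ and conjugate by a single $50$:$50$ beam splitter to obtain $\CZGate*(s)=\BSGate*^\dagger\hat P_l(s)\hat P_k(-s)\BSGate*$, which is verified by the Heisenberg action $\hat q_k\to(\hat q_k-\hat q_l)/\sqrt2$, $\hat q_l\to(\hat q_k+\hat q_l)/\sqrt2$ of the paper's $\BSGate*(\pi/4)$, and then apply Lemma~\ref{lemma:energy-shear} twice, landing on the same $(1+|s|)^4$ bound. Both arguments share the skeleton ``passive conjugation plus two active elements,'' but they lean on different lemmas (shear versus squeezing) and different decompositions; yours buys a shorter, more self-contained derivation that needs no external reference and, notably, handles the sign issue more carefully than the paper does --- your decomposition unavoidably contains one shear with parameter $-s$, and your observation that the underlying squeezing bound depends only on $|r|$, giving $e^{2|r|}\leq(1+|s|)^2$, is exactly the justification needed (the paper's own chain $e^{4r}\leq(s+1)^4$ is implicitly written for $s\geq0$). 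Your mode-by-mode composition step is also sound: a local shear commutes with the partial trace, so the single-mode bound applies to the marginal of mode $k$, and since $(1+|s|)^2\geq1$ the factor transfers to the total energy. Finally, your side remark is correct: directly expanding $\CZGate*^\dagger(s)\hat H\CZGate*(s)$ using $\hat p_k\to\hat p_k+s\hat q_l$, $\hat p_l\to\hat p_l+s\hat q_k$ and bounding the cross terms by Cauchy-Schwarz yields the strictly tighter factor $(1+|s|)^2$, so the lemma's constant is not optimal under either proof strategy.
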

\begin{proof}
\label{sec:appendix-proof-energy-control}
For this proof, we will again use an decomposition provided in Ref.~\cite{Kalajdzievski_2021} 
\begin{equation}
    \CZGate*(s)=e^{is\hat{q}_k\hat{q}_l}=\hat{F}_k\BSGate*(\theta)\hat{F}_k^\dagger \hat{S}_k(r)\hat{S}_l(r) \hat{F}_k\BSGate*(\theta')\hat{F}_k^\dagger,
\end{equation}
where the parameters are the same as in Eq.~\eqref{eq: decomp params}.
We see immediately that the state $\hat \rho'$ after the first beam splitter and Fourier transforms has the same energy as $\hat\rho$. The state $\hat\rho''$ after the first squeezing operation has maximum energy $e^{2r}E_{\hat \rho}$, the state $\hat\rho'''$ after the second squeezing operation has maximum energy $e^{4r}E_{\hat\rho}$, which comes from Lemma~\ref{lemma:energy-squeezing}. This state has the same energy as the state after applying the controlled-$Z$ operation. Therefore, we bound the energy as
\begin{align}
    e^{4r}E_{\hat\rho}=&\frac{1}{16}\left(s+\sqrt{4+s^2}\right)^4E_{\hat \rho}\nonumber\\
    \leq & \frac{1}{16}(2s+2)^4E_{\hat \rho}\nonumber\\
    = & (s+1)^4E_{\hat \rho}.
\end{align}
\end{proof}
\section{Maximum possible parameters of operations}
\label{sec:Max-energy}
Here, we identify the maximum possible parameters of squeezing and displacement by analyzing the effect of each operation on the energy of an arbitrary state.
\subsection{Maximum value of squeezing}\label{proof:min squeezing}
With the maximum possible energy of a state being $E^*$, we can consider what happens to the energy of the lowest energy state (i.e., the vacuum state) when squeezed by $r$. We can bound the squeezing parameter as the maximum value of squeezing that can be applied before reaching a level of energy that is too high for the system to contain using Lemma~\ref{lemma:energy-squeezing}.
Solving for the case that the initial state $\hat \rho$ has the lowest possible energy, $\frac 1 2$, and $\hat\rho'$ has the maximum energy, $E_{\hat \rho'}=E^*$, we find
\begin{align}\label{eq:max squeezing}
    &\frac{1}{2}e^{2|r|}\leq E^*
    \nonumber\\
    \implies & e^{|r|}\leq \sqrt{2E^*}.
\end{align}

\subsection{Proof of maximum cubicity of cubic phase gates}
\begin{lemma}
    \label{lemma:energy-cubic-phase}
    The maximum value of the cubic phase gate parameter $\gamma$ is bound in terms of the maximum energy of the system as $\gamma\leq 8 E^{*3/2}$.
\end{lemma}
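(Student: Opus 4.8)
The plan is to mirror the strategy of the squeezing and shear-gate energy lemmas: work out how $\hat C(\gamma)=e^{i\gamma\hat q^3}$ acts on the quadratures in the Heisenberg picture, and then convert the two-sided energy constraint imposed by the set $\mathcal U_E$ (namely $E_{\hat\rho}\le E^*$ before the gate and $E_{\hat C(\gamma)\hat\rho\hat C^\dagger(\gamma)}\le E^*$ after it) into an inequality for $\gamma$. First I would compute the conjugated operators. Since $[\hat q,\hat q^3]=0$ the position is left untouched, $\hat C^\dagger(\gamma)\hat q\hat C(\gamma)=\hat q$, whereas the Hadamard series for the momentum terminates after a single term because $[\hat q^3,\hat p]=3i\hat q^2$ again commutes with $\hat q^3$, giving $\hat C^\dagger(\gamma)\hat p\hat C(\gamma)=\hat p+3\gamma\hat q^2$. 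Writing the output energy for a state $\hat\rho$ (taken pure, $|\psi\rangle$, without loss of generality by convexity of the energy), this yields
\[ 2E_{\hat C(\gamma)\hat\rho\hat C^\dagger(\gamma)}=\langle\hat q^2\rangle+\big\|(\hat p+3\gamma\hat q^2)|\psi\rangle\big\|^2 , \]
so that $\|(\hat p+3\gamma\hat q^2)|\psi\rangle\|\le\sqrt{2E^*}$, using $E_{\hat C(\gamma)\hat\rho\hat C^\dagger(\gamma)}\le E^*$ and $\langle\hat q^2\rangle\ge0$.

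Next I would isolate $\gamma$ by the reverse triangle inequality, $3|\gamma|\,\|\hat q^2|\psi\rangle\|\le\|(\hat p+3\gamma\hat q^2)|\psi\rangle\|+\|\hat p|\psi\rangle\|$. Both terms on the right are at most $\sqrt{2E^*}$: the first by the bound just obtained, the second because $\langle\hat p^2\rangle\le2E_{\hat\rho}\le2E^*$. Hence $3|\gamma|\sqrt{\langle\hat q^4\rangle}\le2\sqrt{2E^*}$, i.e. $|\gamma|\le \tfrac{2\sqrt{2E^*}}{3\sqrt{\langle\hat q^4\rangle}}$. The final ingredient is a state-independent lower bound on $\langle\hat q^4\rangle$: the Cauchy--Schwarz inequality of Eq.~\eqref{eq:Cauchy-Schwarz} gives $\langle\hat q^4\rangle\ge\langle\hat q^2\rangle^2$, while the Heisenberg uncertainty relation together with $\langle\hat p^2\rangle\le2E^*$ forbids the position variance from collapsing, forcing $\langle\hat q^2\rangle\ge1/(4\langle\hat p^2\rangle)\ge1/(8E^*)$ and therefore $\langle\hat q^4\rangle\ge1/(64E^{*2})$.

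Substituting this into the previous display gives $|\gamma|\le\tfrac{2\sqrt2}{3}\cdot 8E^{*3/2}=\tfrac{16\sqrt2}{3}E^{*3/2}\le 8E^{*3/2}$, which is the claimed bound (the numerical constant $16\sqrt2/3\approx7.54$ is comfortably below $8$). For a general mixed state the same argument goes through verbatim after replacing the Hilbert-space norm by the Schatten-$2$ norm of $\sqrt{\hat\rho}\,(\hat p+3\gamma\hat q^2)$, since $\langle\hat q^4\rangle\ge\langle\hat q^2\rangle^2$ and $\langle\hat q^2\rangle\langle\hat p^2\rangle\ge\tfrac14$ both survive for mixed states.

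The conceptually delicate step—and the one I expect to be the real obstacle—is the lower bound on $\langle\hat q^4\rangle$. In contrast to squeezing, whose energy bound is multiplicative, the cubic phase gate feeds on the fourth moment of position, which the energy alone does \emph{not} control: a state may sit at energy $E^*$ with an arbitrarily small $\langle\hat q^2\rangle$, which would naively allow an unbounded $\gamma$. It is precisely the uncertainty relation, preventing $\langle\hat q^2\rangle$ from shrinking to zero while $\langle\hat p^2\rangle$ stays finite, that rescues a universal bound. I would also be careful to use \emph{both} sides of the energy constraint (input and output energy below $E^*$, as enforced throughout $\mathcal U_E$), since dropping either one breaks the chain of inequalities.
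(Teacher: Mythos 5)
Your proof is correct, and while it shares the paper's overall skeleton---the Heisenberg-picture conjugation $\hat C^\dagger(\gamma)\,\hat p\,\hat C(\gamma)=\hat p+3\gamma\hat q^2$, followed by the state-independent lower bound $\langle\hat q^4\rangle\geq\langle\hat q^2\rangle^2\geq 1/(64E^{*2})$ obtained from Jensen/Cauchy--Schwarz, the Heisenberg uncertainty relation, and $\langle\hat p^2\rangle\leq 2E^*$---your middle step is genuinely different. The paper expands the output energy into the quadratic constraint $\tfrac{3}{2}\gamma\langle\hat q^2\hat p+\hat p\hat q^2\rangle+\tfrac{9}{2}\gamma^2\langle\hat q^4\rangle\leq E^*$, bounds the cross term $\langle\hat q^2\hat p+\hat p\hat q^2\rangle$ by Cauchy--Schwarz, and solves the quadratic inequality for $\gamma$ via the quadratic formula; you instead read the same output-energy constraint as $\bigl\|(\hat p+3\gamma\hat q^2)\sqrt{\hat\rho}\bigr\|_2\leq\sqrt{2E^*}$ and isolate $\gamma$ with the reverse triangle inequality, which eliminates the cross term entirely. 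Your route is more elementary and, as it happens, tighter: it ends at the constant $16\sqrt 2/3\approx 7.54$, which genuinely sits below the claimed $8$, whereas the paper's literal chain reaches $\tfrac{8}{9}\bigl(3\sqrt 2+6\bigr)\approx 9.11$ at the penultimate line, so its final step ``$\leq 8E^{*3/2}$'' contains a small numerical slip that your argument incidentally repairs. One caution: the ``without loss of generality pure'' remark is not right as stated, since the constraint pair $E_{\hat\rho}\leq E^*$ and $E_{\hat C\hat\rho\hat C^\dagger}\leq E^*$ is not inherited by the pure components of a decomposition---but your Schatten-$2$ reformulation handles mixed states directly, so nothing is lost; and your insistence on using both the input and output energy bounds matches the paper exactly, which likewise uses the output bound for the quadratic constraint and the input bound for $\langle\hat p^2\rangle\leq 2E^*$ and the uncertainty step.
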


\begin{proof}
\begin{align*}
        E_{\hat{C}(\gamma)\hat{\rho}\hat{C}^\dagger(\gamma)} & = \frac{1}{2}\left(\langle\hat{q}^2\rangle_{\hat{C}(\gamma)\hat{\rho}\hat{C}^\dagger(\gamma)}+\langle\hat{p}^2\rangle_{\hat{C}(\gamma)\hat{\rho}\hat{C}^\dagger(\gamma)}\right) \\
        & = \frac{1}{2}\left(\langle\hat{q}^2\rangle_{\hat{\rho}}+\langle\left(\hat{p}+3\gamma\hat{q}\right)^2\rangle_{\hat{\rho}}\right) \\
        & = \frac{1}{2}\left(\langle\hat{q}^2\rangle_{\hat{\rho}}+\langle\hat{p}^2\rangle_{\hat{\rho}}+3\gamma\langle\hat{q}^2\hat{p}+\hat{p}\hat{q}^2\rangle_{\hat{\rho}}+9\gamma^2\langle\hat{q}^4\rangle_{\hat{\rho}}\right) \\
        & = E_{\hat{\rho}} + \frac 3 2 \gamma \left\langle \hat q^2\hat p+\hat p\hat q^2\right\rangle_{\hat \rho}+\frac{9}{2}\gamma^2\left\langle\hat{q}^4\right\rangle_{\hat \rho}
\end{align*}
Hence, we have that
\begin{align}
\label{eq:bound-energy-cps}
     \frac 3 2 \gamma \left\langle \hat q^2\hat p+\hat p\hat q^2\right\rangle_{\hat \rho}+\frac{9}{2}\gamma^2\left\langle\hat{q}^4\right\rangle_{\hat \rho}\leq E^*-E_{\hat\rho} \leq E^* 
\end{align}
Since $\left\langle \hat q^4\right\rangle_{\hat \rho}\geq 0$, the inequality represents a parabola that opens upwards. 
Taking equality, we find solutions to the equation as
\begin{align}
    \gamma^*=\frac{-\frac 3 2  \left\langle \hat q^2\hat p+\hat p\hat q^2\right\rangle_{\hat \rho}\pm \sqrt{\frac{9}{4}\left\langle \hat q^2\hat p+\hat p\hat q^2\right\rangle_{\hat \rho}^2+4\frac{9}{2}E^* \left\langle\hat{q}^4\right\rangle_{\hat \rho} }}{9\left\langle\hat{q}^4\right\rangle_{\hat \rho}}.
\end{align}
Hence,
\begin{align}
    \gamma\leq& \frac{-\frac 3 2  \left\langle \hat q^2\hat p+\hat p\hat q^2\right\rangle_{\hat \rho}+ \sqrt{\frac{9}{4}\left\langle \hat q^2\hat p+\hat p\hat q^2\right\rangle_{\hat \rho}^2+18E^* \left\langle\hat{q}^4\right\rangle_{\hat \rho} }}{9\left\langle\hat{q}^4\right\rangle_{\hat \rho}}\nonumber\\
    \leq& \frac{3  \sqrt{\langle q^4\rangle_{\hat \rho}\langle p^2\rangle_{\hat \rho}}+ \sqrt{9\langle q^4\rangle_{\hat \rho}\langle p^2\rangle_{\hat \rho}+18E^* \left\langle\hat{q}^4\right\rangle_{\hat \rho} }}{9\left\langle\hat{q}^4\right\rangle_{\hat \rho}}\nonumber\\
    =& \frac{3  \sqrt{\langle p^2\rangle_{\hat \rho}}+ \sqrt{9\langle p^2\rangle_{\hat \rho}+18E^* }}{9\sqrt{\left\langle\hat{q}^4\right\rangle_{\hat \rho}}}.
\end{align}
We find a lower bound on the expectation value of $\hat q^4$ as
\begin{align}
    \langle \hat q^4\rangle_{\hat \rho} \geq \langle \hat q^2\rangle_{\hat \rho}^2,
\end{align}
where we have used Jensen's inequality. Using the Heisenberg uncertainty relation
\begin{align}
    &(\langle \hat q^2\rangle -\langle \hat q\rangle^2)(\langle \hat p^2\rangle -\langle \hat p\rangle^2)\geq \frac 1 4\nonumber\\
\end{align}
we find that 
\begin{equation*}
    \langle \hat q^2\rangle \langle \hat p^2\rangle\geq \frac 1 4\nonumber,
\end{equation*}
which gives the inequality
\begin{equation*}
    \langle \hat q^2\rangle\geq \frac{1}{4 \langle \hat p^2\rangle}.
\end{equation*}
Therefore,
\begin{align}
    \gamma\leq&  \frac{3  \sqrt{\langle p^2\rangle_{\hat \rho}}+ \sqrt{9\langle p^2\rangle_{\hat \rho}+18E^* }}{9\left\langle\hat{q}^2\right\rangle_{\hat \rho}}\nonumber\\
    \leq&  \frac{4}{9}\left(3  \sqrt{\langle p^2\rangle_{\hat \rho}}+ \sqrt{9\langle p^2\rangle_{\hat \rho}+18E^* }\right)\langle \hat p^2\rangle\nonumber\\
    \leq&  \frac{8}{9}\left(3  \sqrt{2E^*}+ 6\sqrt{E^* }\right)E^*\nonumber\\
    \leq&  8E^{*3/2}
\end{align}
where we have used that $\langle \hat p^2\rangle\leq 2E^*$.
\end{proof}

\section{Approximation error of circuit components}
\label{appendix: errors}
In this Appendix, we bound the approximation error of various operations when mapping from CV to DV. We begin with a formal explanation of the error that we bound, and then we provide proofs of the errors for each of the operations. Appendix~\ref{sec:appendix-proof-error-fourier-transform} provides the proof to Lemma~\ref{lemma:error-Fourier-transform}. Appendix~\ref{sec:error-gen-phase} introduces a general phase gate which will then be used to prove the error of operators that are polynomial in $\hat{\mathbf q}$. Appendix~\ref{sec:appendix-displacement-error} provides the proof to Lemma~\ref{lemma:error-displacement}. Appendix~\ref{sec:appendix-shear-gate} provides the proof to Lemma~\ref{lemma:error-shear-gate}. Appendix~\ref{sec:appendix-cubic-phase-gate} provides the proof to Lemma~\ref{lemma:error-cubic-phase}. Appendix~\ref{sec:appendix-error-controlled-z} provides the proof to Lemma~\ref{lemma:error-controlled-z}. Next, we use these gates to construct other CV gates and derive the error of applying each of these new gates in terms of the previous. Appendix~\ref{sec:appendix-rotation} provides the proof to Lemma~\ref{lemma:error-rotation}, Appendix~\ref{sec:appendix-beamsplitter} provides the proof to Lemma~\ref{lemma:error-beamsplitter}, and Appendix~\ref{sec:appendix-squeezing} provides the proof to Lemma~\ref{lemma:error-squeezing}.

In general the SSD of a CV gate $\hat U$ acting on a state $\hat{\rho} \in \mathcal U_E$ can be expressed as $\mathrm{Tr}_S (\hat U \hat\rho \hat U^\dagger)$, while the SSD on the state, followed by performing the corresponding DV gate $\hat U_d$ can be expressed as $\hat U_d \mathrm{Tr}_S (\hat\rho) \hat U_d^\dagger$. Hence the trace distance that arrises when approximating $\mathrm{Tr}_S (\hat U \hat\rho \hat U^\dagger)$ with $\hat U_d \mathrm{Tr}_S (\hat\rho) \hat U_d^\dagger$ can be expressed as:
\begin{align}
    \frac{1}{2}\left\| \hat U_d \mathrm{Tr}_S (\hat\rho) \hat U_d^\dagger - \mathrm{Tr}_S (\hat U \hat\rho \hat U^\dagger) \right\|_1.
\end{align}
Since the trace distance is unitarily invariant, we can rewrite the expression as
\begin{align}
    \frac{1}{2}\left\| \hat U_d \text{Tr}_S (\hat\rho) \hat U_d^\dagger - \text{Tr}_S (\hat U \hat\rho \hat U^\dagger) \right\|_1 &=  \frac{1}{2}\left\| \hat U_d^\dagger \left(\hat U_d \text{Tr}_S (\hat\rho) \hat U_d^\dagger - \text{Tr}_S (\hat U \hat\rho \hat U^\dagger) \right) \hat U_d \right\|_1\nonumber\\
    \label{eq:ssderror}
    &= \frac{1}{2}\left\| \text{Tr}_S (\hat\rho) - \hat U_d^\dagger \text{Tr}_S (\hat U \hat\rho \hat U^\dagger) \hat U_d \right\|_1.
\end{align}
\subsection{Proof of error of Fourier transform}
\label{sec:appendix-proof-error-fourier-transform}
In this Section, we will prove Lemma~\ref{lemma:error-Fourier-transform}.
\begin{proof}
    For the CV Fourier transform $\FourierTransform*$, the corresponding DV Fourier transform $\FourierTransform$ is equivalent, such that the Fourier transform is preserved under the action of the SSD. 
Expanding the second term in Eq.~(\ref{eq:ssderror}), gives
\begin{align}
		  \hat F_d^\dagger \text{Tr}_S (\hat F \hat\rho \hat F^\dagger) \hat F_d  &=  \frac{1}{\ell}\hat F_d^\dagger  \int_{\mathbb T_{\ell}^{2}}  \text{d} \mathbf{t} \hat{\Pi} \hat{D}(-\mathbf{t}) \hat F \hat\rho \hat F^\dagger \hat{D}^{\dagger}(-\mathbf{t}) \hat{\Pi}^\dagger \hat F_d.
\end{align}
Now, since applying the CV Fourier transform to the GKP projector results in
\begin{align}
    \hat F \Pi^{\dagger} &=  e^{i\frac{\pi}{4}(\hat q^2+\hat p^2)} \sum_{a=0}^{d-1} \ketbra{a_{GKP}^{(d)}}{a}\notag\\
    &= \frac{1}{\sqrt{d}} \sum_{a,b=0}^{d-1} \omega^{ab}_d \ketbra{b_{\text{GKP}}^{(d)}}{a},
\end{align}
it follows that $\hat F_d \Pi = \Pi \hat F$. We also have $\hat F^\dagger \hat{D}(\mathbf{t}) \hat F = e^{i(t_q \hat q - t_p \hat p)}$. Using this leaves us with
\begin{align}
		  \frac{1}{\ell}\hat F_d^\dagger  \int_{\mathbb T_{\ell}^{2}} \text{d} \mathbf{t} \hat{\Pi} \hat{D}(\mathbf{t}) \hat F \hat\rho \hat F^\dagger \hat{D}^{\dagger}(\mathbf{t}) \hat{\Pi}^\dagger \hat F_d
          &=  \frac{1}{\ell}  \int_{\mathbb T_{\ell}^{2}} \text{d} \mathbf{t} \hat{\Pi} \hat F^\dagger \hat{D}(\mathbf{t}) \hat F \hat\rho \hat F^\dagger \hat{D}^{\dagger}(\mathbf{t}) \hat F \hat{\Pi}^\dagger \notag\\
          &=\frac{1}{\ell}  \int_{\mathbb T_{\ell}^{2}} \text{d} \mathbf{t} \hat{\Pi} e^{i(t_q \hat q - t_p \hat p)} \hat\rho e^{-i(t_q \hat q - t_p \hat p)} \hat{\Pi}^\dagger.
\end{align}
Since the integration bounds are identical and symmetric for $t_q$ and $t_p$, we can relabel them such that $t_q \to -t_p$ and $t_p \to -t_q$.
\begin{align}
          \frac{1}{\ell}  \int_{\mathbb T_{\ell}^{2}} \text{d} \mathbf{t} \, \hat{\Pi} \hat e^{i(-t_p \hat q + t_q \hat p)} \hat\rho e^{-i(-t_p \hat q + t_q \hat p)} \hat{\Pi}^\dagger &=\frac{1}{\ell}  \int_{\mathbb T_{\ell}^{2}} \text{d} \mathbf{t} \, \hat{\Pi} \hat{D}(\mathbf{t}) \hat\rho \hat{D}^{\dagger}(\mathbf{t}) \hat{\Pi}^\dagger \notag \\
          &= \text{Tr}_S (\hat\rho).
\end{align}
This means that
\begin{align}
    \frac{1}{2}\left\| \text{Tr}_S (\hat\rho) - \hat F_d^\dagger \text{Tr}_S (\hat F \hat\rho \hat F^\dagger) \hat F_d \right\|_1 &= \frac{1}{2}\left\| \text{Tr}_S (\hat\rho) - \text{Tr}_S (\hat\rho)  \right\|_1 = 0 .
\end{align}
\end{proof}

\subsection{Error of generalized phase gate}
\label{sec:error-gen-phase}
To obtain the error for the simulation of displacement, the shear gate, the controlled-$Z$ gate, and the cubic phase gate, we introduce what we call the generalized phase gate. In CV, this gate introduces a phase depending on a polynomial of the position operators. In DV, the phase depends on a polynomial of the computational basis states. Formally, we define the generalized CV phase gate acting on an ensemble of $n$ modes
		\begin{align}
        \label{eq:generalphase}
			\hat P^f=e^{if\left(\hat{\mathbf{q}}\right)}
		\end{align}
		where $f:\mathbb R^n\rightarrow\mathbb R$ is a polynomial function. 
		Whenever $f$ is periodic, fulfilling the condition
		\begin{align}\label{eq:QuasiPeriodicity}
			f\left(\mathbf h + \sqrt{2\pi d}\mathbf k \right)\equiv f\left(\mathbf h\right) \mod 2\pi\qquad \forall \mathbf k\in\mathbb Z^n, \mathbf h\in\mathbb R^n,
		\end{align}
        there is a well-defined corresponding approximate discrete-variable operation. 
		This can be seen in terms of the action of the gate on GKP encoded qudits,
        \begin{align}
   \hat P^f\ket{\mathbf{a}_{\text{GKP}}} &= e^{if\left(\hat{\mathbf{q}}\right)}\ket{\mathbf{a}_{\text{GKP}}}\notag \\
   &= e^{if\left(\hat{\mathbf{q}}\right)} \sum_{\mathbf{m} \in \mathbb{Z}^n} \ket{\hat{q}=\sqrt{\frac{2\pi}{d}} (\mathbf{m}d+\mathbf{a})}\notag \\
   &= \sum_{\mathbf{m} \in \mathbb{Z}^n} e^{i f\left(\sqrt{\frac{2\pi}{d}} (\mathbf{m}d+\mathbf{a})\right)} \ket{\hat{q}=\sqrt{\frac{2\pi}{d}} (\mathbf{m}d+\mathbf{a})}\notag \\
   &= \sum_{\mathbf{m} \in \mathbb{Z}^n} e^{i f\left(\sqrt{\frac{2\pi}{d}}\mathbf{a} + \sqrt{2\pi d} \mathbf{m}\right)} \ket{\hat{q}=\sqrt{\frac{2\pi}{d}} (\mathbf{m}d+\mathbf{a})}\notag \\
   &= \sum_{\mathbf{m} \in \mathbb{Z}^n} e^{i f\left(\sqrt{\frac{2\pi}{d}}\mathbf{a}\right)} \ket{\hat{q}=\sqrt{\frac{2\pi}{d}} (\mathbf{m}d+\mathbf{a})}\notag \\
   &= e^{i f\left(\sqrt{\frac{2\pi}{d}}\mathbf{a}\right)} \ket{\mathbf{a}_{\text{GKP}}},
    \end{align}
     It follows that the GKP projector can be used to map $\hat P^f$ to a corresponding DV operation
		\begin{align}
			\hat \Pi\hat P^f=\hat P_d^f\hat \Pi
		\end{align}
		where $\hat P_d^f$, acting on $n$ $d$-dimensional qudits, is defined as
		\begin{align}
			\hat P_d^f=\sum_{\mathbf a\in \mathbb Z_d^n}e^{if\left(\sqrt{\frac{2\pi}{d}}\mathbf{a}\right)} \ketbra{\mathbf a}{\mathbf a}.
		\end{align}
		Moreover, the restriction of Eq.~\eqref{eq:QuasiPeriodicity} can be lifted by modifying the construction in two ways. First, by defining the DV operator as
		\begin{align}
			\hat P_d^f=\sum_{\mathbf a\in \mathbb Z_d^n} e^{if\left(\sqrt{\frac{2\pi}{d}}\left\{\mathbf{a}\right\}_d\right)} \ketbra{\mathbf a}{\mathbf a},
		\end{align}
        whereby $\left\{a\right\}_d=\left(a+\lfloor \frac d 2 \rfloor\right)-\lfloor \frac d 2 \rfloor$, and $\left\{\mathbf a\right\}_d=\left(\left\{a_1\right\}_d,\dots,\left\{a_n\right\}_d\right)^T$. Second, by including the projector $\hat \Lambda_{d}$ resulting in
		\begin{align}
			\hat \Pi\hat \Lambda_{d}\hat P^f=&\hat \Pi \hat P^f \Lambda_d\nonumber\\
            =& \sum_{\mathbf a\in \mathbb Z_d^n}\ketbra{\mathbf a}{\mathbf a_{\text{GKP}}}\hat\Lambda_d e^{if(\hat{\mathbf q})}\nonumber\\
            =& \sum_{\mathbf a\in \mathbb Z_d^n}\ketbra{\mathbf a}{\hat q=\left\{\mathbf a\right\}_d\ell}e^{if(\hat{\mathbf q})}\nonumber\\
            =&\hat P_d^f\hat \Pi\hat \Lambda_{d}.
		\end{align}
In the following, we will use these three commutation relations:
		\begin{align}
			e^{if\left(\mathbf{\hat q}\right)}\hat D(\mathbf{t}) &= \hat D(\mathbf{t})e^{if\left(\mathbf{\hat q}-\mathbf t_q\right)},\\
			\hat \Lambda_{d}e^{if\left(\mathbf{\hat q}\right)}&=e^{if\left(\mathbf{\hat q}\right)}\hat \Lambda_{d},
		\end{align}
        and
        \begin{align}
			\hat \Pi\hat \Lambda_{d}\hat D(\mathbf{t}) &= \hat \Pi\hat D(\mathbf{t})\hat \Lambda_{d},
        \end{align}
        assuming that the value of each component of $\mathbf t$ is in the range $(\ell/2,\ell/2]$.
        
Expanding the second term in Eq.~(\ref{eq:ssderror}),  while also using $\hat \rho=\hat \Lambda_{d}\hat \rho\hat \Lambda_{d}$, gives us
\begin{align}
			\hat P_d^{f \dagger} \mathrm{Tr}_S (\hat P^f \hat\rho \hat P^{f \dagger}) \hat P_d^f  &=  \frac{1}{\ell^n}\hat P_d^{f \dagger}  \int_{\mathbb T_{\ell}^{2n}} \text{d} \mathbf{t} \hat{\Pi} \hat{D}(\mathbf{t}) \hat P^f \hat\rho \hat P^{f \dagger} \hat{D}^{\dagger}(\mathbf{t}) \hat{\Pi}^\dagger \hat P_d^f \nonumber\\
			&=  \frac{1}{\ell^n} \int_{\mathbb T_{\ell}^{2n}} \text{d} \mathbf{t} \hat P_d^{f \dagger} \hat{\Pi}\hat \Lambda_{d} \hat{D}(\mathbf{t}) \hat P^f \hat\rho \hat P^{f \dagger} \hat{D}^{\dagger}(\mathbf{t}) \hat \Lambda_{d}\hat{\Pi}^\dagger \hat P_d^f\nonumber \\
			&=  \frac{1}{\ell^n} \int_{\mathbb T_{\ell}^{2n}} \text{d} \mathbf{t} \hat{\Pi}\hat \Lambda_{d}\hat P^{f \dagger}\hat{D}(\mathbf{t}) \hat P^f \hat\rho \hat P^{f \dagger} \hat{D}^{\dagger}(\mathbf{t}) \hat P^f \hat \Lambda_{d}\hat{\Pi}^\dagger \nonumber\\
			&=  \frac{1}{\ell^n} \int_{\mathbb T_{\ell}^{2n}} \text{d} \mathbf{t} \hat{\Pi} \hat{D}(\mathbf{t}) e^{if\left(\hat{\mathbf{q}}\right)-if\left(\hat{\mathbf{q}}-\mathbf t_q\right)} \hat\rho e^{-if\left(\hat{\mathbf{q}}\right)+if\left(\hat{\mathbf{q}}-\mathbf t_q\right)}\hat{D}^{\dagger}(\mathbf{t}) \hat{\Pi}^\dagger.
		\end{align}
Now, from this, we see that
\begin{align}
			&\left\| \hat P_d^f \text{Tr}_S (\hat\rho) \hat P_d^{f \dagger} - \text{Tr}_S (\hat P^f \hat\rho \hat P^{f \dagger}) \right\|_1 \nonumber\\
			=&\frac{1}{\ell^n}  \left\|  \int_{\mathbb T_{\ell}^{2n}} \text{d} \mathbf{t} \hat{\Pi}\hat{D}(\mathbf{t}) \left(\hat \rho - e^{if\left(\hat{\mathbf{q}}\right)-if\left(\hat{\mathbf{q}}-\mathbf t_q\right)} \hat\rho e^{-if\left(\hat{\mathbf{q}}\right)+if\left(\hat{\mathbf{q}}-\mathbf t_q\right)}\right) \hat{D}^{\dagger}(\mathbf{t})\hat{\Pi}^\dagger \right\|_1 \nonumber\\
			\leq&\frac{1}{\ell^n} \int_{\mathbb T_{\ell}^{2n}} \text{d} \mathbf{t}  \left\|  \hat{\Pi}\hat{D}(\mathbf{t}) \left(\hat \rho - e^{if\left(\hat{\mathbf{q}}\right)-if\left(\hat{\mathbf{q}}-\mathbf t_q\right)} \hat\rho e^{-if\left(\hat{\mathbf{q}}\right)+if\left(\hat{\mathbf{q}}-\mathbf t_q\right)}\right) \hat{D}^{\dagger}(\mathbf{t})\hat{\Pi}^\dagger \right\|_1\nonumber \\
			\leq&\frac{1}{\ell^n } \int_{\mathbb T_{\ell}^{2n}} \text{d} \mathbf{t}  \left\|\hat\Pi\right\|_\infty\left\|  \hat \rho - e^{if\left(\hat{\mathbf{q}}\right)-if\left(\hat{\mathbf{q}}-\mathbf t_q\right)} \hat\rho e^{-if\left(\hat{\mathbf{q}}\right)+if\left(\hat{\mathbf{q}}-\mathbf t_q\right)}\right\|_1 \left\|\hat\Pi^\dagger\right\|_\infty\nonumber\\
			\leq&\frac{1}{\ell^n} \int_{\mathbb T_{\ell}^{2n}} \text{d} \mathbf{t} \left\|  \hat \rho - e^{if\left(\hat{\mathbf{q}}\right)-if\left(\hat{\mathbf{q}}-\mathbf t_q\right)} \hat\rho e^{-if\left(\hat{\mathbf{q}}\right)+if\left(\hat{\mathbf{q}}-\mathbf t_q\right)}\right\|_1.
\end{align}
In the last line, we have used the fact that the operator norm of both $\hat \Pi$ and $\hat \Pi^\dagger$ is less than or equal to one:
		\begin{align}\label{eq:OperatorBound}
			0\leq \left\|\hat\Pi\right\|_\infty,\left\|\hat\Pi^\dagger\right\|_\infty\leq1,
		\end{align}
where the operator norm for an operator $\hat A$ that maps $\mathcal A\to \mathcal B$ is defined by
		\begin{align}
			\| \hat A \|_\infty = \inf\{c\geq  0 : \|\hat A \ket \psi \| \leq c \|\ket{\psi}\|\quad  \forall\quad \ket\psi \in \mathcal B\}
		\end{align}
with the Euclidean norm $\|\cdot\|$. For a general operator $\hat A$ we have that
		\begin{align}
			\|\hat A \ket\psi \|^2 =\bra\psi \hat A^\dagger\hat A \ket\psi
			=\bra\psi (\hat A^\dagger\hat A \ket\psi)
			\leq\|\ket\psi\| \cdot \|\hat A^\dagger\hat A \ket \psi \|
			=\|\hat A^\dagger\hat A \ket \psi \|
		\end{align}
where we used the Cauchy-Schwarz inequality. This implies that
		\begin{align}
			\|\hat \Pi\ket \psi \|^2&\leq \|\hat \Pi^\dagger\hat \Pi\ket \psi \|\leq \sqrt{\|\hat \Pi^\dagger\hat \Pi\ket \psi \|}
		\end{align}
        and
        \begin{align}
			\|\hat \Pi^\dagger\ket \psi \|^2&\leq \|\hat \Pi\hat \Pi^\dagger\ket \psi \|\leq \sqrt{\|\hat \Pi\hat \Pi^\dagger\ket \psi \|},
        \end{align}
and as a consequence Eq.~\eqref{eq:OperatorBound}.

Now we use the same technique as in Ref.~\cite{winter2017} to bound
		\begin{equation}
			\left\| \hat\rho  -  e^{i\hat{H}} \hat\rho  e^{ -i \hat{H}} \right\|_1 \\
			\leq \left\|\left[ \hat\rho  , \hat{H} \right] \right\|_1.
		\end{equation}
This bound arises because of the following. For
	   \begin{align}
			\left\| \hat{\rho} - \hat{\rho}' \right\|_1,
		\end{align}
where $\hat{\rho}'$ is a slightly evolved state, given by:
		\begin{align}
			\hat{\rho}' = e^{i\hat H} \hat{\rho} e^{-i\hat H}.
		\end{align}
		
We have $e^{i\hat H t} = \hat \rho (t)$ and can express $\hat \rho= \hat \rho (0)$  and $\hat \rho'= \hat \rho (1)$. Due to unitary invariance of the trace norm, we have:
		\begin{align}
			\left\| \hat{\rho} - \hat{\rho}(1) \right\|_1 & \leq \sum_{j=0}^{N-1} \left\| \hat{\rho}\left(\frac{j+1}{N}\right) - \hat{\rho}\left(\frac{j}{N}\right) \right\|_1 \nonumber\\
			&= N \left\| \hat{\rho}\left(\frac{1}{N}\right) - \hat{\rho}\left(0\right) \right\|_1,
		\end{align}
		and since
		\begin{align}
			\hat \rho \left( \frac{1}{N} \right) - \hat \rho = \frac{i}{N} [\hat \rho, \hat H] + \mathcal{O}\left(\frac{1}{N^2}\right),
		\end{align}
		when $N \to \infty$ we get
        \begin{align}
			\left\| \hat{\rho} - \hat{\rho}(1) \right\|_1 \leq \left\| [\hat \rho, \hat H] \right\|_1.
		\end{align}
        By setting $\hat{H} = f\left(\hat{\mathbf{q}}\right)-f\left(\hat{\mathbf{q}}-\mathbf t_q\right)$, we obtain
		\begin{align}
			\left\| \hat\rho - e^{if\left(\hat{\mathbf{q}}\right)-if\left(\hat{\mathbf{q}}-\mathbf t_q\right)} \hat\rho e^{-if\left(\hat{\mathbf{q}}\right)+if\left(\hat{\mathbf{q}}-\mathbf t_q\right)}\right\|_1
			\leq  \left\| [\hat \rho, f\left(\hat{\mathbf{q}}\right)-f\left(\hat{\mathbf{q}}-\mathbf t_q\right)] \right\|_1,
		\end{align}
        which gives us
        \begin{align}
			\left\| \hat P_d^f \text{Tr}_S (\hat\rho) \hat P_d^{f \dagger} - \text{Tr}_S (\hat P^f \hat\rho \hat P^{f \dagger}) \right\|_1
			\leq\frac{1}{\ell} \int_{\mathbb T_{\ell}^{2}} \text{d} \mathbf{t} \left\| [\hat \rho, f\left(\hat{\mathbf{q}}\right)-f\left(\hat{\mathbf{q}}-\mathbf t_q\right)] \right\|_1.
		\end{align}
        We can now use this expression to find bounds for different gates defined by specific polynomial functions $f(\mathbf{\hat q})$.

It will be useful to have a bound on the commutator between a state and a  single position operator. Using the fact that $\hat q_i = \frac{1}{\sqrt{2}} (\hat a_i + \hat a_i^\dagger)$, and using the triangle inequality we have:
		\begin{align}
			\left\| [\hat \rho, \hat q_i] \right\|_1 &= \left\| \frac{1}{\sqrt{2}} \left([\hat \rho, \hat a_i] + [\hat \rho, \hat a_i^\dagger]\right) \right\|_1   
			\leq  \frac{1}{\sqrt{2}} \left( \left\| \hat \rho \hat a_i \right\|_1 + \left\| \hat a_i \hat \rho \right\|_1 + \left\| \hat \rho \hat a_i^\dagger \right\|_1 + \left\| \hat a_i^\dagger \hat \rho \right\|_1 \right).
		\end{align}
		Now using Eq.~(\ref{eq:Cauchy-Schwarz}) we get
		\begin{align}
			\left\| \hat \rho \hat a_i  \right\|_1 =  \left\| \sqrt{\hat \rho} \sqrt{\hat \rho} \hat a_i  \right\|_1 & \leq \sqrt{\text{Tr}\left(\sqrt{\hat \rho} \sqrt{\hat \rho} \right)} \sqrt{\text{Tr}\left(\hat a_i^\dagger \sqrt{\hat \rho}  \sqrt{\hat \rho} \hat a_i\right)} \nonumber\\
			& = \sqrt{\text{Tr}\left(\hat \rho \right)} \sqrt{\text{Tr}\left(\hat a_i^\dagger \hat \rho \hat a_i\right)} \nonumber\\
			& = \sqrt{\text{Tr}\left(\hat a_i \hat a_i^\dagger \hat \rho \right)} \nonumber\\
			&= \sqrt{\langle\hat n_i \rangle_{\hat\rho} +1},
		\end{align}
		and similarly
		\begin{align}
			\left\| \hat a_i \hat \rho   \right\|_1 &\leq \sqrt{\langle\hat n_i \rangle_{\hat\rho}}, \notag\\
			\left\| \hat a_i^\dagger \hat \rho   \right\|_1 & \leq\sqrt{\langle\hat n_i \rangle_{\hat\rho} +1},\notag\\
			\left\|  \hat \rho \hat a_i^\dagger  \right\|_1 & \leq\sqrt{\langle\hat n_i \rangle_{\hat\rho}}.
		\end{align}
		The convenient bound $\sqrt{x}+\sqrt{x+1}\leq 2\sqrt{x+\tfrac 1 2}$ for all $x\geq 0$ gives us
		\begin{align}
			\left\| [\hat \rho, \hat q_i] \right\|_1 \leq 2\sqrt{2} \sqrt{\langle\hat n_i \rangle_{\hat\rho}+\tfrac{1}{2}}\leq 2\sqrt{2E_{\hat \rho}}.
		\end{align}

        \subsection{Proof of error of displacement}
        \label{sec:appendix-displacement-error}
		Using the results of the previous Section, we now look at error bounds for specific gates. We begin by providing a proof of Lemma~\ref{lemma:error-displacement}.
        \begin{proof}The single-mode $p$-displacement with $f(\hat q)=s\hat q$ which turns out to be exact as
		\begin{align}
			\left\| [\hat \rho, s\hat{q}-s(\hat{q}+ t_q)] \right\|_1=0.
		\end{align}
        Since the CV displacement, $\hat D(\mathbf{r})$, defined in Eq.~\ref{eq: displacement}, can be decomposed as $\hat Z(q) \hat F \hat Z(p) \hat F^\dagger$, up to a global phase, it follows that it is also exact.\end{proof}
        \subsection{Proof of error of shear gate}
        \label{sec:appendix-shear-gate}
		Next, we provide a proof of Lemma~\ref{lemma:error-shear-gate}.
        \begin{proof}
        The trace distance between a state with the single-mode shear gate applied before and after the SSD--- where the single-mode shear gate is defined as $f(\hat q)=\frac{s}{2}\hat q^2$---is bounded by
		\begin{align}
			\frac 1 2 \left\| \hat P_d^f \text{Tr}_S (\hat\rho) \hat P_d^{f \dagger} - \text{Tr}_S (\hat P^f \hat\rho \hat P^{f \dagger}) \right\|_1 
            \leq& \frac{1}{2\ell} \int_{\mathbb T_{\ell}^{2}} \text{d} \mathbf{t} \left\|  \left[\hat \rho, \frac s 2 \hat q^2 - \frac s 2 (\hat q - t_q)^2 \right] \right\|_1\nonumber\\
			=&\frac{1}{2\ell} \int_{\mathbb T_{\ell}^{2}} \text{d} \mathbf{t} \left\|  [\hat \rho,s t_q\hat q] \right\|_1\nonumber\\
			\leq&\frac{|s| \sqrt{2E_{\hat \rho}}}{\ell } \int_{\mathbb T_{\ell}^{2}} \text{d} \mathbf{t} |t_q|\nonumber\\
            =&\frac{|s| \sqrt{2E_{\hat \rho}}}{\ell}\ell \left(\frac{\ell}{2}\right)^2\nonumber\\
			=&|s| \frac{\pi}{d} \sqrt{\frac{E_\rho}{2}}.
		\end{align}
        \end{proof}
        \subsection{Proof of error of cubic phase gate}
        \label{sec:appendix-cubic-phase-gate}
        Here we provide a proof of Lemma~\ref{lemma:error-cubic-phase}.
        \begin{proof}
        The trace distance of the single-mode cubic phase gate given by $f(\hat q)=\gamma \hat q^3$ can be bounded by
        \begin{align}
			\frac 1 2\left\| \hat P_d^f \text{Tr}_S (\hat\rho) \hat P_d^{f \dagger} - \text{Tr}_S (\hat P^f \hat\rho \hat P^{f \dagger}) \right\|_1 
			\leq &\frac{1}{\ell} \int_{\mathbb T_{\ell}^{2}} \text{d} \mathbf{t} \left\|  [\hat \rho,\gamma \hat q^3 - \gamma (\hat q-t_q)^3] \right\|_1\nonumber\\
            \leq&\frac{3}{\ell} \int_{\mathbb T_{\ell}^{2}} \text{d} \mathbf{t} \left\|  [\hat \rho,\gamma t_q^2\hat q] \right\|_1+\left\|  [\hat \rho,\gamma t_q\hat q^2] \right\|_1\nonumber\\
            \leq&\frac{3 |\gamma|}{\ell} \int_{\mathbb T_{\ell}^{2}} \text{d} \mathbf{t} \left| t_q^2 \right|\left\|  [\hat \rho, \hat q] \right\|_1 + \left| t_q \right|\left\|  [\hat \rho, \hat q^2] \right\|_1\nonumber\\
            \leq&\frac{3 |\gamma|}{\ell} \int_{\mathbb T_{\ell}^{2}} \text{d} \mathbf{t} \left| t_q^2 \right|\left\|  [\hat \rho, \hat q] \right\|_1 + \left| t_q \right| \left\|  [\hat \rho, \hat q] \right\|_1 \left(d +1 \right)\ell\nonumber\\
			\leq&\frac{3|\gamma| \sqrt{2E_{\hat \rho}}}{\ell} \int_{\mathbb T_{\ell}^{2}} \text{d} \mathbf{t} |t_q^2|+\left(d +1 \right)\ell |t_q|  \nonumber\\
			\leq&\frac{3|\gamma| \sqrt{2E_{\hat \rho}}}{\ell} \int_{\mathbb T_{\ell}^{2}} \text{d} \mathbf{t} |t_q^2|+\frac 3 2 d\ell |t_q|  \nonumber\\
			=&3|\gamma| \sqrt{2E_{\hat \rho}} \int_{\mathbb T_{\ell}} \text{d} t_q |t_q^2|+\frac 3 2 d\ell|t_q| \nonumber\\
			=&3|\gamma| \sqrt{2E_{\hat \rho}}\left(\frac{\ell^3}{12}+3d\frac{\ell^3}{8}\right)\nonumber\\
            \leq&5|\gamma|\sqrt{2E_{\hat \rho}}d\ell^3/4\nonumber\\
            =&5|\gamma|\pi^{3/2}\sqrt{E_{\hat \rho}/d},
		\end{align}
        where we have used $d\geq2$ and the fact that in the CCVQC model $\left\|[\hat \rho, \hat q^2] \right\|_1 = \left\|\hat \Lambda_d[\hat \rho, \hat q^2] \hat \Lambda_d\right\|_1 \leq 2 \left\|[\hat \rho, \hat q] \right\|_1 \left\|\hat \Lambda_d \hat q \right\|_\infty \leq \left\|[\hat \rho, \hat q] \right\|_1 \left( d +1 \right)\ell$.
        \end{proof}
        Assuming a maximum value of $\gamma\leq 8E^{*3/2}$ according to Lemma~\ref{lemma:energy-cubic-phase}, we find
        \begin{align}
			\frac 1 2\left\| \hat P_d^f \text{Tr}_S (\hat\rho) \hat P_d^{f \dagger} - \text{Tr}_S (\hat P^f \hat\rho \hat P^{f \dagger}) \right\|_1 
			\leq& 40 \pi^{3/2} E_{\hat \rho}^{*2}/\sqrt d \nonumber \\
			\leq&223 E^{*2}/\sqrt{d}.
		\end{align}
        \subsection{Proof of error of controlled-\texorpdfstring{$Z$}{Z} gate}
        \label{sec:appendix-error-controlled-z}
        Here we provide a proof of Lemma~\ref{lemma:error-controlled-z}.
        \begin{proof}
        A simple two-mode interaction is provided by the controlled-$Z$ gate with $f(\hat q_k, \hat q_l)=s \hat q_k\hat q_l$ bounded by
        \begin{align}
			\frac 1 2\left\| \hat P_d^f \text{Tr}_S (\hat\rho) \hat P_d^{f \dagger} - \text{Tr}_S (\hat P^f \hat\rho \hat P^{f \dagger}) \right\|_1 
			&\leq\frac{1}{2\ell^2 } \int_{\mathbb T_{\ell}^{4}} \text{d} \mathbf{t} \left\|  [\hat \rho, s t_{q_j}\hat q_i] \right\|_1+\left\|  [\hat \rho,s t_{q_k}\hat q_l] \right\|_1\notag\\
			&=\frac{1}{2\ell^2 } |s| \int_{\mathbb T_{\ell}^{4}} \text{d} \mathbf{t} t_{q_k} \left\|  [\hat \rho, \hat q_k] \right\|_1+t_{q_l}\left\|  [\hat \rho,\hat q_l] \right\|_1\notag\\
			&=\frac{1}{2\ell^2 } |s| 2\sqrt{2E}\int_{\mathbb T_{\ell}^{4}} \text{d} \mathbf{t} \,(t_{q_l} +t_{q_k})\notag\\
			&=\frac{1}{2\ell^2 } |s| 2\sqrt{2E}\ell^3\int_{\mathbb T_{\ell}} \text{d} t' \,2|t'|\notag\\
			&=\frac{1}{2\ell^2 } |s| 2\sqrt{2E}\ell^32\frac{\ell^2}{4}\notag\\
			&= 2|s| \sqrt{E}(\pi/d)^{3/2}
		\end{align}
		\end{proof}
\subsection{Proof of error of rotation}
\label{sec:appendix-rotation} 
Here we provide a proof of Lemma~\ref{lemma:error-rotation}.
\begin{proof}
    We consider a rotation of an arbitrary angle $\theta$. Note that without loss of generality, we can restrict to the case that $\theta\in [-\frac{\pi}{4},\frac{\pi}{4})$ because we can generate rotations that are multiples of $\pi/2$ using the Fourier transform. We can decompose $\hat R(\theta)$, from Eq.~(\ref{eq: rotation gate}), into Fourier transforms and shear gates~\cite{douce2019} as 
    \begin{align}
        \label{eq:decomposition-fourier}
        \hat{R}(\theta)=\hat F\hat P(s_3)\hat F\hat P(s_2)\hat F\hat P(s_1)
    \end{align}
    whereby $s_1=\sec\theta+\tan\theta$, $s_2=\cos\theta$ and $s_3=\cos\theta+(1+\sin\theta)\tan\theta$. We note that given the restrictions on the angles, we have that $\sqrt{2}-1\leq s_1 \leq \sqrt{2}+1$, $\frac{1}{\sqrt{2}} \leq s_2\leq 1$, and $\sqrt{2}-1\leq s_3\leq \sqrt{2}+1$.

    Also note that although the total energy of the state after each of the three shear gates will be unchanged, the decomposition does not guarantee that the energy will not change during the application of each shear gate. Therefore, when calculating the error introduced by the second and third shear gates, we must consider the maximum possible energy of the state after the first and second shear gates, respectively.

    We calculate the trace distance of applying the gate
    \begin{align}
        &\frac 1 2 \|\Tr_S(\hat R(\theta)\hat \rho \hat R^\dagger(\theta))-\hat R_d(\theta)\Tr_S(\hat \rho)\hat R_d^\dagger(\theta)\|_1\nonumber\\
        =&\frac 1 2\|\Tr_S(\hat P(s_3) \hat \rho'' \hat P^\dagger(s_3))-\PhaseGate[s_3]\hat F_d \PhaseGate[s_2]\hat F_d \PhaseGate[s_1]\Tr_S(\hat \rho)\PhaseGate[s_1]^\dagger \hat F_d^\dagger\PhaseGate[s_2]^\dagger \hat F_d^\dagger\PhaseGate[s_3]^\dagger \|_1\nonumber\\
        \leq&\frac 1 2\|\Tr_S(\hat P(s_1)\hat \rho  \hat P^\dagger(s_1))-\PhaseGate[s_1]\Tr_S(\hat \rho)\PhaseGate[s_1]^\dagger \|_1\nonumber\\
        &+\frac 1 2\|\Tr_S(\hat P(s_2)\hat \rho'  \hat P^\dagger(s_2))-\PhaseGate[s_2]\Tr_S(\hat \rho')\PhaseGate[s_2]^\dagger \|_1\nonumber\\
        &+\frac 1 2\|\Tr_S(\hat P(s_3)\hat \rho''  \hat P^\dagger(s_3))-\PhaseGate[s_3]\Tr_S(\hat \rho'')\PhaseGate[s_3]^\dagger \|_1
    \end{align}
where we used Lemma~\ref{lemma:error-Fourier-transform}. Inspecting each of these terms, we can bound the total error of the gate. The first term is simply the error of applying a shear gate to a state $\hat\rho$ with total energy $E_{\hat \rho}$. Using Lemma~\ref{lemma:error-shear-gate}, we have
\begin{align}
    \frac 1 2\|\Tr_S(\hat P(s_1)\hat \rho  \hat P^\dagger(s_1))-\PhaseGate[s_1]\Tr_S(\hat \rho)\PhaseGate[s_1]^\dagger \|_1\leq |s_1|\frac{\pi}{d}\sqrt{E_{\hat \rho}} /\sqrt 2.
\end{align}
Next, we note that a bound on the energy of the state $\hat \rho'=\hat F \PhaseGate[s_1] \hat \rho\PhaseGate[s_1]^\dagger \hat F^\dagger$ can be calculated using Lemma \ref{lemma:energy-shear}. We have $E_{\hat \rho'}\leq (1+s_1)^2E_{\hat \rho}$. Therefore,
\begin{align}
    \frac 1 2\|\Tr_S(\hat P(s_2)\hat \rho'  \hat P^\dagger(s_2))-\PhaseGate[s_2]\Tr_S(\hat \rho')\PhaseGate[s_2]^\dagger \|_1\leq& |s_1|\frac{\pi}{d}  \sqrt{ E_{\hat\rho'}/2}\nonumber\\
    \leq& |s_2|(1+|s_1|)\frac{\pi}{d} \sqrt{ E_{\hat\rho}/2}.
\end{align}
Furthermore, a bound on the energy of the state $\hat \rho''$ can be calculated in terms of $E_{\hat \rho''}\leq (1+s_2)^2(1+s_1)^2E_{\hat \rho}$, hence we have 
\begin{align}
    \frac 1 2 \|\Tr_S(\hat P(s_3)\hat \rho'  \hat P^\dagger(s_3))-\PhaseGate[s_3]\Tr_S(\hat \rho')\PhaseGate[s_3]^\dagger \|_1\leq& |s_3|\frac{\pi}{d} \sqrt{E_{\hat\rho''}/2}\nonumber\\
    \leq& \frac{\pi}{d} \sqrt{ E_{\hat\rho}/2}\left(1+|s_1|\right)\left(1+|s_2| \right)|s_3|.
\end{align}
Also, note that we can maximize over all possible values of $\theta$ and find that
\begin{align}
    |s_1|+(1+|s_1|)|s_2|+(1+|s_1|)(1+|s_2|)|s_3|< 23.
\end{align}
Therefore, combining these three bounds, we find
 \begin{align}
        \frac{1}{2}\|\Tr_S(\hat R(\theta)\hat \rho \hat R^\dagger(\theta))-\hat R_d(\theta)\Tr_S(\hat \rho)\hat R_d^\dagger(\theta)\|_1\leq & 23\frac \pi d \sqrt{E_{\hat\rho}/2}\nonumber\\
        \leq & 52\frac{\sqrt{E_{\hat \rho}}}{d}.
    \end{align}
\end{proof}
\subsection{Proof of error of Mach-Zehnder interferometer}
		\label{sec:appendix-beamsplitter}
        Here we provide the proof of Lemma~\ref{lemma:error-beamsplitter}.
        \begin{proof}
		We consider a beamsplitter of arbitrary angle $\theta$. Note that without loss of generality, we can restrict to the case that $\theta\in [-\frac{\pi}{4},\frac{\pi}{4})$ because we can cover all other angles using swap gates, which contribute no error in the trace distance.
		
		We will use the following decomposition of a beamsplitter into controlled-$Z$ gates, Fourier transforms, and a rotation
        \begin{align}
			\BSGate*(\theta,\phi) = \CZGate*(s_1)\hat{F}_k^\dagger F_l \CZGate*(s_2)\hat F_k^\dagger\hat F_l \CZGate*(s_1)\hat{F}_k^\dagger\hat{F}_l\hat R_k(\phi)
		\end{align}
		whereby $s_1=\sec(\theta)(1-\sin(\theta))$ and $s_2=-\cos(\theta)$. We note that given the restrictions on the angles, we have that $1-\sqrt{2}\leq s_1 \leq \sqrt{2}-1$ and $-\frac{1}{\sqrt{2}}\leq s_2\leq \frac{1}{\sqrt{2}}$.
		With the trace distance for a controlled-$Z$ gate given by Lemma~\ref{lemma:error-controlled-z}
		\begin{align}
			2 |s| \sqrt{E}(\pi/d)^{3/2},
		\end{align}
		and given that, as specified by Lemma~\ref{lemma:energy-controlled-Z}, the maximum energy of the new state is upper bounded by
		\begin{align}
			E_{\hat \rho'}\leq\left(1+s\right)^4E_{\hat \rho},
		\end{align}
		we can follow the same steps as for the proof of rotation. We see that
		\begin{align}
			|s_1|+(1+s_1)^2|s_2|+(1+s_1)^2(1+s_2)^2|s_1|< 5
		\end{align}
		and we find
		\begin{align}
			\frac{1}{2}\left\|\Tr_S(\BSGate*(\theta,0)\hat \rho \BSGate*^\dagger(\theta,0))-\BSGate(\theta,0)\Tr_S(\hat \rho){\BSGate}^\dagger(\theta,0)\right\|_1
			\leq  56\sqrt{\frac{E_{\hat \rho}}{d^3 }}.
		\end{align}
        When including the variable $\phi$, we simply add the error of a single-mode rotation to find
		\begin{align}
			\frac{1}{2}\left\|\Tr_S(\BSGate*(\theta,\phi)\hat \rho \BSGate*^\dagger(\theta,\phi))-\BSGate(\theta,\phi)\Tr_S(\hat \rho){\BSGate}^\dagger(\theta,\phi)\right\|_1
			\leq  &56\sqrt{\frac{E_{\hat \rho}}{d^3 }}+ 52\frac{\sqrt{E_{\hat \rho}}}{d }\nonumber\\
			\leq & 108\frac{\sqrt{E_{\hat \rho}}}{d}.
		\end{align}
        \end{proof}

\subsection{Proof of error of squeezing}
\label{sec:appendix-squeezing}
Here we provide a proof of Lemma~\ref{lemma:error-squeezing}.
\begin{proof}
    We will use the following decomposition of squeezing into Fourier transforms and shear gates~\cite{douce2019}
    \begin{align}
        \hat S(r)=\hat F\hat P(s_3)\hat F\hat P(s_2)\hat F\hat P(s_1)
    \end{align}
    whereby $s_1=s_3=e^r$ and $s_2=e^{-r}$. Note that this is the same decomposition as the one given in Eq.~(\ref{eq:decomposition-fourier}) for rotation, except with different parameters. We note that given $r \geq 0$, the trace distance for a shear gate given by Lemma~\ref{lemma:error-shear-gate}
		\begin{align}
			|s| \frac{\pi}{d} \sqrt{\frac{E_{\hat \rho}}{2}},
		\end{align}
		and given that, as specified by Lemma~\ref{lemma:energy-shear}, the maximum energy of the new state is upper bounded by
		\begin{align}
			E_{\hat \rho'}\leq\left(1+s\right)^2E_{\hat \rho},
		\end{align}
		we can follow the same steps as for the proof of rotation. We see that
		\begin{align}
			&|s_1|+(1+s_1)|s_2|+(1+s_1)(1+s_2)|s_3| \nonumber \\
            =&|e^r|+(1+e^r)|e^{-r}|+(1+e^r)(1+e^{-r})|e^r| \nonumber \\
            =& 2+e^{-r}+3e^r+e^{2r} \nonumber \\
            \leq & 7 e^{2r},
		\end{align}
		and therefore
		\begin{align}
			\frac 1 2 \left\|\Tr_S(\hat S(r)\hat \rho \hat S^\dagger(r))-\hat S_d(r)\Tr_S(\hat \rho)\hat S_d^\dagger(r)\right\|_1 \leq  7e^{2r}\frac{\pi}{d} \sqrt{\frac{E_{\hat \rho}}{2}}.
		\end{align}
\end{proof}
Also from Eq.~\eqref{eq:max squeezing}, we can note that it is possible to maximize over all possible values of $\frac{1}{\sqrt{2E^*}}\leq e^{r}\leq \sqrt{2E^*}$ and find that
\begin{align}
    |e^r|+(1+e^r)|e^{-r}|+(1+e^r)(1+e^{-r})|e^r|=& 2+e^{-r}+3e^r+e^{2r}\nonumber\\
    \leq & 2+4\sqrt{2E^*}+2E^*\nonumber\\
    \leq & 14E^*,
\end{align}
where in the last line we use that $E^*\geq 1/2$.
Therefore, combining these bounds, we find
 \begin{align}
        \frac 1 2 \left\|\Tr_S(\hat S(s)\hat \rho \hat S^\dagger(s))-\hat S_d(s)\Tr_S(\hat \rho)\hat S_d^\dagger(s) \right\|_1 \leq & 14 E^* \frac \pi d \sqrt{\frac {E_{\hat \rho}}{2}}\nonumber\\
        \leq & 14 \frac \pi d \frac {E^{*3/2}}{\sqrt{2}} \nonumber \\
        \leq & 32 \frac{E^{*3/2}}{d}.
\end{align}
    
\section{Measurements are equivalent to modular measurements when restricting to a cut-off in position}
\label{appendix:restriction-mod}
In this Appendix, we show more rigorously that $\hat\Lambda_d\hat K_{\bar{\mathbf x}}\hat\Lambda_d=\hat{\tilde K}_{\bar{\mathbf x}}$, which is used in the proof of Lemma~\ref{lemma:MCVQC-CCVQC} in the main text. Note that it is sufficient to prove this fact for a single mode, as the multimode case is a trivial generalization of this case. We have
\begin{align}
    \hat\Lambda_d=\int_{-(\lfloor d/2\rfloor+\frac 1 2)\ell}^{(\lceil d/2\rceil-\frac 1 2)\ell} \, \text{d} x \, \ketbra{\hat{q}=x}{\hat{q}=x}
\end{align}
and
\begin{align}
\hat K_{\bar x}=&\int_{-\ell/2}^{\ell/2} \dd s \sum_{m \in \mathbb Z}\ketbra{\hat{q}=\bar x+\ell d  m+s}{\hat{q}=\bar x+\ell d  m+s},
\end{align}
where $\bar x=\ell j$ and $j\in \mathbb Z_d$. 
Hence, we have
\begin{align}
    \hat\Lambda_d\hat K_{\bar x}=&\int_{-(\lfloor d/2\rfloor+\frac 1 2)\ell}^{(\lceil d/2\rceil-\frac 1 2)\ell} \, \text{d} x \, \int_{-\ell/2}^{\ell/2} \dd s \sum_{m \in \mathbb Z}\ketbra{\hat{q}=x}{\hat{q}=x}\ketbra{\hat{q}=\ell (dm+j)+s}{\hat{q}=\ell (dm+j)+s}.
\end{align}
Here we note that the bra-ket term is zero unless $m=0$. We are therefore left with
\begin{align}
   \hat\Lambda_d\hat K_{\bar x}=&\int_{-(\lfloor d/2\rfloor+\frac 1 2)\ell}^{(\lceil d/2\rceil-\frac 1 2)\ell} \, \text{d} x \, \int_{-\ell/2}^{\ell/2} \dd s \ketbra{\hat{q}=x}{\hat{q}=x}\ketbra{\hat{q}=\ell j+s}{\hat{q}=\ell j+s}\nonumber\\
    =& \int_{-\ell/2}^{\ell/2} \dd s \ketbra{\hat{q}=\ell j+s}{\hat{q}=\ell j+s}\nonumber\\
    =& \hat{\tilde K}_{\bar x}.
\end{align}
\end{widetext}
\end{document}